%-----------------------------------------------------------------------
% ESA 2020
%-----------------------------------------------------------------------
\documentclass[a4paper,USenglish,cleveref,autoref]{lipics-v2019}
%-----------------------------------------------------------------------
\usepackage{microtype}
\usepackage{wrapfig}
\usepackage{graphicx}
\usepackage{xspace}
\usepackage{multicol}
\usepackage{cite}
\usepackage[table]{xcolor}
\usepackage[ruled,vlined,linesnumbered]{algorithm2e}
\graphicspath{{img/}}
\captionsetup[subfigure]{justification=centering}
%-----------------------------------------------------------------------
% General special symbols
%-----------------------------------------------------------------------

\newcommand{\RE}{\mathbb{R}}

\newcommand{\bigOh}{\mathcal{O}}

\newcommand{\OO}[1]{O\kern-2pt\left(#1\right)}  % big O for big formulas

%-----------------------------------------------------------------------
% Special symbols for NNC
%-----------------------------------------------------------------------
\newcommand{\setP}{\ensuremath{P}\xspace}
\newcommand{\setR}{\ensuremath{R}\xspace}

\newcommand{\ie}{\emph{i.e.},\xspace}
\newcommand{\eg}{\emph{e.g.},\xspace}
\newcommand{\wrt}{\emph{w.r.t.}\xspace}

\newcommand{\numNE}{\kappa}

\newcommand{\spread}{\Delta}
\newcommand{\eps}{\varepsilon}

\newcommand{\metricSet}{\mathcal{X}}
\newcommand{\metricFunc}{\textup{\textsf{d}}}
\newcommand{\metricSpace}{$(\metricSet,\metricFunc)$\xspace}
\newcommand{\ddim}{\textup{ddim}(\metricSet)}

\newcommand{\dist}[2]{\metricFunc(#1,#2)}
\newcommand{\nn}[1]{\textup{nn}(#1)}
\newcommand{\nenemy}[1]{\textup{ne}(#1)}
\newcommand{\dnn}[1]{\metricFunc_\textup{nn}(#1)}
\newcommand{\dne}[1]{\metricFunc_\textup{ne}(#1)}
\newcommand{\chromdens}[1]{\delta(#1)}
%-----------------------------------------------------------------------
% Acronyms
%-----------------------------------------------------------------------
\newcommand{\NN}{\textup{NN}\xspace}
\newcommand{\NE}{\textup{NE}\xspace}
\newcommand{\ANN}{\textup{ANN}\xspace}
\newcommand{\SNN}{\textup{SNN}\xspace}
\newcommand{\CNN}{\textup{CNN}\xspace}
\newcommand{\FCNN}{\textup{FCNN}\xspace}
\newcommand{\alphaFCNN}{\mbox{\textup{$\alpha$-FCNN}}\xspace}
\newcommand{\SFCNN}{\textup{SFCNN}\xspace}
\newcommand{\alphaSFCNN}{\mbox{\textup{$\alpha$-SFCNN}}\xspace}

\newcommand{\NET}{\textup{NET}\xspace}
\newcommand{\alphaNET}{\mbox{\textup{$\alpha$-NET}}\xspace}
\newcommand{\MSS}{\textup{MSS}\xspace}
\newcommand{\VSS}{\textup{VSS}\xspace}
\newcommand{\RSS}{\textup{RSS}\xspace}
\newcommand{\alphaRSS}{\mbox{\textup{$\alpha$-RSS}}\xspace}
\newcommand{\paramRSS}[1]{\mbox{\textup{$#1$-RSS}}\xspace}

\newcommand{\alphaHSS}{\mbox{\textup{$\alpha$-HSS}}\xspace}

%-----------------------------------------------------------------------
% Others
%-----------------------------------------------------------------------
\definecolor{yellowcd}{RGB}{252, 229, 30}
\definecolor{bluecd}{RGB}{51, 0, 68}
%-----------------------------------------------------------------------
% Comments to authors
%-----------------------------------------------------------------------
\newcommand{\careful}[1]{\textcolor{red}{#1}}
\newcommand{\alejandro}[1]{{\textcolor{red}{[\textbf{Alejandro}: {#1}]}}}

%-----------------------------------------------------------------------
% Tittle & Authors
%-----------------------------------------------------------------------
\title{Coresets for the Nearest-Neighbor Rule}

\author{Alejandro Flores-Velazco}{
    Department of Computer Science\\
    University of Maryland, College Park, MD, USA}
    {afloresv@cs.umd.edu}
    {https://orcid.org/0000-0003-0868-9802}{}
\author{David M. Mount}{
    Department of Computer Science and Institute for Advanced Computer Studies\\
    University of Maryland, College Park, MD, USA}
    {mount@umd.edu}
    {http://orcid.org/0000-0002-3290-8932}{}
\authorrunning{A.\ Flores\,-Velazco and D.\ Mount}
\Copyright{A.\ Flores\,-Velazco and D.\ Mount}

\ccsdesc[500]{Theory of computation~Computational geometry}
\keywords{
    coresets,
    nearest-neighbor rule,
    classification,
    nearest-neighbor condensation,
    training-set reduction,
    approximate nearest-neighbor,
    approximation algorithms}
\category{}
\relatedversion{A preliminary version of this paper appeared in ESA~2020~\cite{esa20afloresv}.}
\supplement{Source code is available at \url{https://github.com/afloresv/nnc}}
\funding{Research partially supported by NSF grant CCF-1618866.}
\acknowledgements{Thanks to Prof. Emely Arr\'aiz for suggesting the problem of condensation while the first author was a student at Universidad Sim\'on Bol\'ivar, Venezuela. Thanks to Ahmed~Abdelkader for the helpful discussions and valuable suggestions.}
\nolinenumbers
\hideLIPIcs
%-----------------------------------------------------------------------
% Editors-only macros
%-----------------------------------------------------------------------
\EventEditors{}
\EventNoEds{0}
\EventLongTitle{}
\EventShortTitle{}
\EventAcronym{}
\EventYear{}
\EventDate{}
\EventLocation{}
\EventLogo{}
\SeriesVolume{}
\ArticleNo{}
%-----------------------------------------------------------------------
% Document
%-----------------------------------------------------------------------
\begin{document}

\maketitle

\begin{abstract}
Given a training set \setP of \emph{labeled} points, the \emph{nearest-neighbor rule} predicts the class of an \emph{unlabeled} query point as the label of its closest point in the set. To improve the time and space complexity of classification, a natural question is how to reduce the training set without significantly affecting the accuracy of the nearest-neighbor rule. \emph{Nearest-neighbor condensation} deals with finding a subset $\setR \subseteq \setP$ such that for every point $p \in \setP$, $p$'s nearest-neighbor in \setR has the same label as $p$. This relates to the concept of \emph{coresets}, which can be broadly defined as subsets of the set, such that an exact result on the coreset corresponds to an approximate result on the original set. However, the guarantees of a coreset hold for any query point, and not only for the points of the training set.

This paper introduces the concept of coresets for nearest-neighbor classification. We extend existing criteria used for condensation, and prove sufficient conditions to correctly classify any query point when using these subsets. Additionally, we prove that finding such subsets of minimum cardinality is NP-hard, and propose quadratic-time approximation algorithms with provable upper-bounds on the size of their selected subsets. Moreover, we show how to improve one of these algorithms to have subquadratic runtime, being the first of this kind for condensation.
\end{abstract}

\section{Introduction}

In non-parametric classification, we are given a \emph{training set} \setP consisting of $n$ points in a metric space \metricSpace, with domain $\metricSet$ and distance function $\metricFunc: \metricSet^2 \rightarrow \RE^{+}$.
Additionally, \setP is partitioned into a finite set of \emph{classes} by associating each point $p \in \setP$ with a \emph{label} $l(p)$, indicating the class to which it belongs. Given an \emph{unlabeled} query point $q \in \metricSet$, the goal of a \emph{classifier} is to predict $q$'s label using the training set \setP.

The \emph{nearest-neighbor rule} is among the best-known classification techniques~\cite{fix_51_discriminatory}. It assigns a query point the label of its closest point in $\setP$, according to the metric~$\metricFunc$.
The nearest-neighbor rule exhibits good classification accuracy both experimentally and theoretically \cite{stone1977consistent,Cover:2006:NNP:2263261.2267456,devroye1981inequality}, but
%In fact, its probability of error is bounded by twice the Bayes probability of error, which is the best achievable by any classifier.
it is often criticized due to its high space and time complexities. Clearly, the training set \setP must be stored to answer nearest-neighbor queries, and the time required for such queries depends to a large degree on the size and dimensionality of the data.
These drawbacks inspire the question of whether it is possible replace \setP with a significantly smaller~subset, without significantly reducing the classification accuracy under the nearest-neighbor rule. This problem~is called %\footnote{The same problem is also referred to as the \emph{Instance Selection} or \emph{Prototype Selection} problems.}
 \emph{nearest-neighbor~condensation}~\cite{Hart:2006:CNN:2263267.2267647,ritter1975algorithm,gottlieb2014near,DBLP:conf/jcdcg/Toussaint02}.

There are obvious parallels between condensation and the concept of \emph{coresets} in geometric approximation~\cite{agarwal2005geometric,phillips2016coresets,feldman2020core,har2004coresets}. Intuitively, a \emph{coreset} is small subset of the original data,~that well approximates some statistical properties of the original set.
Coresets have also been applied to many problems in machine learning, such as clustering and neural network compression~\cite{baykal2018data,braverman2016new,feldman2011unified,liebenwein2019provable}. This includes recent results on coresets for the SVM classifier~\cite{tukan2020coresets}.

This paper presents the first approach to compute coresets for the nearest-neighbor rule, leveraging its resemblance to the problem of nearest-neighbor condensation. We also present one of the first results on practical condensation algorithms with theoretical guarantees.

%
% COMMENT: Contributions from the perspective of NN Condensation
%
%While many practical and widely-used heuristics for nearest-neighbor condensation have been proposed in the machine-learning literature, only a handful of theoretical results are known. In this paper, we present the first provable results on the performance of practical algorithms for this problem, for both the standard notion of condensation and in the context of approximate nearest-neighbor searching. This helps to reconcile the observed performance of state-of-the-art condensation algorithms, with theoretical guarantees on their selection~size. Moreover, we prove that these algorithms select subsets whose cardinality approximates~the one of the optimum solution. More importantly, we present the first approximation algorithm with subquadratic worst-case time complexity for condensation.

\subparagraph*{Preliminaries.}
Given any point $q \in \metricSet$ in the metric space, its nearest-neighbor, denoted $\nn{q}$, is the closest point of \setP according the the distance function $\metricFunc$. The distance from $q$ to its nearest-neighbor is denoted by $\dnn{q,\setP}$, or simply $\dnn{q}$ when \setP is clear. Given a point $p \in \setP$ from the training set, its nearest-neighbor in \setP is point $p$ itself. Additionally, any point of $\setP$ whose label differs from $p$'s is called an \emph{enemy} of $p$. The closest such point is called $p$'s \emph{nearest-enemy}, and the distance to this point is called $p$'s \emph{nearest-enemy distance}. These are denoted by $\nenemy{p}$ and $\dne{p,\setP}$ (or simply $\dne{p}$), respectively. 

Clearly, the size of a coreset for nearest-neighbor classification depends on the spatial characteristics of the classes in the training set. For example, it is much easier to find a small coreset for two spatially well separated clusters than for two classes that have a high degree of overlap. To model the intrinsic complexity of nearest-neighbor classification, we define $\numNE$ to be the number of nearest-enemy points of \setP, \ie the cardinality of set $\{\nenemy{p} \mid p \in \setP\}$.

%
% COMMENT: Border points description
%
%A point $p \in \setP$ is called a \emph{border point} if there is an empty metric ball having both $p$ and an enemy of $p$ on its boundary. The center of this metric ball has two nearest neighbors of different classes, and hence $p$ is critical for correctly classifying points near this center. Equivalently, $p$ is incident to an edge of the Delaunay triangulation of $\setP$ (with respect to the given metric) whose opposite endpoint is an enemy of $p$. A point of $\setP$ that is not a border point is called an \emph{internal point}.

Through a suitable uniform scaling, we may assume that the \emph{diameter} of \setP (that is, the maximum distance between any two points in the training set) is 1. The \emph{spread} of \setP, denoted as $\spread$, is the ratio between the largest and smallest distances in \setP. Define the \emph{margin} of \setP, denoted $\gamma$, to be the smallest nearest-enemy distance in \setP. Clearly, $1/\gamma \leq \spread$.

A metric space \metricSpace is said to be \emph{doubling}~\cite{heinonen2012lectures} if there exist some bounded value $\lambda$ such~that any metric ball of radius $r$ can be covered with at most $\lambda$ metric balls of radius $r/2$. Its \emph{doubling dimension} is the base-2 logarithm of $\lambda$, denoted as $\ddim = \log{\lambda}$. Throughout, we assume that $\ddim$ is a constant, which means that multiplicative factors depending on $\ddim$ may be hidden in our asymptotic notation. Many natural metric spaces of interest are doubling, including $d$-dimensional Euclidean space whose doubling dimension is $\Theta(d)$. It is well know that for any subset $R \subseteq \metricSet$ with some spread $\spread_R$, the size of $R$ is bounded by $|R| \leq \lceil\spread_R\rceil^{\ddim+1}$.

\subparagraph*{Related Work.}
A subset $\setR \subseteq \setP$ is said to be \emph{consistent}~\cite{Hart:2006:CNN:2263267.2267647} if and only if for every $p \in \setP$~its nearest-neighbor in \setR is of the same class as $p$. Intuitively, \setR is consistent if and only if all points of \setP are correctly classified using the nearest-neighbor rule over \setR. Formally, the problem of \emph{nearest-neighbor condensation} consists of finding a consistent subset of \setP.

Another criterion used for condensation is known as \emph{selectiveness}~\cite{ritter1975algorithm}. A subset $\setR \subseteq \setP$~is said to be \emph{selective} if and only if for all $p \in \setP$ its nearest-neighbor in \setR is closer to~$p$~than its nearest-enemy in \setP. Clearly, any selective subset is also consistent.
%Clearly selectiveness implies consistency, as the nearest-enemy distance in \setR of any point of \setP is at least its nearest-enemy distance in \setP, and neither consistency or selectiveness imply that every query point of $\metricSet$ is correctly classified, but only those in \setP.
Observe that these condensation criteria ensure that every point in the training set will be correctly classified after condensation, but they do not imply the same for arbitrary points in the metric space.

\begin{figure*}[t]
    \centering
    \begin{subfigure}[b]{.25\linewidth}
        \centering\includegraphics[width=.9\textwidth]{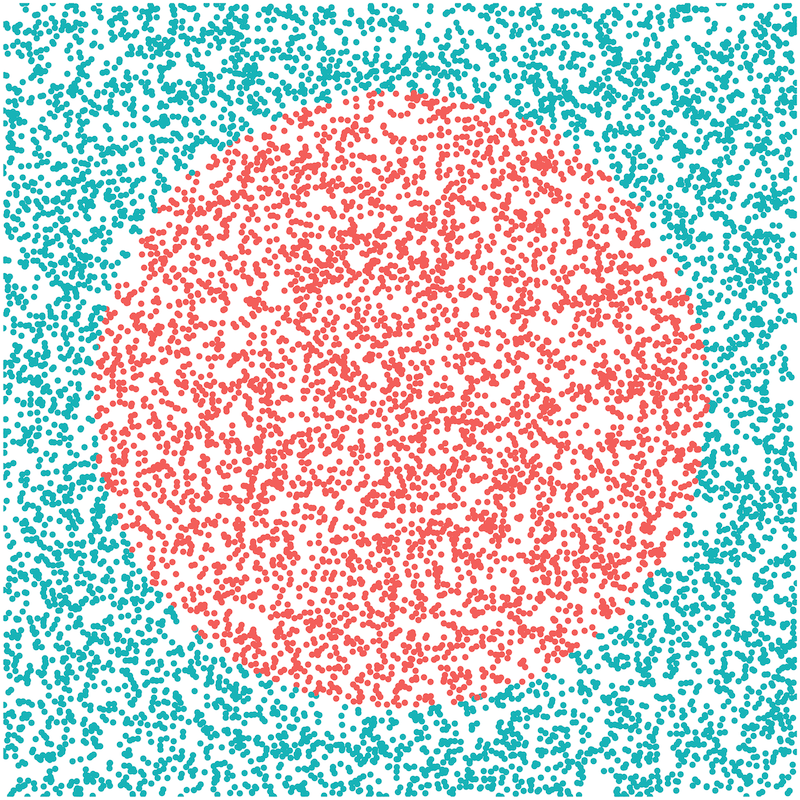}
        \caption{Training set ($10^4$\,pts)}\label{fig:algexample:dataset}
    \end{subfigure}%
    \newcommand{\printalgexample}[3]{%
        \begin{subfigure}[b]{.25\linewidth}
            \centering\includegraphics[width=.9\textwidth]{sel/#1.pdf}
            \caption{#3}\label{fig:algexample:#2}
        \end{subfigure}%
    }%
    \printalgexample{FCNN}{fcnn}{\FCNN (222 pts)}%
    \printalgexample{VSS}{vss}{\VSS (233 pts)}%
    \printalgexample{RSS}{rss}{\RSS (233 pts)}%
    
    \bigskip
    \printalgexample{01-RSS}{0.1-rss}{\paramRSS{0.1} (300 pts)}%
    \printalgexample{05-RSS}{0.5-rss}{\paramRSS{0.5} (540 pts)}%
    \printalgexample{1-RSS}{1-rss}{\paramRSS{1} (846 pts)}%
    \printalgexample{14142-RSS}{1.4142-rss}{\paramRSS{\sqrt{2}} (1066 pts)}%
    
    \caption{An illustrative example of the subsets selected by different condensation algorithms from an initial training set \setP in $\RE^2$ of $10^4$ points. \FCNN, \VSS, and \RSS, are known algorithms for this problem, while \alphaRSS is proposed in this paper, along with new condensation criteria. The subsets selected by \alphaRSS depend on the parameter $\alpha \geq 0$, here assigned to the values $\alpha = \{0.1,0.5,1,\sqrt{2}\}$.}\label{fig:algexample} %Moreover, these are $\eps$-coresets for the nearest-neighbor rule.
    \vspace*{-10pt}
\end{figure*}

%
% COMMENT: Voronoi Condensation
%
%The strongest criteria for condensation is called \emph{Voronoi condensation}~\cite{gtoussaint84voronoi}, and consists of selecting the set of all \emph{border points} of \setP, where a border point is defined as any point of \setP who is adjacent to an enemy point in the Delaunay triangulation of \setP. These points completely characterize the portion of the Voronoi diagram that separates the Voronoi cells of different classes, and so this guarantees the correct classification of any query point in $\metricSet$. In contrast, a consistent subset only guarantees correct classification of the points in \setP. For the case when the training set \setP lies in the Euclidean plane, an output-sensitive algorithm was proposed~\cite{Bremner2003} for finding the set of all border points of \setP in $\bigOh(n \log{\numBorder})$ worst-case time, where $\numBorder$ is the number of border points of \setP. Unfortunately, it is not known how to generalize this algorithm to higher dimensions, and a straightforward approach suffers from the super-linear worst-case size of the Delaunay triangulation.

It is known that the problems of computing consistent and selective subsets of minimum cardinality are both NP-hard~\cite{Wilfong:1991:NNP:109648.109673,Zukhba:2010:NPP:1921730.1921735,khodamoradi2018consistent}.
An approximation algorithm called \NET~\cite{gottlieb2014near} was proposed for the problem of finding minimum cardinality consistent subsets, along with almost matching hardness lower-bounds. The algorithm simply computes a $\gamma$-net of \setP, where $\gamma$ is the minimum nearest-enemy distance in \setP, which clearly results in a consistent subset of \setP (also selective). In practice, $\gamma$ tends to be small, which results in subsets of much higher cardinality than needed. To overcome this issue, the authors proposed a post-processing pruning technique to further reduce the selected subset. Even with the extra pruning, \NET is often outperformed on typical data sets by more practical heuristics with respect to runtime and selection size. More recently, a subexponential-time algorithm was proposed~\cite{biniaz2019minimum} for finding minimum cardinality consistent subsets of point sets $\setP \subset \RE^2$ in the plane, along with other case-specific algorithms for special instances of the problem in $\RE^2$.
On the other hand, less is known about computing minimum cardinality selective subsets: there is only a worst-case exponential time algorithm called \SNN~\cite{ritter1975algorithm} for computing such optimal subsets.

\begin{comment}
\begin{figure*}[h]
    \centering
    \begin{subfigure}[b]{.25\linewidth}
        \centering\includegraphics[width=.9\textwidth]{}
        \caption{\FCNN (222 pts)}\label{fig:algexample:fcnn}
    \end{subfigure}%
    \begin{subfigure}[b]{.25\linewidth}
        \centering\includegraphics[width=.9\textwidth]{}
        \caption{\RSS (233 pts)}\label{fig:algexample:rss}
    \end{subfigure}%
    \begin{subfigure}[b]{.25\linewidth}
        \centering\includegraphics[width=.9\textwidth]{}
        \caption{\paramRSS{0.1} (300 pts)}\label{fig:algexample:0.1-rss}
    \end{subfigure}%
    \begin{subfigure}[b]{.25\linewidth}
        \centering\includegraphics[width=.9\textwidth]{}
        \caption{\paramRSS{0.5} (540 pts)}\label{fig:algexample:0.5-rss}
    \end{subfigure}%
    \caption{Subsets selected by condensation algorithms from a training set \setP in $\RE^2$ of $10^4$ points. \alejandro{explain the training set}}\label{fig:algexample}
\end{figure*}
\end{comment}

Most research has focused on proposing practical heuristics to find either consistent or selective subsets of \setP (for comprehensive surveys see \cite{DBLP:conf/jcdcg/Toussaint02,jankowski2004comparison}). %Toussaint02proximitygraphs,
\CNN~(\emph{Condensed Nearest-Neighbor})~\cite{Hart:2006:CNN:2263267.2267647} was the first algorithm proposed to compute consistent subsets. Even though it has been widely used in the literature, \CNN suffers from several drawbacks: its running time is cubic in the worst-case, and the resulting subset is \emph{order-dependent}, meaning that the result is determined by the order in which points are considered by the algorithm.
Alternatives include \FCNN (\emph{Fast} \CNN)~\cite{angiulli2007fast} and \MSS (\emph{Modified Selective Subset})~\cite{barandela2005decision}, which compute consistent and selective subsets respectively. Both algorithms run in $\bigOh(n^2)$ worst-case time, and are order-independent.
While such heuristics have been extensively studied experimentally~\cite{Garcia:2012:PSN:2122272.2122582}, theoretical results are scarce. Recently, we have shown~\cite{DBLP:conf/cccg/Flores-VelazcoM19, esa20afloresv} that~the size of the subsets selected by \MSS and \FCNN cannot be bounded. Alternatively, these papers propose three new quadratic-time algorithms that are both efficient in practice, and have provable upper-bounds on their selection size. These algorithms are called \RSS~(\emph{Relaxed Selective Subset}) and \VSS~(\emph{Voronoi Selective Subset}) for finding selective subsets, and \SFCNN (\emph{Single} \FCNN) for finding consistent subsets.

\subparagraph*{Contributions.}

As mentioned in the previous section, consistency and selectivity imply correct classification to points of the training set, but not to arbitrary points of the metric space (This is striking since this is the fundamental purpose of classification!). In this paper, we introduce the concept of a coreset for classification with the nearest-neighbor rule, which provides approximate guarantees on correct classification for all query points. We demonstrate their existence, analyze their size, and discuss their efficient computation. 

We say that a subset $\setR \subseteq \setP$ is an \emph{$\eps$-coreset for the nearest-neighbor rule} on \setP, if and only if for every query point $q \in \metricSet$, the class of its exact nearest-neighbor in \setR is the same as the class of some $\eps$-approximate nearest-neighbor of $q$ in \setP (see Section~\ref{sec:approx-sensitive-nnc} for definitions). Recalling the concepts of $\numNE$ and $\gamma$ introduced in the preliminaries, here is our main result:
%Recalling the concepts of the number of nearest enemies $\numNE$ and the margin $\gamma$ introduced in the preliminaries, here is our main result:

\begin{theorem}%[Main Theorem]
\label{thm:coreset:main}
Given a training set \setP in a doubling metric space \metricSpace, there exist an $\eps$-coreset for the nearest-neighbor rule of size $\bigOh(\numNE\,\log{\frac{1}{\gamma}}\,(1/\eps)^{\ddim+1})$, and this coreset can be computed in subquadratic worst-case time.
\end{theorem}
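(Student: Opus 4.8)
The plan is to split the statement into three parts: (i) define the coreset condition precisely in terms of approximate nearest-neighbor distances, (ii) exhibit an explicit construction and bound its size, and (iii) argue that the construction can be realized in subquadratic time. For the construction itself, my instinct is to build the coreset \emph{class by class}, or more precisely \emph{per nearest-enemy}. The intuition is that the difficulty of classification is concentrated near the boundary between classes, and this boundary is ``witnessed'' by the $\numNE$ distinct nearest-enemy points. For each point $p \in \setP$, its nearest-enemy distance $\dne{p}$ controls how much slack is available: a query $q$ near $p$ can tolerate an error proportional to $\dne{p}$ before its approximate nearest-neighbor could flip to the wrong class. So I would cover each point $p$ by a net whose resolution scales with $\eps \cdot \dne{p}$, but — crucially — group the points by which nearest-enemy they share and by the \emph{scale} of $\dne{p}$, rounded to a power of two. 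There are $\numNE$ groups of the first kind, and since $\dne{p}$ ranges in $[\gamma, 1]$ (diameter normalized to $1$), there are $O(\log\frac{1}{\gamma})$ distinct dyadic scales, which is exactly where the $\log\frac{1}{\gamma}$ factor in the bound comes from.

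The size bound then follows from the doubling-dimension packing fact quoted in the preliminaries: a set of spread $\spread_R$ in a doubling metric has at most $\lceil\spread_R\rceil^{\ddim+1}$ points. For a fixed nearest-enemy $e$ and a fixed dyadic scale $2^{-i}$, the relevant region is roughly a ball of radius $\Theta(2^{-i})$ around $e$ (points whose nearest-enemy is $e$ at that scale lie within a constant factor of $2^{-i}$ of $e$), and we net it at resolution $\Theta(\eps\, 2^{-i})$, yielding $O((1/\eps)^{\ddim+1})$ net points per (enemy, scale) pair. Multiplying, $O\bigl(\numNE \cdot \log\frac{1}{\gamma} \cdot (1/\eps)^{\ddim+1}\bigr)$ points in total. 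The correctness argument is the heart of it: given an arbitrary query $q \in \metricSet$, let $p = \nn{q,\setP}$ be its true nearest-neighbor and $e = \nenemy{p}$; I must show the coreset $\setR$ contains a point $r$ with $l(r) = l(p)$ and $\dist{q}{r} \le (1+\eps)\,\dnn{q,\setP}$ — or, more carefully, that the exact nearest-neighbor of $q$ in $\setR$ agrees in label with \emph{some} $\eps$-approximate nearest-neighbor of $q$ in $\setP$ (the precise form in Section~\ref{sec:approx-sensitive-nnc}). The triangle inequality, together with the fact that the net point $r$ associated to $p$ satisfies $\dist{p}{r} = O(\eps\,\dne{p})$ and the fact that $\dne{p} \le \dne{q} + \dnn{q}$-type relations hold, should close the gap; one has to handle the case $\dnn{q}$ large versus small separately, and verify no \emph{enemy} net point sneaks closer to $q$ than $r$ does — this is the step I expect to be the main obstacle and where the choice of constants and the exact coreset definition must be handled with care.

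For the running-time claim, rather than computing exact nearest-enemy distances (which would be quadratic), I would use the subquadratic machinery already available in the literature for approximate condensation — in particular the subquadratic \SFCNN-type implementation referenced in the related-work discussion — combined with approximate-nearest-neighbor data structures for doubling metrics (navigating nets / net-trees) to compute constant-factor approximations to each $\dne{p}$ and to extract the needed nets at each dyadic scale. Since approximate values of $\dne{p}$ only distort the net resolutions by constant factors, they affect the size bound by a constant and leave the correctness argument intact. Assembling the net per (enemy, scale) pair costs near-linear time in the number of points assigned to it, so the total is $\SoftOh(n) \cdot \mathrm{poly}(1/\eps)$ after the initial approximate-condensation preprocessing, which is subquadratic. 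The one place to be careful is that the approximate nearest-enemy structure must still let us \emph{group} points by a consistent choice of witnessing enemy so that the $\numNE$ factor (not $n$) appears; here one can fall back to the observation that the number of distinct approximate-nearest-enemies is also $O(\numNE)$ up to the doubling-dimension packing bound applied around each true nearest-enemy.
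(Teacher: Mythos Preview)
Your construction and size analysis are essentially sound, and in fact your (nearest-enemy, dyadic-scale) partition is \emph{exactly} the charging scheme the paper uses to bound the output size (Theorem~\ref{thm:rss-size}). The route differs, however, in two places.

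\textbf{Construction.} You build explicit multi-scale nets, one per (enemy, scale) bucket; the paper instead runs a single greedy pass called \alphaRSS (Algorithm~\ref{alg:alpha-rss}): sort points by $\dne{p}$, add $p$ iff no previously selected point lies within $\dne{p}/(1+\alpha)$. Both constructions produce what the paper calls an \emph{$\alpha$-selective} subset, namely a set $\setR$ with $\dnn{p,\setR} < \dne{p,\setP}/(1+\alpha)$ for all $p$. Your net resolution $\Theta(\eps\,\dne{p})$ corresponds precisely to $\alpha = \Theta(1/\eps)$. The paper's greedy algorithm has a practical advantage you anticipate in your last paragraph: it never needs to \emph{group} points by nearest-enemy, only to sort them by (approximate) nearest-enemy \emph{distance}. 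The $(e,\sigma)$ partition appears only in the analysis, so the issue of ``approximate grouping still yielding $O(\numNE)$ buckets'' never arises.

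\textbf{Correctness.} This is where your proposal has a genuine gap, and you flag it yourself (``verify no enemy net point sneaks closer to $q$''). The paper resolves it not by chasing the enemy directly but by introducing the \emph{chromatic density} $\chromdens{q,\cdot} = \dne{q,\cdot}/\dnn{q,\cdot} - 1$ and proving a transfer inequality (Lemma~\ref{lemma:bound-chromdens}): for any $\alpha$-consistent $\setR$,
\[
\chromdens{q,\setR} > \frac{\alpha\,\chromdens{q,\setP}-2}{\chromdens{q,\setP}+\alpha+3}.
\]
Theorem~\ref{thm:coreset:nn} then splits on $\chromdens{q,\setP}$: if $\chromdens{q,\setP}\geq\eps$, the lemma with $\alpha=2/\eps$ forces $\chromdens{q,\setR}>0$, so the exact NN in $\setR$ has the correct label; if $\chromdens{q,\setP}<\eps$, a short triangle-inequality computation shows $\dnn{q,\setR}<(1+\eps)\,\dnn{q,\setP}$, so the NN in $\setR$ is itself an $\eps$-ANN of $q$ in $\setP$ and its label is acceptable by definition. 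Your ``$\dnn{q}$ large vs.\ small'' instinct is pointing at the same dichotomy, but the clean invariant that makes it go through is $\alpha$-selectiveness together with the chromatic-density lemma, neither of which you state.

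\textbf{Subquadratic time.} The paper's scheme (Theorem~\ref{thm:subquadratic:general}) is close to what you sketch: replace exact nearest-enemy computation by $\xi$-ANN queries (one ANN structure per class), and handle the incremental NN queries over the growing $\setR$ either by brute force against the current subset or via a static-to-dynamic conversion. Because only approximate $\dne{p}$ values are needed for \emph{sorting}, the resulting $(\alpha,\xi)$-variant is still $\hat\alpha$-selective for $\hat\alpha=(1+\alpha)(1+\xi)^2-1$, and the size bound degrades only by a constant factor in $\xi$.
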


\noindent
Here is a summary of our principal results:
\begin{itemize}
\item We extend the criteria used for nearest-neighbor condensation, and identify sufficient conditions to guarantee the correct classification of any query point after condensation. 

\item We prove that finding minimum-cardinality subsets with this new criteria is NP-hard.

\item We provide quadratic-time approximation algorithms with provable upper-bounds on the sizes of their selected subsets, and we show that the running time of one such algorithm can be improved to be subquadratic.
\end{itemize}

Our subquadratic-time algorithm is the first with such worst-case runtime for the problem of nearest-neighbor condensation.

\section{Coreset Characterization}
\label{sec:approx-sensitive-nnc}

In practice, nearest-neighbors are usually not computed exactly, but rather approximately. Given an approximation parameter $\eps \geq 0$, an $\eps$-\emph{approximate} nearest-neighbor or $\eps$-\ANN query returns any point whose distance from the query point is within a factor of $(1+\eps)$ times the true nearest-neighbor distance. 

Intuitively, a query point should be easier to classify if its nearest-neighbor is significantly closer than its nearest-enemy.
This intuition can be formalized through the concept of the \emph{chromatic density}~\cite{MOUNT200097} of a query point $q \in \metricSet$ with respect to a set $\setR \subseteq \setP$, defined as:
\begin{equation}
    \chromdens{q,\setR} = \frac{\dne{q,\setR}}{\dnn{q,\setR}} -1.
\end{equation}

Clearly, if $\chromdens{q,\setR} > \eps$ then $q$ will be correctly classified%
\footnote{By \emph{correct classification}, we mean that the classification is the same as the classification that results from applying the nearest-neighbor rule exactly on the entire training set \setP.}
by an $\eps$-\ANN query \mbox{over \setR, as all} possible candidates for the approximate nearest-neighbor belong to the same class as $q$'s true nearest-neighbor. However, as evidenced in Figures~\ref{fig:cdheatmap:fcnn} and~\ref{fig:cdheatmap:rss}, one side effect of existing condensation algorithms is a significant reduction in the chromatic density of query points. Consequently, we propose new criteria and algorithms that maintain high chromatic densities after condensation, which are then leveraged to build coresets for the nearest-neighbor rule.

\subsection{Approximation-Sensitive Condensation}

The decision boundaries of the nearest-neighbor rule (that is, points $q$ such that $\dne{q,\setP} = \dnn{q,\setP}$) are naturally characterized by points that separate clusters of points of different classes. As illustrated in Figures~\ref{fig:algexample:fcnn}-\ref{fig:algexample:rss}, condensation algorithms tend to select such points. However, this behavior implies a significant reduction of the chromatic density of query points that are far from such boundaries (see Figures~\ref{fig:cdheatmap:fcnn}-\ref{fig:cdheatmap:rss}).

\begin{figure*}[t!]
    \centering
    \begin{subfigure}[b]{.25\linewidth}
        \centering\includegraphics[width=.9\textwidth]{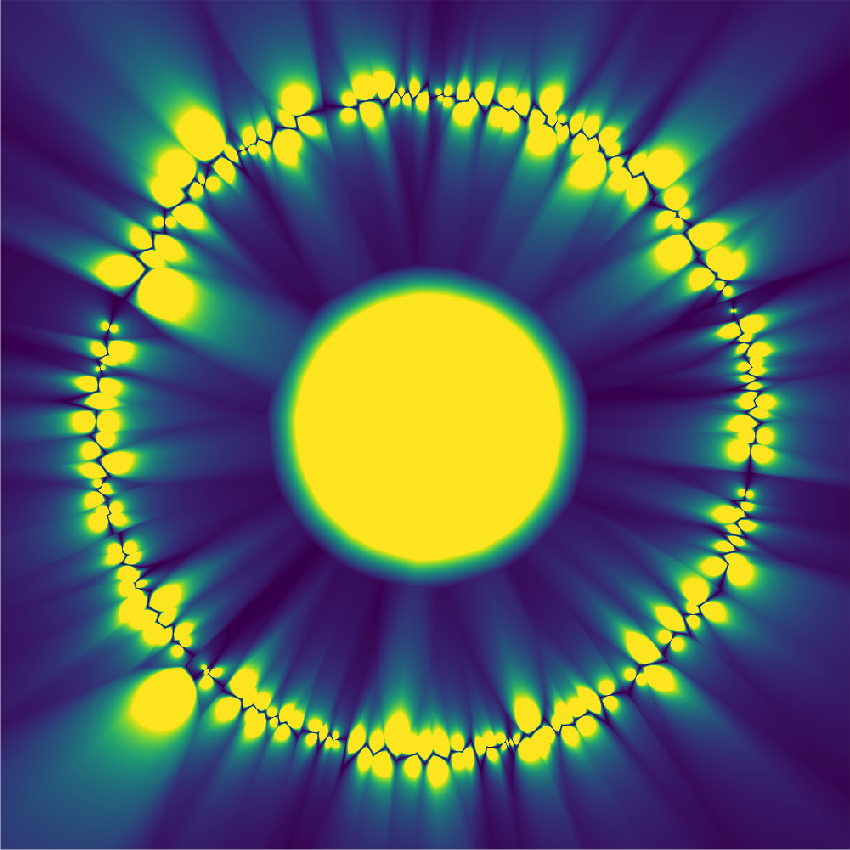}
        \caption{\FCNN}\label{fig:cdheatmap:fcnn}
    \end{subfigure}%
    \begin{subfigure}[b]{.25\linewidth}
        \centering\includegraphics[width=.9\textwidth]{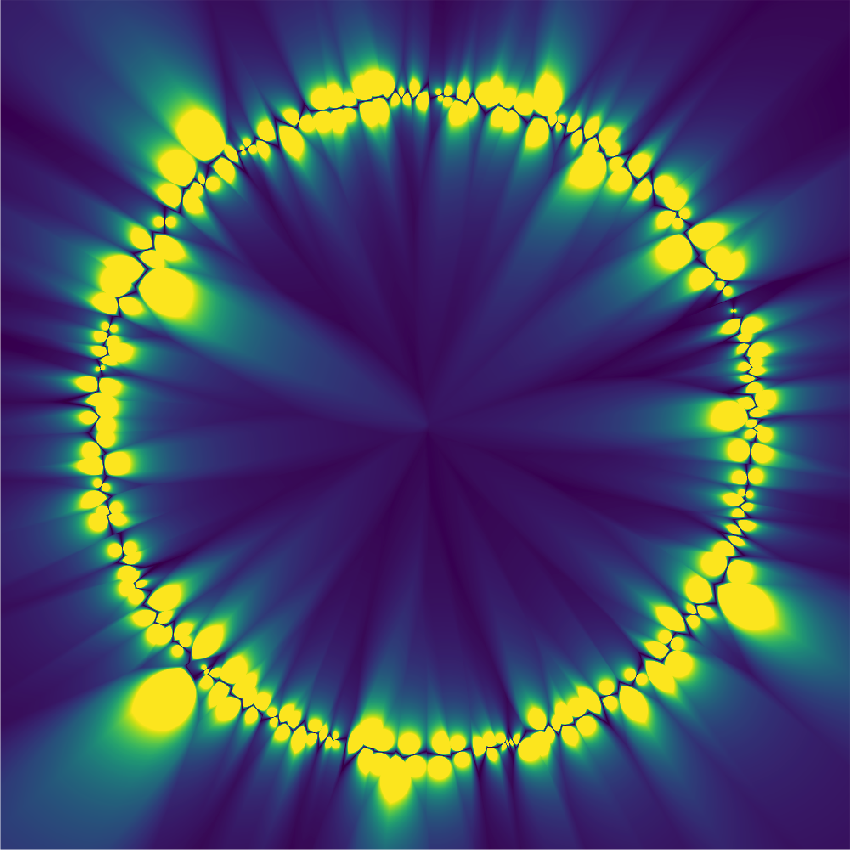}
        \caption{\RSS}\label{fig:cdheatmap:rss}
    \end{subfigure}%
    \begin{subfigure}[b]{.25\linewidth}
        \centering\includegraphics[width=.9\textwidth]{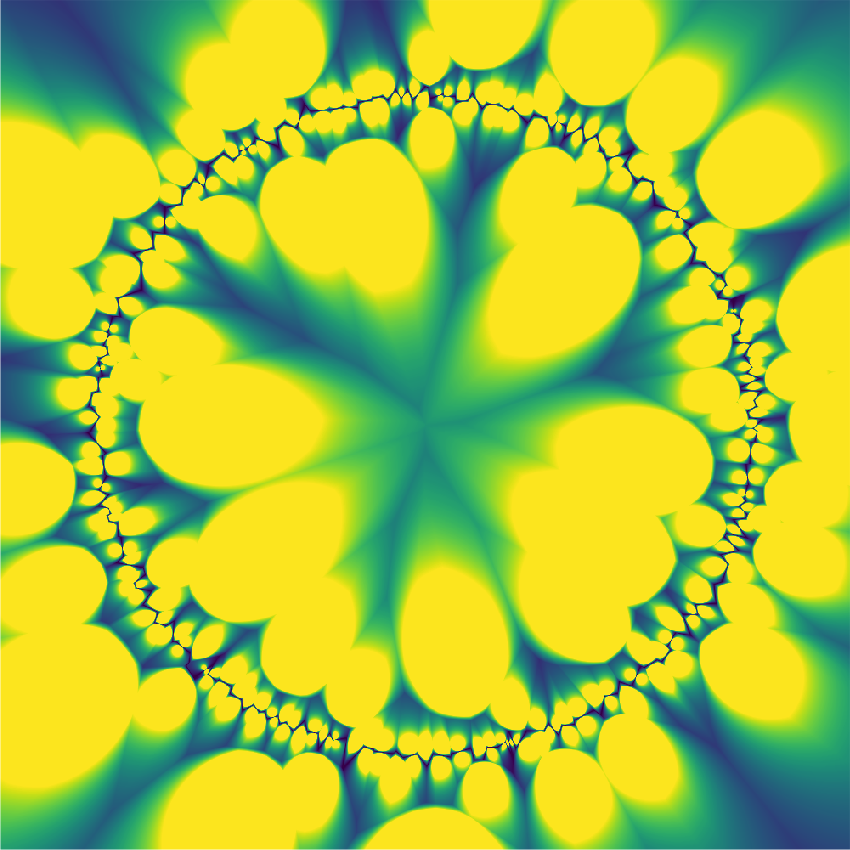}
        \caption{\paramRSS{0.1}}\label{fig:cdheatmap:arss-0.1}
    \end{subfigure}%
    \begin{subfigure}[b]{.25\linewidth}
        \centering\includegraphics[width=.9\textwidth]{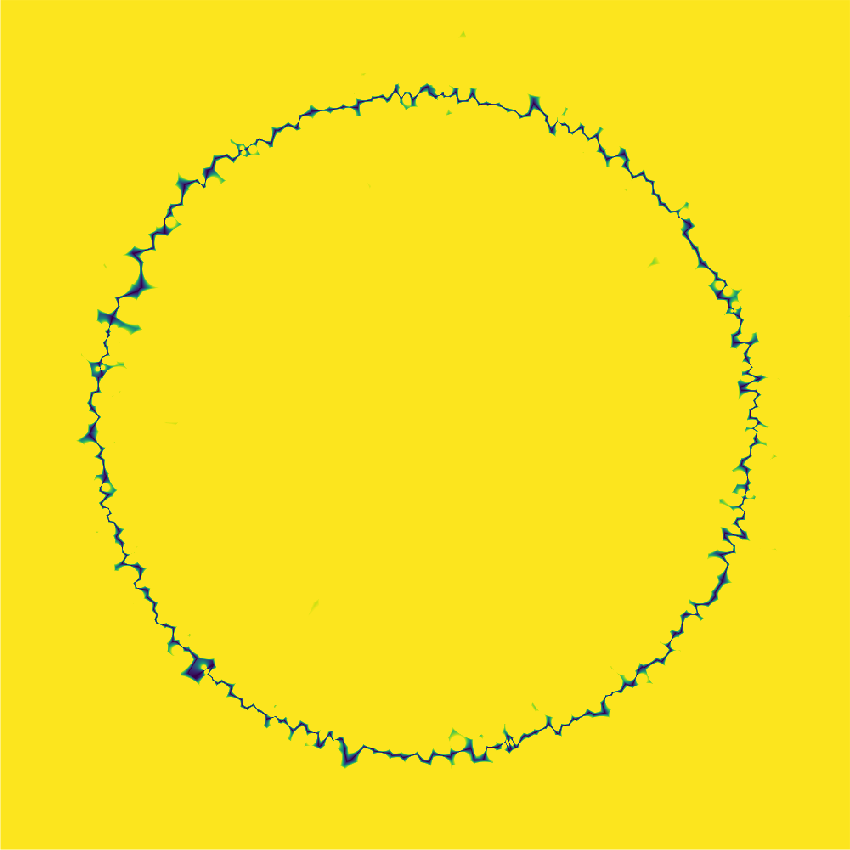}
        \caption{\paramRSS{0.5}}\label{fig:cdheatmap:arss-0.5}
    \end{subfigure}%
    \caption{Heatmap of \emph{chromatic density} values of points in $\RE^2$ \wrt the subsets computed by different condensation algorithms: \FCNN, \RSS, and \alphaRSS (see Figure~\ref{fig:algexample}). \emph{Yellow}~{\color{yellowcd}$\bullet$} corresponds to chromatic density values $\geq 0.5$, while \emph{blue}~{\color{bluecd}$\bullet$} corresponds to $0$. Evidently, \alphaRSS helps maintaining high chromatic density values when compared to standard condensation algorithms.}\label{fig:cdheatmap}
\end{figure*}

A natural way to define an approximate notion of consistency is to ensure that all points in \setP are correctly classified by \ANN queries over the condensed subset \setR.
%
%\begin{definition}
Given a condensation parameter $\alpha \geq 0$, we define a subset $\setR \subseteq \setP$ to be:
\vspace*{5pt}
\begin{description}
    \setlength{\itemsep}{5pt}
    \item[$\alpha$-consistent] if
        $\forall\ p \in \setP,\ \dnn{p,\setR} < \dne{p,\setR}/(1+\alpha)$.
    \item[$\alpha$-selective]\ \ if 
        $\forall\ p \in \setP,\ \dnn{p,\setR} < \dne{p,\setP}/(1+\alpha)$.
\end{description}
%\end{definition}

It is easy to see that the standard forms arise as special cases when $\alpha = 0$. These new condensation criteria imply that $\chromdens{p,\setR} > \alpha$ for every $p \in \setP$, meaning that $p$ is correctly classified using an $\alpha$-\ANN query on \setR.
Note that any $\alpha$-selective subset is also $\alpha$-consistent.
Such subsets always exist for any $\alpha \geq 0$ by taking $\setR = \setP$. Current condensation algorithms cannot guarantee either $\alpha$-consistency or $\alpha$-selectiveness unless $\alpha$ is equal to zero. Therefore, the central algorithmic challenge is how to efficiently compute such sets whose sizes are significantly smaller than \setP.
We propose new algorithms to compute such subsets, which showcase how to maintain high chromatic density values after condensation, as evidenced in Figures~\ref{fig:cdheatmap:arss-0.1} and \ref{fig:cdheatmap:arss-0.5}. This empirical evidence is matched with theoretical guarantees for $\alpha$-consistent and $\alpha$-selective subsets, described in the following section.

\subsection{Guarantees on Classification Accuracy}

These newly defined criteria for nearest-neighbor condensation enforce lower-bounds on the chromatic density of any point of \setP after condensation. However, this doesn't immediately imply having similar lower-bounds for unlabeled query points of $\metricSet$. 
In this section, we prove useful bounds on the chromatic density of query points, and characterize sufficient conditions to correctly classify some of these query points after condensation.

Intuitively, the chromatic density determines how easy it is to correctly classify a query point $q \in \metricSet$. We show that the ``ease'' of classification of $q$ after condensation (\ie~$\chromdens{q,\setR}$) depends on both the condensation parameter $\alpha$, and the chromatic density of $q$ before condensation (\ie~$\chromdens{q,\setP}$). This result is formalized in the following lemma:

\begin{lemma}
\label{lemma:bound-chromdens}
Let $q \in \metricSet$ be a query point, and \setR an $\alpha$-consistent subset of \setP, for $\alpha \geq 0$. Then, $q$'s chromatic density with respect to \setR is:
\begin{equation*}
    \chromdens{q,\setR} > \frac{\alpha \, \chromdens{q,\setP} - 2}{\chromdens{q,\setP} + \alpha + 3}.
\end{equation*}
\end{lemma}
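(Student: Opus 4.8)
The plan is to unwind the definition of chromatic density $\chromdens{q,\setR} = \dne{q,\setR}/\dnn{q,\setR} - 1$ and bound the two quantities $\dne{q,\setR}$ and $\dnn{q,\setR}$ in terms of $\dne{q,\setP}$, $\dnn{q,\setP}$, and the parameter $\alpha$. Since $\setR \subseteq \setP$, the key geometric facts are: (i) $\dnn{q,\setR} \geq \dnn{q,\setP}$ trivially, and we need a matching upper bound on $\dnn{q,\setR}$; and (ii) $\dne{q,\setR} \geq \dne{q,\setP}$ trivially (fewer enemies can only push the nearest enemy farther), and we need a lower bound relating $\dne{q,\setR}$ to quantities we control via $\alpha$-consistency. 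The subtle direction is bounding $\dnn{q,\setR}$ from above: let $p = \nn{q}$ be $q$'s true nearest-neighbor in \setP, and let $r = \nn{p,\setR}$ be $p$'s nearest-neighbor in the coreset. Then $\dnn{q,\setR} \leq \dist{q}{r} \leq \dist{q}{p} + \dist{p}{r} = \dnn{q,\setP} + \dnn{p,\setR}$ by the triangle inequality, and $\alpha$-consistency of \setR gives $\dnn{p,\setR} < \dne{p,\setR}/(1+\alpha)$.

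Next I would bound $\dne{p,\setR}$: any enemy of $p$ in \setR is in particular a point of $\setR \subseteq \setP$, so using the triangle inequality around $q$, the nearest enemy of $p$ in $\setR$ lies within distance $\dne{p,\setR} \leq \dist{p}{q} + \dne{q,\setR}$ — wait, one must be careful that an enemy of $p$ need not be an enemy of $q$ and vice versa. The cleaner route: note $l(r) = l(p)$ (this follows from $\alpha$-consistency, since $\dnn{p,\setR} < \dne{p,\setR}$), so $r$ is not an enemy of $q$ iff $l(q) = l(p)$; in the relevant case where $q$'s true nearest-neighbor $p$ is correctly reflected, we track $\dne{q,\setR}$ directly. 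I would relate $\dne{p,\setR}$ to $\dne{q,\setR}$ and $\dne{q,\setP}$: every enemy of $p$ is a point of \setP whose label differs from $l(p) = l(r)$; since $r \in \setR$ realizes $q$'s nearest-neighbor distance in $\setR$ up to the above bound, an enemy of $q$ in $\setR$ is an enemy of $p$ (same label $l(p)$), giving $\dne{p,\setR} \leq \dne{q,\setR} + \dist{q}{p}\cdot(\text{something})$ — concretely $\dne{p,\setR} \leq \dist{p}{q} + \dne{q,\setR} = \dnn{q,\setP} + \dne{q,\setR}$ by triangle inequality to the nearest enemy of $q$ in $\setR$. Chaining these: $\dnn{q,\setR} \leq \dnn{q,\setP} + \dfrac{\dnn{q,\setP} + \dne{q,\setR}}{1+\alpha}$.

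Now I would combine everything algebraically. Write $a = \dnn{q,\setP}$, $b = \dne{q,\setP}$, $x = \dnn{q,\setR}$, $y = \dne{q,\setR}$, so $\chromdens{q,\setP} = b/a - 1$ and $\chromdens{q,\setR} = y/x - 1$, with $y \geq b$ and $x \leq a + (a+y)/(1+\alpha)$. Substituting $y \geq b$ on the left where it helps and keeping $y$ as a free variable bounded below by $b$, one gets $\chromdens{q,\setR} = y/x - 1 \geq y/\bigl(a + (a+y)/(1+\alpha)\bigr) - 1$; since this expression is increasing in $y$ (check: it has the form $y/(c_1 + c_2 y)$ minus a constant, which is monotone increasing), replacing $y$ by its lower bound $b$ gives $\chromdens{q,\setR} \geq \dfrac{b}{a + (a+b)/(1+\alpha)} - 1 = \dfrac{(1+\alpha) b - (1+\alpha) a - (a+b)}{(1+\alpha) a + a + b}$. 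Dividing numerator and denominator by $a$ and substituting $b/a = \chromdens{q,\setP} + 1$ should collapse this, after routine simplification, to $\dfrac{\alpha\,\chromdens{q,\setP} - 2}{\chromdens{q,\setP} + \alpha + 3}$, and the final inequality is strict because the $\alpha$-consistency inequality $\dnn{p,\setR} < \dne{p,\setR}/(1+\alpha)$ is strict.

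The main obstacle I anticipate is pinning down the correct triangle-inequality chain for $\dne{p,\setR}$ versus $\dne{q,\setR}$ — specifically making sure the label bookkeeping is right ($l(r) = l(p)$, enemies of $p$ in $\setR$ coincide with the relevant enemies, and the case analysis on whether $l(q)$ equals $l(p)$) so that the bound $\dne{p,\setR} \le \dnn{q,\setP} + \dne{q,\setR}$ is actually justified and not off by a label mismatch. Everything downstream is a monotonicity observation plus a mechanical substitution, so the whole proof hinges on getting that one inequality cleanly and correctly.
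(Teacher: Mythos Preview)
Your proposal is correct and follows essentially the same route as the paper: let $p=\nn{q}$, use $\alpha$-consistency to bound $\dnn{p,\setR}<\dne{p,\setR}/(1+\alpha)$, bound $\dne{p,\setR}\le \dnn{q,\setP}+\dne{q,\setR}$ via the triangle inequality through $\nenemy{q,\setR}$, and then collapse the algebra. The paper separates out the trivial case $p\in\setR$ (where $\chromdens{q,\setR}\ge\chromdens{q,\setP}$) and, in the main case, substitutes $\dnn{q,\setP}\le \dne{q,\setR}/(1+\chromdens{q,\setP})$ directly rather than using your monotonicity-in-$y$ argument; both organizations yield the same inequality. Your ``label bookkeeping'' worry is exactly the right place to pause, and it is resolved by the convention that an unlabeled query $q$ is assigned the label of $\nn{q,\setP}=p$, so any enemy of $q$ in $\setR$ is automatically an enemy of $p$, giving $\dne{p,\setR}\le \dist{p}{\nenemy{q,\setR}}\le \dnn{q,\setP}+\dne{q,\setR}$ as you wrote.
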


\begin{proof}
The proof follows by analyzing $q$'s nearest-enemy distance in \setR. To this end, consider the point $p \in \setP$ that is $q$'s nearest-neighbor in \setP. There are two possible cases:

\begin{description}
    \item[Case 1:] If $p \in \setR$, clearly $\dnn{q,\setR} = \dnn{q,\setP}$. Additionally, it is easy to show that after condensation, $q$'s nearest-enemy distance can only increase: \ie~$\dne{q,\setP} \leq \dne{q,\setR}$. This implies that $\chromdens{q,\setR} \geq \chromdens{q,\setP}$.
    \item[Case 2:] If $p \not\in \setR$, we can upper-bound $q$'s nearest-neighbor distance in \setR as follows:
\end{description}

Since \setR is an $\alpha$-consistent subset of \setP, we know that there exists a point $r \in \setR$~such~that $\dist{p}{r} < \dne{p,\setR}/(1+\alpha)$. By the triangle inequality and the definition of nearest-enemy, $\dne{p,\setR} \leq \dist{p}{\nenemy{q,\setR}} \leq \dist{q}{p} + \dne{q,\setR}$. Additionally, applying the definition of chromatic density on $q$ and knowing that $\dne{q,\setP} \leq \dne{q,\setR}$, we have $\dist{q}{p} = \dnn{q,\setP} \leq \dnn{q,\setR} = \dne{q,\setR}/(1+\chromdens{q,\setP})$. Therefore:
\begin{align*}
    \dnn{q,\setR} 
        \leq \dist{q}{r}
        &\leq \dist{q}{p} + \dist{p}{r}\\
        &< \dist{q}{p} + \frac{\dist{q}{p} + \dne{q,\setR}}{1+\alpha}
         \leq \left( \frac{\chromdens{q,\setP} + \alpha + 3}{(1+\alpha)(1+\chromdens{q,\setP})} \right) \dne{q,\setR}.
\end{align*}
\vspace*{10pt}
Finally, from the definition of $\chromdens{q,\setR}$, we have:\\
\indent\(
\displaystyle
\chromdens{q,\setR}
    = \frac{\dne{q,\setR}}{\dnn{q,\setR}} - 1
    %\geq \frac{\dne{q,\setR}}{\frac{\chromdens{q,\setP} + \alpha + 3}{(1+\alpha)(1+\chromdens{q,\setP})}\,\dne{q,\setR}} - 1 
    > \frac{(1+\alpha)(1+\chromdens{q,\setP})}{\chromdens{q,\setP}+\alpha+3} - 1
    = \frac{\alpha\,\chromdens{q,\setP} - 2}{\chromdens{q,\setP} + \alpha + 3}.
\)
\end{proof}

The above result can be leveraged to define a coreset, in the sense that an exact result on the coreset corresponds to an approximate result on the original set. As previously defined, we say that a set $\setR \subseteq \setP$ is an \emph{$\eps$-coreset for the nearest-neighbor rule} on \setP, if and only if for every query point $q \in \metricSet$, the class of $q$'s exact nearest-neighbor in \setR is the same as the class of any of its $\eps$-approximate nearest-neighbors in \setP.

\begin{lemma}
\label{thm:coreset:isconsistent}
%Any $\eps$-coreset for the nearest-neighbor rule is an $\alpha$-consistent subset of \setP, for $\alpha \geq 0$.
Any $\eps$-coreset for the nearest-neighbor rule is an $\alpha$-consistent subset, for $\alpha \geq 0$.
\end{lemma}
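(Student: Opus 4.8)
The plan is to read the definition of an $\eps$-coreset back at the points of the training set itself. I claim it suffices to verify the criterion with the condensation parameter $\alpha = 0$, i.e.\ to show that $\setR$ is \emph{consistent}: for every $p \in \setP$, the nearest-neighbor of $p$ in $\setR$ carries the label $l(p)$. Consistency is precisely $\alpha$-consistency for $\alpha = 0$, and it is the only instance of the criterion that the coreset guarantee can force uniformly at individual training points, because at such a point the approximation ``budget'' vanishes — as the argument below makes clear.

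The argument is short and uses a single well-chosen query point. First I would recall from the preliminaries that for any $p \in \setP$ its nearest-neighbor in $\setP$ is $p$ itself, so $\dnn{p,\setP} = 0$. Hence every $\eps$-approximate nearest-neighbor of $p$ in $\setP$ lies at distance at most $(1+\eps)\cdot 0 = 0$ from $p$; in a metric space this forces it to coincide with $p$ (or to be a duplicate of $p$, which necessarily shares the label $l(p)$), so \emph{every} $\eps$-approximate nearest-neighbor of $p$ in $\setP$ has label $l(p)$. Now instantiate the definition of an $\eps$-coreset with the query point $q = p$: the class of $p$'s exact nearest-neighbor in $\setR$ must equal the class of some $\eps$-approximate nearest-neighbor of $p$ in $\setP$, and by the previous sentence that class is $l(p)$. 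Therefore the nearest-neighbor of $p$ in $\setR$ has label $l(p)$. Since $p \in \setP$ was arbitrary, $\setR$ is consistent, i.e.\ $\alpha$-consistent with $\alpha = 0$, which is what the statement asks for.

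There is essentially no computational obstacle here; the content is conceptual. The one thing to get right is the observation that the coreset property, restricted to the query points $q \in \setP$, collapses exactly onto the classical consistency condition, because the ball of radius $(1+\eps)\,\dnn{q,\setP}$ controlling the approximate query has radius $0$ at those points — once that is seen, the proof is immediate. Two minor remarks belong in the write-up: that coincident points at distance $0$ must share a label (so the ``duplicate'' case is harmless), and that $\alpha = 0$ is the witness one should use, since at a training point $p$ the coreset condition constrains only the \emph{label} of $p$'s nearest-neighbor in $\setR$ and says nothing quantitative about the ratio $\dne{p,\setR}/\dnn{p,\setR}$ that would be needed for $\alpha$-consistency with $\alpha > 0$.
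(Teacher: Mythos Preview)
Your proof is correct and follows essentially the same approach as the paper's own proof: instantiate the coreset guarantee at each training point $p$, observe that $\dnn{p,\setP}=0$ collapses the $\eps$-approximation radius to zero, and conclude that $p$'s exact nearest-neighbor in $\setR$ must carry the label $l(p)$, i.e.\ $\setR$ is $0$-consistent. Your explicit remark that only the witness $\alpha=0$ is forced (since nothing quantitative about $\dne{p,\setR}/\dnn{p,\setR}$ follows) is a helpful clarification of a point the paper states more tersely.
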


\begin{proof}
Consider any $\eps$-coreset $C \subseteq \setP$ for the nearest-neighbor rule on \setP. Since the approximation guarantee holds for any point in $\metricSet$, it holds for any $p \in \setP \setminus C$. We know $p$'s nearest-neighbor in the original set \setP is $p$ itself, thus making $\dnn{p,\setP}$ zero. This implies that $p$ must be correctly classified by a nearest-neighbor query on $C$, that is, $\dnn{p,C} < \dne{p,C}$, which is the definition of $\alpha$-consistency for any $\alpha \geq 0$.
\end{proof}

\begin{theorem}
\label{thm:coreset:nn}
%Any $2/\eps$-selective subset of \setP is an $\eps$-coreset for the nearest-neighbor rule on \setP, for any $\eps > 0$.
Any $2/\eps$-selective subset is an $\eps$-coreset for the nearest-neighbor rule.
\end{theorem}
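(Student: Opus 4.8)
The plan is to verify the $\eps$-coreset property directly and pointwise over all queries. Fix an arbitrary $q \in \metricSet$; write $p = \nn{q}$ for $q$'s exact nearest neighbor in $\setP$, set $D = \dnn{q,\setP}$ and $E = \dne{q,\setP}$, let $s$ denote $q$'s exact nearest neighbor in the given $2/\eps$-selective subset $\setR$, and let $r$ denote $p$'s nearest neighbor in $\setR$. The goal is to exhibit an $\eps$-approximate nearest neighbor of $q$ in $\setP$ whose label equals $l(s)$.

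The workhorse of the argument is a single upper bound on $\dnn{q,\setR}$. By $2/\eps$-selectivity applied to $p$, we have $\dist{p}{r} = \dnn{p,\setR} < \dne{p,\setP}/(1+2/\eps)$; in particular $\dist{p}{r} < \dne{p,\setP}$, so $r$ is not an enemy of $p$ and hence $l(r) = l(p)$. Also, $\nenemy{q,\setP}$ has a label different from $l(\nn{q}) = l(p)$, so it is an enemy of $p$, whence $\dne{p,\setP} \le \dist{p}{\nenemy{q,\setP}} \le D + E$ by the triangle inequality. Chaining these bounds, $\dnn{q,\setR} \le \dist{q}{r} \le D + \dist{p}{r} < D + \frac{D+E}{1+2/\eps} = D + \frac{\eps(D+E)}{\eps+2}$.

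Then I would split on whether $E \le (1+\eps)D$ (equivalently $\chromdens{q,\setP} \le \eps$) or $E > (1+\eps)D$. In the first case the bound above simplifies, after substituting $1+2/\eps = (\eps+2)/\eps$, to $\dnn{q,\setR} < (1+\eps)D = (1+\eps)\dnn{q,\setP}$, so $s$ itself is a valid answer to an $\eps$-approximate nearest-neighbor query on $\setP$, and we are done, taking that query answer to be $s$. In the second case I claim $l(s) = l(p)$: if not, $s$ is an enemy of $q$, so $\dist{q}{s} \ge \dne{q,\setP} = E$; but the bound above, together with $E > (1+\eps)D$, forces $\dist{q}{s} = \dnn{q,\setR} < D + \frac{\eps(D+E)}{\eps+2} < E$, a contradiction. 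Hence $l(s) = l(p)$, and since $p = \nn{q}$ is trivially an $\eps$-approximate nearest neighbor of $q$ in $\setP$ with label $l(p) = l(s)$, the coreset condition holds at $q$. As $q$ was arbitrary, $\setR$ is an $\eps$-coreset.

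The main obstacle is really just pinning down the constant: the whole statement rests on the elementary fact that $D + \frac{\eps(D+E)}{\eps+2} \le (1+\eps)D$ holds exactly when $E \le (1+\eps)D$, which is precisely why the selectivity parameter must be $2/\eps$ — the proof is essentially a verification that $2/\eps$ is the break-even value separating the two cases. A secondary point requiring care is the bookkeeping of ``enemy of whom'': since $q$ carries no label, an enemy of $q$ must mean a point whose label differs from $l(\nn{q}) = l(p)$, so that $\nenemy{q,\setP}$ is simultaneously an enemy of $p$ (yielding $\dne{p,\setP} \le D + E$) and also the reason that $l(s) \ne l(p)$ forces $\dist{q}{s} \ge E$. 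The degenerate cases ($\eps = 0$, or $\setP$ monochromatic so that $E = \infty$ and the distance bounds are vacuous) should be acknowledged but are immediate.
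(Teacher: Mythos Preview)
Your proof is correct and follows essentially the same approach as the paper: the same case split on whether $\chromdens{q,\setP}$ exceeds $\eps$, the same triangle-inequality bound $\dnn{q,\setR} < D + (D+E)/(1+2/\eps)$, and the same conclusions in each case. The only difference is that in the correct-classification case the paper invokes its Lemma~\ref{lemma:bound-chromdens} to deduce $\chromdens{q,\setR}>0$, whereas you carry out that computation directly and inline; this makes your version slightly more self-contained but is not a substantively different argument.
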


\begin{proof}
Let \setR be an $\alpha$-selective subset of \setP, where $\alpha = 2/\eps$. Consider any query point $q \in \metricSet$ in the metric space. It suffices to show that its nearest-neighbor in \setR is of the same class as any $\eps$-approximate nearest-neighbor in \setP. To this end, consider $q$'s chromatic density with respect to both \setP and \setR, denoted as $\chromdens{q,\setP}$ and $\chromdens{q,\setR}$, respectively. We identify two cases:

\begin{description}
    \setlength{\itemsep}{5pt}

    \item[Case 1 (Correct-Classification guarantee):]
    If $\chromdens{q,\setP} \geq \eps$.\\
    Consider the bound derived in Lemma~\ref{lemma:bound-chromdens}. Since $\alpha \geq 0$, and by our assumption that $\chromdens{q,\setP} \geq \eps > 0$, setting $\alpha=2/\eps$ implies that $\chromdens{q,\setR} > 0$. This means that the nearest-neighbor of $q$ in \setR belongs to the same class as the nearest-neighbor of $q$ in \setP. Intuitively, this guarantees that $q$ is correctly classified by the nearest-neighbor rule in \setR.

    \item[Case 2 ($\eps$-Approximation guarantee):]
    If $\chromdens{q,\setP} < \eps$.\\
    %Consider equation~\eqref{eq:bound-dnnq} derived in the proof Lemma~\ref{lemma:bound-chromdens}. Setting $\alpha = 2/\eps$, we have that:
    Let $p \in \setP$ be $q$'s nearest-neighbor in \setP, thus $\dist{q}{p} = \dnn{q,\setP}$. Since \setR is $\alpha$-selective, there exists a point $r \in \setR$ such that $\dist{p}{r} = \dnn{p,\setR} < \dne{p,\setP}/(1+\alpha)$. Additionally, by the triangle inequality and the definition of nearest-enemies, we have 
    \[
        \dne{p,\setP} 
            \leq \dist{p}{\nenemy{q,\setP}} 
            \leq \dist{p}{q} + \dist{q}{\nenemy{q,\setP}} 
            = \dnn{q,\setP}+\dne{q,\setP}.
    \]
    From the definition of chromatic density, $\dne{q,\setP} = (1+\chromdens{q,\setP})\,\dnn{q,\setP}$. Together, these inequalities imply that $(1+\alpha)\,\dist{p}{r} \leq (2+\chromdens{q,\setP})\,\dnn{q,\setP}$. Therefore:
    \begin{equation*}
        \dnn{q,\setR} 
            \leq \dist{q}{r}
            \leq \dist{q}{p} + \dist{p}{r}
            \leq \left( 1 + \frac{2+\chromdens{q,\setP}}{1+\alpha} \right)\dnn{q,\setP}.
    \end{equation*}
    %Setting $\alpha = 2/\eps$, we have that:
    %\[
    %\dnn{q,\setR} 
    %    \leq \left( 1 + \frac{2+\chromdens{q,\setP}}{1+\alpha} \right)\dnn{q}
    %    %= \left( 1 + \frac{2+\chromdens{q,\setP}}{1+2/\eps} \right)\,\dnn{q}
    %    = \left( 1 + \eps\left(\frac{2+\chromdens{q,\setP}}{2+\eps} \right)\right)\dnn{q}.
    %\]
    Now, assuming $\chromdens{q,\setP} < \eps$ and setting $\alpha = 2/\eps$, imply that $\dnn{q,\setR} < (1+\eps)\,\dnn{q,\setP}$. Therefore, the nearest-neighbor of $q$ in \setR is an $\eps$-approximate nearest-neighbor of $q$ in \setP.
\end{description}
\vspace*{3pt}
Cases~1 and 2 imply that setting $\alpha = 2/\eps$ is sufficient to ensure that the nearest-neighbor rule classifies any query point $q \in \metricSet$ with the class of one of its $\eps$-approximate nearest-neighbors in \setP. Therefore, \setR is an $\eps$-coreset for the nearest-neighbor rule on \setP.
\end{proof}

So far, we have assumed that nearest-neighbor queries over \setR are computed exactly, as this is the standard notion of coresets. However, it is reasonable to compute nearest-neighbors approximately even for \setR. How should the two approximations be combined to achieve a desired final degree of accuracy? Consider another approximation parameter $\xi$, where $0\leq\xi<\eps$. We say that a set $\setR \subseteq \setP$ is an \emph{$(\xi,\eps)$-coreset} for the approximate nearest-neighbor rule on \setP, if and only if for every query point $q \in \metricSet$, the class of any of $q$'s $\xi$-approximate nearest-neighbor in \setR is the same as the class of any of its $\eps$-approximate nearest-neighbors in \setP. The following result generalizes Theorem~\ref{thm:coreset:nn} to accommodate for $\xi$-\ANN queries after condensation.

\begin{theorem}
\label{thm:coreset:ann}
Any $\alpha$-selective subset is an $(\xi,\eps)$-coreset for the approximate nearest-neighbor rule when $\alpha = \Omega\kern-1pt\left( 1/(\eps - \xi) \right)$.
\end{theorem}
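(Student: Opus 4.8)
The plan is to mimic the proof of Theorem~\ref{thm:coreset:nn}, but track the extra slack coming from the $\xi$-\ANN queries on \setR. Fix an $\alpha$-selective subset \setR and a query point $q\in\metricSet$. As before, I would split on the chromatic density $\chromdens{q,\setP}$ relative to a threshold $t$ (to be chosen, roughly $t \approx \eps$ — or more precisely something that makes the two cases meet).

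In the \textbf{high-density case} ($\chromdens{q,\setP} \geq t$), I would invoke Lemma~\ref{lemma:bound-chromdens} (valid since an $\alpha$-selective subset is $\alpha$-consistent) to conclude that $\chromdens{q,\setR} > \frac{\alpha t - 2}{t + \alpha + 3}$. For the $\xi$-\ANN query on \setR to still return only points of the correct class, I need $\chromdens{q,\setR} > \xi$, i.e.\ $\frac{\alpha t - 2}{t+\alpha+3} > \xi$. Solving for $\alpha$, this holds once $\alpha > \frac{2 + \xi(t+3)}{t - \xi}$, which is $\Omega\!\left(1/(t-\xi)\right)$ provided $t > \xi$; since we will take $t$ of order $\eps$ and $\xi < \eps$, this is $\Omega(1/(\eps-\xi))$.

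In the \textbf{low-density case} ($\chromdens{q,\setP} < t$), I would reuse the distance computation from the proof of Theorem~\ref{thm:coreset:nn} verbatim: with $p=\nn{q}$ in \setP and $r\in\setR$ the $\alpha$-selective witness with $\dist{p}{r} < \dne{p,\setP}/(1+\alpha)$, the triangle inequality gives $\dnn{q,\setR} \leq \left(1 + \frac{2+\chromdens{q,\setP}}{1+\alpha}\right)\dnn{q,\setP} < \left(1 + \frac{2+t}{1+\alpha}\right)\dnn{q,\setP}$. Now the difference with the exact case: a $\xi$-\ANN query on \setR returns a point at distance $\leq (1+\xi)\dnn{q,\setR} < (1+\xi)\left(1 + \frac{2+t}{1+\alpha}\right)\dnn{q,\setP}$, and I need this to be $\leq (1+\eps)\dnn{q,\setP}$ so that the returned point is a legitimate $\eps$-\ANN of $q$ in \setP. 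This requires $(1+\xi)\left(1 + \frac{2+t}{1+\alpha}\right) \leq 1+\eps$, i.e.\ $\frac{2+t}{1+\alpha} \leq \frac{\eps-\xi}{1+\xi}$, which solves to $\alpha \geq \frac{(2+t)(1+\xi)}{\eps-\xi} - 1 = \Omega(1/(\eps-\xi))$, again using $\xi<\eps$ and $t = O(1)$. Choosing $t = \eps$ (or any fixed constant multiple of $\eps$ that is smaller than $\eps$, to keep $t>\xi$ possible — actually $t=\eps$ works since $\xi<\eps$) makes both thresholds on $\alpha$ simultaneously satisfiable at $\alpha = \Theta(1/(\eps-\xi))$, and taking $\alpha$ above the max of the two bounds proves the claim.

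The main obstacle is bookkeeping rather than conceptual: I must pick the density threshold $t$ so that the high-density requirement $\alpha = \Omega(1/(t-\xi))$ and the low-density requirement $\alpha = \Omega(1/(\eps-\xi))$ are both of the claimed order, which forces $t$ to sit strictly between $\xi$ and $\eps$ — e.g.\ $t = (\xi+\eps)/2$, giving $t-\xi = (\eps-\xi)/2$, so both bounds are $\Theta(1/(\eps-\xi))$. One should also double-check the edge behavior (the $\chromdens{q,\setP}$ term in the denominator of Lemma~\ref{lemma:bound-chromdens} only helps, and using the crude bound $\chromdens{q,\setP}<t$ in the low-density case is an overestimate that is harmless), and confirm the degenerate subcase $\dnn{q,\setP}=0$ (i.e.\ $q\in\setP$) is covered since then $\alpha$-selectivity directly forces the \NE in \setR to be farther than any $\xi$-approximate \NN. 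These are all routine, so the statement follows with the constant hidden in $\Omega$ depending only on how $t$ is placed between $\xi$ and $\eps$.
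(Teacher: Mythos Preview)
Your proposal is correct and follows essentially the same route as the paper's proof: the paper also splits on $\chromdens{q,\setP}\gtrless\eps$ (i.e.\ your threshold $t=\eps$), applies Lemma~\ref{lemma:bound-chromdens} in the high-density case to force $\chromdens{q,\setR}>\xi$, and in the low-density case uses the same distance chain to obtain $\dnn{q,\setR}<\frac{1+\eps}{1+\xi}\,\dnn{q,\setP}$. The only cosmetic difference is that the paper commits up front to the explicit value $\alpha=(\eps\xi+3\xi+2)/(\eps-\xi)$, which simultaneously makes both of your inequalities tight at $t=\eps$; your general-$t$ discussion and the suggestion $t=(\xi+\eps)/2$ are unnecessary once you observe (as you do) that $t=\eps>\xi$ already suffices.
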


\begin{proof}
This follows from similar arguments to the ones described in the proof of Theorem~\ref{thm:coreset:nn}. Instead, here we set $\alpha = (\eps\kern1pt\xi +3\xi + 2)/(\eps - \xi)$. Let \setR be an $\alpha$-selective subset of \setP, and $q \in \metricSet$ any query point in the metric space, consider the same two cases:
\vspace*{5pt}
\begin{description}
    \setlength{\itemsep}{5pt}
    \item[Case 1 (Correct-Classification guarantee):] If $\chromdens{q,\setP} \geq \eps$.\\Consider the bound derived in Lemma~\ref{lemma:bound-chromdens}. By our assumption that $\chromdens{q,\setP} \geq \eps > 0$, and since $\alpha \geq 0$, the following inequality holds true:
    \[
    \chromdens{q,\setR}
        > \frac{\alpha\,\chromdens{q,\setP} - 2}{\chromdens{q,\setP} + \alpha + 3}
        \geq \frac{\alpha\eps - 2}{\eps + \alpha + 3}
    \]
    Based on this, it is easy to see that the assignment of $\alpha = (\eps\kern1pt\xi +3\xi + 2)/(\eps - \xi)$ implies that $\chromdens{q,\setR} > \xi$, meaning that any of $q$'s $\xi$-approximate nearest-neighbors in \setR belong to the same class as $q$'s nearest-neighbor in \setP. Intuitively, this guarantees that $q$ is correctly classified by the $\xi$-\ANN rule in \setR.

    \item[Case 2 ($\eps$-Approximation guarantee):] If $\chromdens{q,\setP} < \eps$.\\
    The assignment of $\alpha$ implies that $\dnn{q,\setR} < \frac{1+\eps}{1+\xi}\,\dnn{q,\setP}$. This means that an $\xi$-\ANN query for $q$ in \setR, will return one of $q$'s $\eps$-approximate nearest-neighbors in \setP.
\end{description}
\vspace*{5pt}
%All together, this implies that by appropriately setting the condensation $\alpha$, the subset \setR is an $(\xi,\eps)$-coreset for the nearest-neighbor rule on \setP.
All together, this implies that \setR is an $(\xi,\eps)$-coreset for the nearest-neighbor rule on \setP.
\end{proof}

In contrast with standard condensation criteria, these new results provide guarantees on either approximation or the correct classification, of any query point in the metric space. This is true even for query points that were ``hard'' to classify with the entire training set, formally defined as query points with low chromatic density. Consequently, Theorems~\ref{thm:coreset:nn}~and~\ref{thm:coreset:ann} show that $\alpha$ must be set to large values if we hope to provide any sort of guarantees for these query points.
However, better results can be achieved by restricting the set of points that are guaranteed to be correctly classified. This relates to the notion of \emph{weak} coresets, which provide approximation guarantees only for a subset of the possible queries. Given $\beta \geq 0$, we define $\mathcal{Q}_\beta$ as the set of query points in $\metricSet$ whose chromatic density with respect to \setP is at least $\beta$ (\ie $\mathcal{Q}_\beta = \{ q \in \metricSet \mid \chromdens{q,\setP} \geq \beta \}$). The following result describes the trade-off between $\alpha$ and $\beta$ to guarantee the correct classification of query points in $\mathcal{Q}_\beta$ after condensation.

\begin{theorem}
\label{thm:coreset:weak}
Any $\alpha$-consistent subset is a weak $\eps$-coreset for the nearest-neighbor rule for queries in $\mathcal{Q}_\beta$, for $\beta = 2/\alpha$. Moreover, all query points in $\mathcal{Q}_\beta$ are correctly classified.
\end{theorem}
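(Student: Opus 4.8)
The plan is to obtain the theorem as a direct specialization of Lemma~\ref{lemma:bound-chromdens}. Let \setR be an arbitrary $\alpha$-consistent subset of \setP and put $\beta = 2/\alpha$. Fix any query point $q \in \mathcal{Q}_\beta$; by definition of $\mathcal{Q}_\beta$ this means $\chromdens{q,\setP} \geq \beta = 2/\alpha$, i.e.\ $\alpha\,\chromdens{q,\setP} - 2 \geq 0$. Since chromatic densities are nonnegative and $\alpha \geq 0$, the denominator $\chromdens{q,\setP} + \alpha + 3$ appearing in Lemma~\ref{lemma:bound-chromdens} is strictly positive, so the bound $\chromdens{q,\setR} > (\alpha\,\chromdens{q,\setP} - 2)/(\chromdens{q,\setP} + \alpha + 3)$ immediately yields $\chromdens{q,\setR} > 0$.

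Next I would translate $\chromdens{q,\setR} > 0$ into a classification statement. By the definition of chromatic density this inequality says $\dnn{q,\setR} < \dne{q,\setR}$; since the nearest-enemy distance of a query point $q$ is measured relative to the label of $q$'s nearest neighbor in the original set \setP, this forces $\nn{q,\setR}$ to carry the label $l(\nn{q,\setP})$. Hence the nearest-neighbor rule over \setR assigns $q$ exactly the label it receives from the nearest-neighbor rule over all of \setP, i.e.\ $q$ is correctly classified in the sense of the footnote. This establishes the ``moreover'' part for every $q \in \mathcal{Q}_\beta$.

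Finally, to conclude that \setR is a weak $\eps$-coreset for queries in $\mathcal{Q}_\beta$, note that $\nn{q,\setP}$ is trivially an $\eps$-approximate nearest-neighbor of $q$ in \setP for any $\eps \geq 0$; since $q$'s exact nearest-neighbor in \setR shares the label of $\nn{q,\setP}$, the defining property of an $\eps$-coreset holds for $q$, and this holds for all $q \in \mathcal{Q}_\beta$. I do not expect a genuine obstacle: the whole argument is a substitution into Lemma~\ref{lemma:bound-chromdens} together with the observation that ``correct classification'' is the strongest possible form of the coreset guarantee. The one point needing care is the notational convention that a query point's nearest enemy is taken with respect to the label of its nearest neighbor in the original training set --- this is what permits reading $\chromdens{q,\setR} > 0$ as correct classification --- and once that is made explicit the weak-coreset conclusion is immediate because the true nearest neighbor is always a legitimate $\eps$-ANN.
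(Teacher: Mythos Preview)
Your proposal is correct and matches the paper's own argument essentially line for line: both invoke Lemma~\ref{lemma:bound-chromdens} with $\chromdens{q,\setP} \geq 2/\alpha$ to conclude $\chromdens{q,\setR} > 0$, then observe that correct classification is automatically an $\eps$-coreset guarantee for any $\eps$. Your explicit remark about the convention that $q$'s enemies are defined relative to the label of $\nn{q,\setP}$ is a useful clarification that the paper leaves implicit.
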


The proof of this theorem is rather simple, and follows the same arguments outlined in Case 1 of the proof of Theorem~\ref{thm:coreset:nn}. Basically, we use Lemma~\ref{lemma:bound-chromdens} to show that for any query point $q \in \mathcal{Q}_\beta$, $q$'s chromatic density after condensation is greater than zero if $\alpha \beta \geq 2$. Note that $\eps$ plays no role in this result, as the guarantee on query points of $\mathcal{Q}_\beta$ is of correct classification (\ie the class of its \emph{exact} nearest-neighbor in \setP), rather than an approximation.

The trade-off between $\alpha$ and $\beta$ is illustrated in Figure~\ref{fig:weakcoreset}. From an initial training set $\setP \subset \RE^2$ (Figure~\ref{fig:weakcoreset:dataset}), we show the regions of $\RE^2$ that comprise the sets $\mathcal{Q}_\beta$ for $\beta = 2/\alpha$, using $\alpha = \{ 0.1, 0.2, \sqrt{2} \}$ (Figures~\ref{fig:weakcoreset:20}-\ref{fig:weakcoreset:sqrt2}).
While evidently, increasing $\alpha$ guarantees that more query points will be correctly classified after condensation, this example demonstrates a phenomenon commonly observed experimentally: most query points lie far from enemy points, and thus have high chromatic density with respect to \setP. Therefore, while Theorem~\ref{thm:coreset:nn} states that $\alpha$ must be set to $2/\eps$ to provide approximation guarantees on all query points, Theorem~\ref{thm:coreset:weak} shows that much smaller values of $\alpha$ are sufficient to provide guarantees on some query points, as evidenced in the example in Figure~\ref{fig:weakcoreset}.

\begin{figure*}[h!]
    \vspace*{.2cm}
    \centering
    \begin{subfigure}[b]{.25\linewidth}
        \centering\includegraphics[width=.9\textwidth]{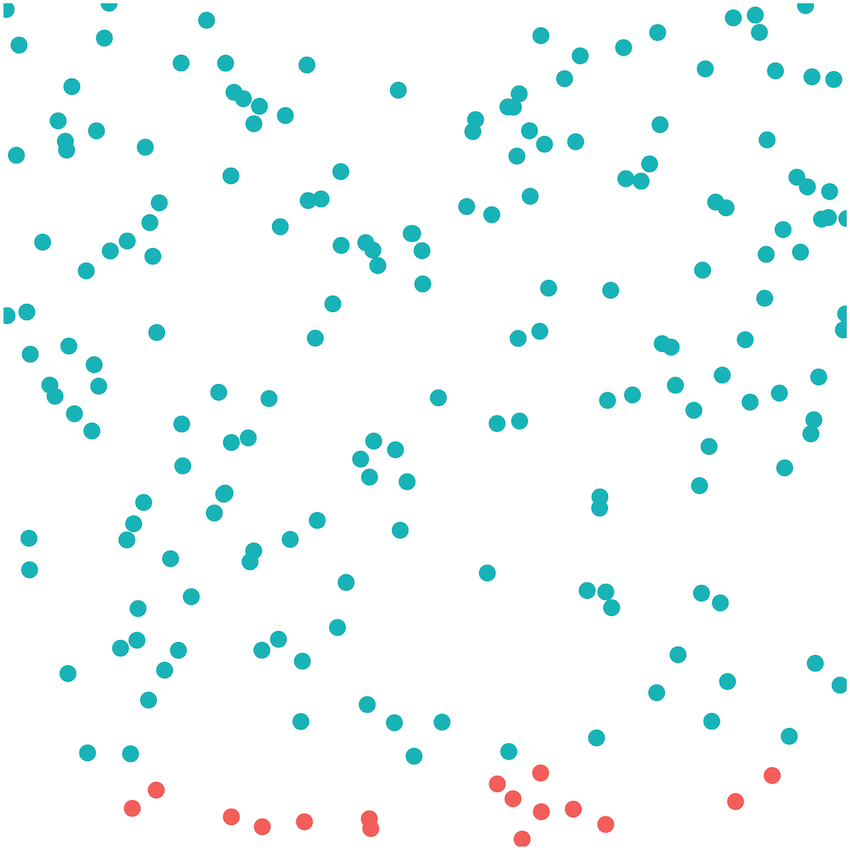}
        \caption{Training set (200\,pts)}\label{fig:weakcoreset:dataset}
    \end{subfigure}%
    \begin{subfigure}[b]{.25\linewidth}
        \centering\includegraphics[width=.9\textwidth]{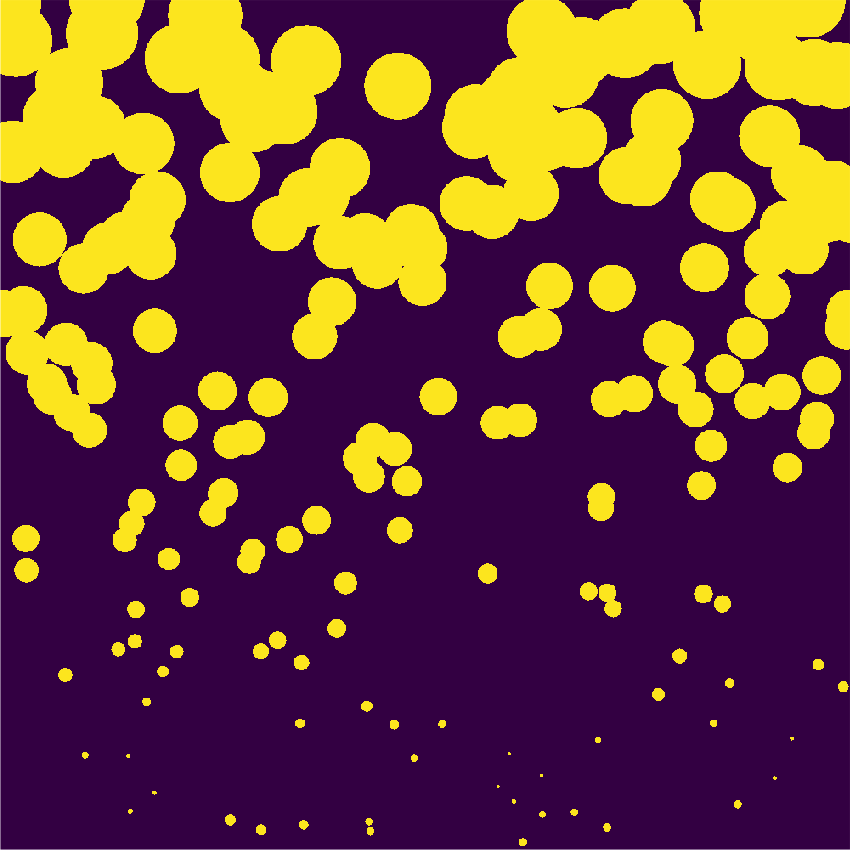}
        \caption{$\mathcal{Q}_{2/\alpha}$ for $\alpha = 0.1$}\label{fig:weakcoreset:20}
    \end{subfigure}%
    \begin{subfigure}[b]{.25\linewidth}
        \centering\includegraphics[width=.9\textwidth]{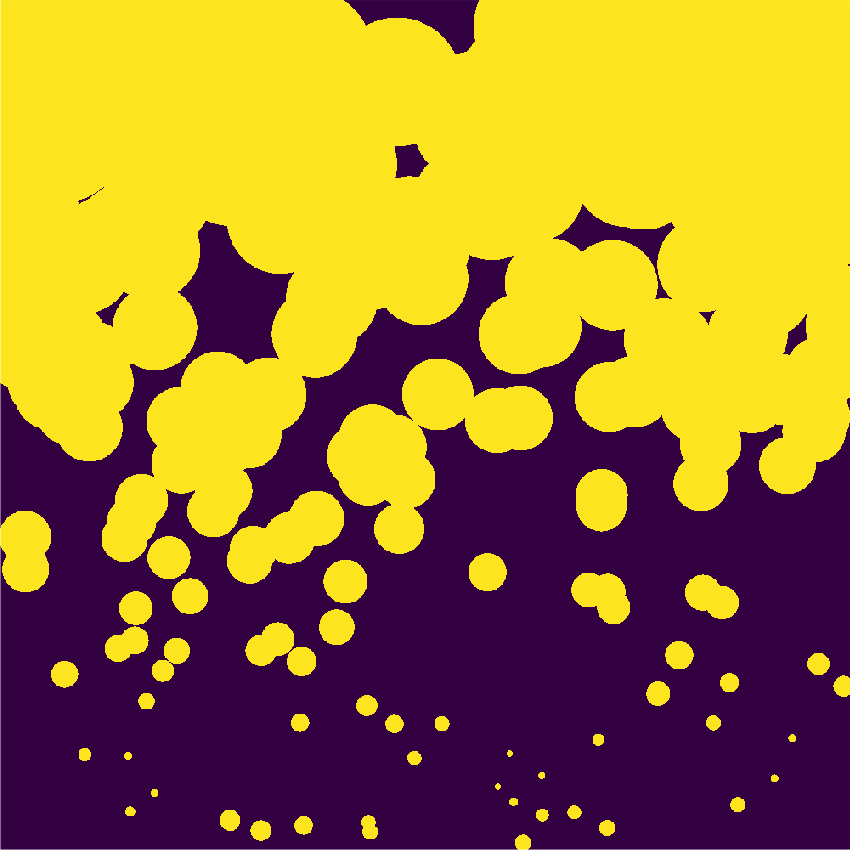}
        \caption{$\mathcal{Q}_{2/\alpha}$ for $\alpha = 0.2$}\label{fig:weakcoreset:10}
    \end{subfigure}%
    \begin{subfigure}[b]{.25\linewidth}
        \centering\includegraphics[width=.9\textwidth]{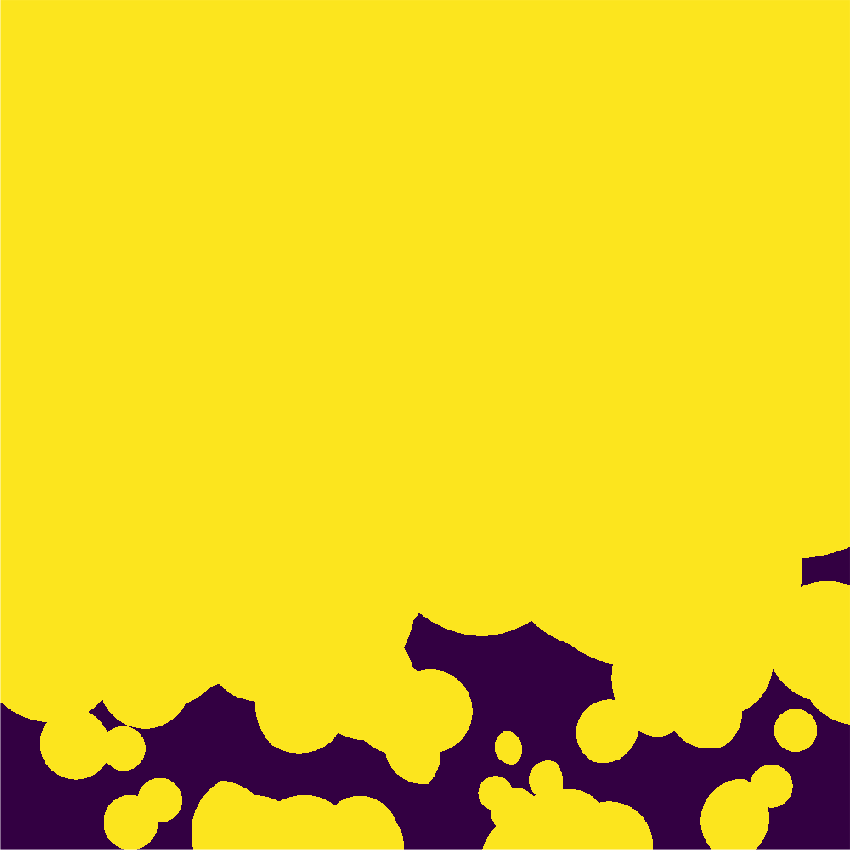}
        \caption{$\mathcal{Q}_{2/\alpha}$ for $\alpha = \sqrt{2}$}\label{fig:weakcoreset:sqrt2}
    \end{subfigure}%
    \caption{Depiction of the $\mathcal{Q}_\beta$ sets for which any $\alpha$-consistent subset is weak coreset ($\beta = 2/\alpha$). Query points in the \emph{yellow}~{\color{yellowcd}$\bullet$} areas are inside $\mathcal{Q}_\beta$, and thus correctly classified after condensation. Query points in the \emph{blue}~{\color{bluecd}$\bullet$} areas are not in $\mathcal{Q}_\beta$, and have no guarantee of correct classification.}\label{fig:weakcoreset}
\end{figure*}

These results establish a clear connection between the problem of condensation and that of finding coresets for the nearest-neighbor rule, and provides a roadmap to prove Theorem~\ref{thm:coreset:main}. This is the first characterization of sufficient conditions to correctly classify any query point in $\metricSet$ after condensation, and not just the points in \setP (as the original consistency criteria implies).
In the following section, these existential results are matched with algorithms to compute $\alpha$-selective subsets of \setP of bounded cardinality.

%%%%%%%%%%%%%%%%%%%%%%%%%%%%%%%
%------ SECTION
%------ Complexity & Algorithms
%%%%%%%%%%%%%%%%%%%%%%%%%%%%%%%

\section{Coreset Computation}
\label{sec:computation}

%%%%%%%%%%%%%%%%%%%%%%%%
%------ SECTION
%------ Hardness Results
%%%%%%%%%%%%%%%%%%%%%%%%

\subsection{Hardness Results}
\label{sec:hardness}

Define \textsc{Min-$\alpha$-CS} to be the problem of computing an $\alpha$-consistent subset of minimum cardinality for a given training set \setP. Similarly, let \textsc{Min-$\alpha$-SS} be the corresponding optimization problem for $\alpha$-selective subsets.
Following known results from standard condensation~\cite{Wilfong:1991:NNP:109648.109673,Zukhba:2010:NPP:1921730.1921735,khodamoradi2018consistent}, when $\alpha$ is set to zero, the \textsc{Min-0-CS} and \textsc{Min-0-SS} problems are both known to be NP-hard. Being special cases of the general problems just defined, this implies that both \textsc{Min-$\alpha$-CS} and \textsc{Min-$\alpha$-SS} are NP-hard.

In this section, we present results related to the hardness of approximation of both problems, along with simple algorithmic approaches with tight approximation factors.

\begin{theorem}
\label{thm:hardness:min-alpha-cs}
The \textsc{Min-$\alpha$-CS} problem is \textup{NP}-hard to approximate in polynomial time within a factor of $2^{({\ddim \log{((1+\alpha)/\gamma)})}^{1-o(1)}}$.
\end{theorem}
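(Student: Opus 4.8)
The plan is to reduce from the known hardness of approximation for the standard minimum consistent subset problem (\textsc{Min-0-CS}), in the spirit of the \NET-based bounds of Gottlieb \etal~\cite{gottlieb2014near}, while carefully tracking how the condensation parameter $\alpha$ and the margin $\gamma$ enter the instance geometry. First I would recall that \textsc{Min-0-CS} is hard to approximate within a factor of the stated form when $\alpha = 0$; the key observation is that $\alpha$-consistency is morally the same as $0$-consistency in a rescaled metric, since the condition $\dnn{p,\setR} < \dne{p,\setR}/(1+\alpha)$ only changes the effective ``resolution'' at which a net must be built. So the strategy is: take a hard instance for the label-cover / set-cover-style source problem used in the $\alpha=0$ hardness proof, and embed it into a metric space where nearest-enemy distances are controlled so that producing an $\alpha$-consistent subset forces the algorithm to essentially solve the underlying covering problem, with the covering ``granularity'' now governed by $(1+\alpha)/\gamma$ rather than $1/\gamma$.

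The key steps, in order, would be: (1) Start from the gadget construction that yields the $\alpha=0$ bound $2^{(\ddim \log(1/\gamma))^{1-o(1)}}$, which builds a point set whose consistent subsets correspond to covers and whose margin is $\gamma$. (2) Rescale or re-parametrize the enemy distances in the gadget by a factor of $(1+\alpha)$: concretely, spread out same-class clusters (or equivalently contract enemy separations) so that the $\alpha$-consistency requirement $\dnn{p,\setR}(1+\alpha) < \dne{p,\setR}$ becomes exactly the $0$-consistency requirement of the original gadget at the new scale. This replaces every occurrence of $\gamma$ in the analysis by $\gamma/(1+\alpha)$, hence $\log(1/\gamma)$ becomes $\log((1+\alpha)/\gamma)$. (3) Verify the reduction is polynomial-time and approximation-preserving: an $f$-approximate $\alpha$-consistent subset of the embedded instance yields an $f$-approximate solution to the source problem, so any polynomial-time algorithm beating the factor $2^{(\ddim \log((1+\alpha)/\gamma))^{1-o(1)}}$ would contradict the assumed hardness of the source. (4) Check that the doubling dimension of the embedded instance is still $\ddim$ (or changes only by a constant absorbed into the $o(1)$), since the bound depends on $\ddim$ inside the exponent.

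The main obstacle I expect is step (2): ensuring that the rescaling genuinely makes $\alpha$ appear multiplicatively on $1/\gamma$ \emph{inside} the exponent, rather than merely as a lower-order additive term, while simultaneously not blowing up the doubling dimension or the spread in a way that weakens the bound. One has to be careful that the gadget's internal distances (within a class and between classes) can be simultaneously tuned: the enemy distance must shrink by $(1+\alpha)$ relative to the nearest-neighbor spacing that a valid $\alpha$-consistent cover must achieve, and this must hold uniformly for every point $p \in \setP$, including ``boundary'' points of the gadget. A secondary subtlety is handling the $o(1)$ term and the regime where $\alpha$ is large (say $\alpha = 2/\eps$ as needed for coresets): there one should confirm that $\log((1+\alpha)/\gamma)$ is the right combined quantity and that the reduction still runs in time polynomial in the instance size, which may require $\alpha$ to be at most polynomially large — a mild assumption worth stating explicitly. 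Once the embedding and its parameters are pinned down, the approximation-preservation argument is routine and mirrors the $\alpha = 0$ case verbatim.
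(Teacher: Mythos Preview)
Your proposal is correct and matches the paper's own approach essentially verbatim: the paper also omits a full proof and states only that the result follows by modifying the Label Cover reduction of Gottlieb \etal~\cite{gottlieb2014near}, adjusting the distances so that every point has chromatic density at least $\alpha$, which shrinks the minimum nearest-enemy distance by a factor $1/(1+\alpha)$ and turns $\log(1/\gamma)$ into $\log((1+\alpha)/\gamma)$ in the exponent. Your steps (1)--(4) and the caveats you flag about preserving the doubling dimension and approximation ratio are exactly the points the paper glosses over, so your sketch is if anything more detailed than the paper's.
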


The full proof is omitted, as it follows from a modification of the hardness bounds proof for the \textsc{Min-0-CS} problem described in~\cite{gottlieb2014near}, which is based on a reduction from the \emph{Label Cover} problem. Proving Theorem~\ref{thm:hardness:min-alpha-cs} involves a careful adjustment of the distances in this reduction, so that all the points in the construction have chromatic density at least $\alpha$. Consequently, this implies that the minimum nearest-enemy distance is reduced by a factor of $1/(1+\alpha)$, explaining the resulting bound for \textsc{Min-$\alpha$-CS}.

The \NET algorithm~\cite{gottlieb2014near} can also be generalized to compute $\alpha$-consistent subsets of \setP as follows. We define \alphaNET as the algorithm that computes a $\gamma/(1+\alpha)$-net of \setP, where $\gamma$ is the smallest nearest-enemy distance in \setP. The covering property of nets~\cite{har2006fast} implies that the resulting subset is $\alpha$-consistent, while the packing property suggests that its cardinality is $\bigOh\left( ((1+\alpha)/\gamma)^{\ddim+1} \right)$, implying a tight approximation to the \textsc{Min-$\alpha$-CS} problem.

\begin{theorem}
\label{thm:hardness:min-alpha-ss}
The \textsc{Min-$\alpha$-SS} problem is \textup{NP}-hard to approximate in polynomial time within a factor of $(1-o(1))\ln{n}$ unless $\textup{NP} \subseteq \textup{DTIME}(n^{\log\log{n}})$.
\end{theorem}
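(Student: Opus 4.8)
The plan is to prove hardness of approximation for \textsc{Min-$\alpha$-SS} by reduction from the minimum cardinality consistent (or selective) subset problem, or more directly from \textsc{Set Cover} / \textsc{Dominating Set}, mirroring the $\ln n$-inapproximability of \textsc{Set Cover} due to Dinur–Steurer (which gives exactly the $(1-o(1))\ln n$ factor under $\textup{NP} \not\subseteq \textup{DTIME}(n^{\log\log n})$). First I would recall the known reduction used to show that \textsc{Min-0-SS} is NP-hard (from~\cite{Zukhba:2010:NPP:1921730.1921735} or~\cite{khodamoradi2018consistent}), which encodes a \textsc{Set Cover} instance as a two-class point set where selecting a point corresponds to picking a set, and consistency/selectivity forces a cover. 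The key observation is that in that construction the selected subset is in bijection (up to an additive constant) with a set cover, so an $f$-approximation for \textsc{Min-0-SS} yields an $f$-approximation for \textsc{Set Cover}, transferring the $(1-o(1))\ln n$ lower bound.

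The main technical step is to adapt this reduction so that it produces an instance where the bound holds for general $\alpha \geq 0$, not just $\alpha = 0$. Following the same idea as in Theorem~\ref{thm:hardness:min-alpha-cs}, I would rescale the distances in the gadget: shrink the ``within-class'' distances (those between a point and its prospective selective representative) by a factor of roughly $1/(1+\alpha)$ relative to the nearest-enemy distances, so that any subset that is $\alpha$-selective with respect to the shrunken configuration is exactly a subset that is selective (i.e., $0$-selective) with respect to the original configuration. Because $\alpha$-selectiveness is defined through $\dne{p,\setP}$ (the nearest-enemy distance in the \emph{full} set \setP, which is fixed by the construction and not affected by which points we select), this rescaling cleanly decouples the $(1+\alpha)$ factor from the combinatorial core of the reduction — the set of feasible solutions is unchanged, only the metric is stretched. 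Hence the optimum cardinality is preserved and the approximation hardness carries over verbatim.

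The step I expect to be the main obstacle is verifying that the rescaled point configuration is still realizable as a metric (or as a Euclidean / $\ell_p$ point set, if we want the hardness to hold in those restricted settings) and that no unintended short nearest-enemy or nearest-neighbor distances are created by the rescaling — in particular, one must check that shrinking the intra-class distances does not accidentally make some point's true nearest neighbor in a candidate subset be an enemy, or collapse points that must remain distinct. This is a careful case analysis over the pairs of points in the gadget, analogous to the ``careful adjustment of the distances'' remarked upon for Theorem~\ref{thm:hardness:min-alpha-cs}. Once that is in place, the reduction is approximation-preserving and the theorem follows. I would also note (as the paper does for the consistent case via \alphaNET) that this hardness is essentially tight, since a greedy set-cover-style argument gives an $\bigOh(\log n)$-approximation for \textsc{Min-$\alpha$-SS}, which is exactly what the $\alpha$-selective algorithms developed later in the paper achieve.
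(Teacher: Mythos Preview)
Your proposal is correct and follows essentially the same route as the paper: reduce from \textsc{Dominating Set} (equivalently \textsc{Set Cover}), tune the distances so that the $\alpha$-selective condition $\dnn{p,\setR} < \dne{p,\setP}/(1+\alpha)$ coincides exactly with the domination condition, and invoke Feige's $(1-o(1))\ln n$ inapproximability under $\textup{NP}\not\subseteq\textup{DTIME}(n^{\log\log n})$.

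The one presentational difference worth noting is that the paper does not start from an existing $\alpha=0$ reduction and rescale it; instead it builds the instance directly as the shortest-path metric on two colored copies $G_\textsf{r},G_\textsf{b}$ of $G$, with unit intra-copy edges and cross edges of weight $1+\alpha+\xi$. This buys exactly what you flagged as the ``main obstacle'': because a shortest-path metric is automatically a metric, there is nothing to verify about realizability or unintended short distances, and the correspondence is a clean multiplicative factor $|R|=2|D|$ (not additive). Your rescaling argument would work too, but the direct construction makes the whole case analysis you anticipated unnecessary. Also, the complexity assumption you want here is Feige's, not Dinur--Steurer's (the latter strengthens the assumption to $\textup{P}\neq\textup{NP}$, which does not match the statement as written).
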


%This result follows from the hardness of another related covering problem: the mi\-nimum \emph{dominating set}~\cite{feige1998threshold,paz1981non,lund1994hardness}. The full proof is given in Appendix~\ref{sec:extra}, and involves a simple L-reduction from any instance of this problem to an instance of \textsc{Min-$\alpha$-SS}, which preserves the approximation ratio. The general idea is as follows: given a graph $G$, create an edge-weighted graph with two copies of $G$, such that every node and its corresponding copy are connected by an edge of length $1+\alpha$, and the original edges have length 1. Thus, solving the \textsc{Min-$\alpha$-SS} problem on this graph, using the shortest-path metric, and labeling the nodes of one copy of $G$ as \emph{blue} and the others \emph{red}, essentially solves the minimum dominating set problem on $G$.

\begin{proof}
The result follows from the hardness of another related covering problem: the mi\-nimum \emph{dominating set}~\cite{feige1998threshold,paz1981non,lund1994hardness}. We describe a simple L-reduction from any instance of this problem to an instance of \textsc{Min-$\alpha$-SS}, which preserves the approximation ratio.

\begin{enumerate}
    \item Consider any instance of minimum dominating set, consisting of the graph $G=(V,E)$.
    \item Generate a new edge-weighted graph $G'$ as follows:\\
        Create two copies of $G$, namely $G_\textsf{r}=(V_\textsf{r},E_\textsf{r})$ and $G_\textsf{b}=(V_\textsf{b},E_\textsf{b})$, of \emph{red} and \emph{blue} nodes respectively. Set all edge-weights of $G_\textsf{r}$ and $G_\textsf{b}$ to be 1. Finally, connect each red node $v_\textsf{r}$ to its corresponding blue node $v_\textsf{b}$ by an edge $\{v_\textsf{r},v_\textsf{b}\}$ of weight $1+\alpha+\xi$ for a sufficienly small constant $\xi>0$. Formally, $G'$ is defined as the edge-weighted graph $G' =(V',E')$ where the set of nodes is $V' = V_\textsf{r} \cup V_\textsf{b}$, the set of edges is $E' = E_\textsf{r} \cup E_\textsf{r} \cup \{ \{v_\textsf{r},v_\textsf{b}\} \mid v \in V \}$, and an edge-weight function $w : E' \rightarrow \RE^+$ where $w(e) = 1$ iff $e \in E_\textsf{r} \cup E_\textsf{b}$, and $w(e) = 1+\alpha+\xi$ otherwise.
    \item A labeling function $l$ %$l : V' \rightarrow \{\emph{red},\emph{blue}\}$,
    where $l(v) = \emph{red}$ iff $v \in V_\textsf{r}$, and $l(v) = \emph{blue}$ iff $v \in V_\textsf{b}$.
    \item Compute the shortest-path metric of $G'$, denoted as $\metricFunc_{G'}$.
    \item Solve the \textsc{Min-$\alpha$-SS} problem for the set $V'$, on metric $\metricFunc_{G'}$, and the labels defined by $l$.
\end{enumerate}

A dominating set of $G$ consists of a subset of nodes $D \subseteq V$, such that every node $v \in V \setminus D$ is adjacent to a node in $D$.
Given any dominating set $D \subseteq V$ of $G$, it is easy to see that the subset $R = \{ v_\textsf{r}, v_\textsf{b} \mid v \in D\}$ is an $\alpha$-selective subset of $V'$, where $|R| = 2|D|$. Similarly, given an $\alpha$-selective subset $R \subseteq V'$, there is a corresponding dominating set $D$ of $G$, where $|D| \leq |R|/2$, as $D$ can be either $R \cap V_\textsf{r}$ or $R \cap V_\textsf{b}$. Therefore, \textsc{Min-$\alpha$-SS} is as hard to approximate as the minimum dominating set problem.
\end{proof}

There is a clear connection between the \textsc{Min-$\alpha$-SS} problem and covering problems, in particular that of finding an optimal hitting set. Given a set of elements $U$ and a family $C$ of subsets~of~$U$, a \emph{hitting set} of $(U,C)$ is a subset $H \subseteq U$ such that every set in $C$ contains at least one element of $H$. Therefore, let $N_{p,\alpha}$ be the set of points of \setP whose distance to $p$ is less~than $\dne{p}/(1+\alpha)$, then any hitting set of $(\setP, \{N_{p,\alpha} \mid p \in \setP\})$ is also an $\alpha$-selective subset of \setP, and vice versa. This simple reduction implies a $\bigOh(n^3)$ worst-case time $\bigOh(\log{n})$-approximation algorithm for \textsc{Min-$\alpha$-SS}, based on the classic greedy algorithm for set cover~\cite{10.2307/3689577,Slavik:1996:TAG:237814.237991}.
Call this approach \alphaHSS or \emph{$\alpha$-Hitting Selective Subset}.
It follows from Theorem~\ref{thm:hardness:min-alpha-ss} that for training sets in general metric spaces, this is the best approximation possible under standard complexity assumptions.

While both \alphaNET and \alphaHSS compute tight approximations of their corresponding problems, their performance in practice does not compare to heuristic approaches for standard condensation (see Section~\ref{sec:experiments} for experimental results). Therefore, in the following sections, we consider two practical algorithms for this problem, namely \FCNN and \RSS, and extend them to compute subsets with the newly defined criteria.

%%%%%%%%%%%%%%%%%%%%%%%%%%%%%
%------ SUBSECTION
%------ Selective Algorithm
%%%%%%%%%%%%%%%%%%%%%%%%%%%%%

\subsection{An Algorithm for $\alpha$-Selective Subsets}
\label{sec:algorithm:selective}

For standard condensation, the \RSS algorithm was recently proposed~\cite{DBLP:conf/cccg/Flores-VelazcoM19} to compute selective subsets. It runs in quadratic worst-case time and exhibits good performance in practice. The selection process of this algorithm is heuristic in nature and can be described as follows: beginning with an empty set, the points in $p \in \setP$ are examined in increasing order with respect to their nearest-enemy distance $\dne{p}$. The point $p$ is added to the subset \setR if $\dnn{p,\setR} \geq \dne{p}$. It is easy to see that the resulting subset is selective.

We define a generalization, called \alphaRSS, to compute $\alpha$-selective subsets of \setP. The condition to add a point $p \in \setP$ to the selected subset checks if any previously selected point is closer to $p$ than $\dne{p}/(1+\alpha)$, instead of just $\dne{p}$. See Algorithm~\ref{alg:alpha-rss} for a formal description, and Figure~\ref{fig:alpha:rss:process} for an illustration. It is easy to see that this algorithm computes an $\alpha$-selective subset, while keeping the quadratic time complexity of the original \RSS algorithm.

\begin{algorithm}
 \DontPrintSemicolon
 \vspace*{0.1cm}
 \KwIn{Initial training set \setP and parameter $\alpha \geq 0$}
 \KwOut{$\alpha$-selective subset $\setR \subseteq \setP$}
 $\setR \gets \phi$\;
 Let $\left\lbrace p_i \right\rbrace^n_{i=1}$ be the points of \setP sorted increasingly \wrt $\dne{p_i}$\;
 \ForEach{$p_i \in \setP$, where $i = 1\dots n$}{
  \If{$\dnn{p_i, \setR} \geq \dne{p_i}/(1+\alpha)$}{
   $\setR \gets \setR \cup \left\lbrace p_i \right\rbrace$\;
   }
 }
 \KwRet{\setR}
 \vspace*{0.1cm}
 \caption{\alphaRSS}
 \label{alg:alpha-rss}
\end{algorithm}

Naturally, we want to analyze the number of points this algorithm selects. The remainder of this section establishes upper-bounds and approximation guarantees of the \alphaRSS algorithm for any doubling metric space, with improved results in the Euclidean space. This resolves the open problem posed in~\cite{DBLP:conf/cccg/Flores-VelazcoM19} of whether \RSS computes an approximation of the \textsc{Min-0-CS} and \textsc{Min-0-SS}~problems.

%%%%%%%%%%%%%%%%%%%%%%%
%------ SUBSECTION
%------ Doubling Spaces
%%%%%%%%%%%%%%%%%%%%%%%

\subparagraph*{Size in Doubling spaces.}
First, we consider the case where the underlying metric space \metricSpace of \setP is doubling. The following results depend on the doubling dimension $\ddim$ of the metric space (which is assumed to be constant), the margin $\gamma$ (the smallest nearest-enemy distance of any point in \setP), and $\numNE$ (the number of nearest-enemy points in \setP).

\begin{theorem}
\label{thm:rss-approx-factor-cs}
\alphaRSS computes a tight approximation for the \textsc{Min-$\alpha$-CS} problem.
\end{theorem}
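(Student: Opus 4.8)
**Proof proposal for Theorem~\ref{thm:rss-approx-factor-cs} (\alphaRSS computes a tight approximation for \textsc{Min-$\alpha$-CS}).**

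The plan is to establish both an upper bound on $|\setR|$, where $\setR$ is the subset produced by \alphaRSS, and a matching lower bound on the size of any $\alpha$-consistent subset (in particular the optimal one), and to observe that these bounds agree up to constant factors (absorbing the dependence on $\ddim$). For the upper bound, I would first exploit the greedy/packing structure of the algorithm: when a point $p_i$ is added to $\setR$, the condition $\dnn{p_i,\setR} \geq \dne{p_i}/(1+\alpha)$ guarantees that every pair of selected points has mutual distance at least $\dne{p}/(1+\alpha)$ for the one processed later; combined with the fact that points are examined in increasing order of $\dne{\cdot}$, this means the selected points, restricted to any ``scale band'' where $\dne{\cdot}$ lies in $[t, 2t)$, form a packing at scale $\Theta(t/(1+\alpha))$ inside balls of radius $\Theta(t)$ around their nearest enemies. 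Since there are only $\numNE$ distinct nearest-enemy points, and $\dne{\cdot}$ ranges over $[\gamma, 1]$ (diameter normalized to $1$), there are $\bigOh(\log(1/\gamma))$ such bands; within each band, the doubling property bounds the number of selected points associated to each of the $\numNE$ nearest-enemy points by $(1+\alpha)^{\bigOh(\ddim)} = \bigOh((1+\alpha)^{\ddim+1})$. This yields $|\setR| = \bigOh(\numNE \, \log(1/\gamma) \, (1+\alpha)^{\ddim+1})$, or a clean statement in terms of the spread $\spread$ using $1/\gamma \leq \spread$.

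For the lower bound, I would argue that \emph{any} $\alpha$-consistent subset must have size $\Omega(\numNE)$ — or more precisely, relate it to the \textsc{Min-$\alpha$-CS} optimum $\opt$ so that the ratio $|\setR|/\opt$ is bounded. The cleanest route: show that \alphaNET (the $\gamma/(1+\alpha)$-net) already gives a subset of size $\bigOh(((1+\alpha)/\gamma)^{\ddim+1})$ that is a tight approximation per the earlier discussion, and then show \alphaRSS's output is within a constant factor of that, OR directly: each of the $\numNE$ nearest-enemy ``witness'' points forces at least one selected point nearby in any $\alpha$-consistent subset, giving $\opt = \Omega(\numNE)$; meanwhile $|\setR| = \bigOh(\numNE \cdot \log(1/\gamma)(1+\alpha)^{\ddim+1})$. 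Since the hardness lower bound (Theorem~\ref{thm:hardness:min-alpha-cs}) and the \alphaNET upper bound together pin down the approximability of \textsc{Min-$\alpha$-CS} at roughly $2^{(\ddim\log((1+\alpha)/\gamma))^{1-o(1)}}$, the phrase ``tight approximation'' should mean: \alphaRSS matches \alphaNET's approximation guarantee (so it is as good as the best known, and no polynomial-time algorithm can do essentially better). I would therefore structure the proof as: (i) prove the packing/scale-band size bound on $|\setR|$; (ii) prove $\opt = \Omega(\numNE)$ or the appropriate lower bound; (iii) divide and conclude the approximation factor matches the hardness threshold up to the allowed slack, citing Theorem~\ref{thm:hardness:min-alpha-cs}.

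The main obstacle I anticipate is step (i), specifically controlling the interaction between the ordering by $\dne{\cdot}$ and the doubling-dimension packing argument: a selected point's ``separation guarantee'' is stated relative to \emph{its own} nearest-enemy distance, not a global scale, so naively the selected points are not a uniform net. The fix is the scale-banding trick — partition $\setP$ into $\bigOh(\log(1/\gamma))$ groups by the dyadic range of $\dne{\cdot}$, and within a group argue that all selected points sit in balls of comparable radius around one of $\numNE$ centers while being pairwise $\Omega(t/(1+\alpha))$-separated, so the doubling property applies cleanly. A secondary subtlety is making sure the ``later-processed point is far from the earlier one'' asymmetry still yields a genuine packing (it does, since a packing only needs \emph{some} lower bound on pairwise distances, and for any two selected points the one added later satisfies the bound against the earlier). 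I expect steps (ii) and (iii) to be short, essentially bookkeeping against the already-established hardness and \alphaNET results.
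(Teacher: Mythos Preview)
Your proposal mentions the right idea as one option (``show \alphaRSS's output is within a constant factor of \alphaNET'') but then does not pursue it; the structured plan you commit to in steps (i)--(iii) is a different route, and that route has a genuine gap.

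The gap is in step (ii): the claim $\opt = \Omega(\numNE)$ is false in general. Take red points at $(0,i)$ and blue points at $(L,i)$ for $i=1,\dots,n/2$ with $L$ large. Every point is the nearest enemy of its horizontal counterpart, so $\numNE = n$, yet the subset $\{(0,\lfloor n/4\rfloor),(L,\lfloor n/4\rfloor)\}$ is consistent (indeed $\alpha$-consistent for suitable $L$), so $\opt = 2$. Hence dividing your scale-band upper bound by $\numNE$ does not yield an approximation ratio against $\opt$.

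The paper's proof is much shorter than your plan and avoids both scale-banding and any lower bound on $\opt$. It compares \alphaRSS directly to \alphaNET: take the cells $B_{p,\alpha}$ of the $\gamma/(1+\alpha)$-net (which cover \setP), and bound $|B_{p,\alpha}\cap\setR|$ for the \alphaRSS output \setR. For any two points $a,b$ in that intersection with $\dne{a}\le\dne{b}$, the algorithm's acceptance condition for $b$ gives $\dist{a}{b}\ge\dne{b}/(1+\alpha)\ge\gamma/(1+\alpha)$, since $\gamma$ is a \emph{global} lower bound on all nearest-enemy distances. The cell has radius $\gamma/(1+\alpha)$, so a doubling packing bound gives $|B_{p,\alpha}\cap\setR|\le 2^{\ddim+1}$. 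Summing over the cells, $|\setR|\le 2^{\ddim+1}\cdot|\alphaNET|=\bigOh\bigl((2(1+\alpha)/\gamma)^{\ddim+1}\bigr)$, which matches the hardness threshold of Theorem~\ref{thm:hardness:min-alpha-cs} and hence is tight.

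The scale-banding machinery you describe is correct, but it is the proof of Theorem~\ref{thm:rss-size} (the $\bigOh(\numNE\log(1/\gamma)(1+\alpha)^{\ddim+1})$ bound), not of this theorem. For Theorem~\ref{thm:rss-approx-factor-cs} the single global scale $\gamma/(1+\alpha)$ suffices, and no lower bound on $\opt$ is needed because the comparison is to \alphaNET's output size rather than to $\opt$ directly.
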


\begin{proof}
This follows from a direct comparison to the resulting subset of the \alphaNET algorithm from the previous section. For any point $p$ selected by \alphaNET, let $B_{p,\alpha}$ be the set of points of \setP ``covered'' by $p$, that is, whose distance to $p$ is at most $\gamma/(1+\alpha)$. By the covering property of $\eps$-nets, this defines a partition on \setP when considering every point $p$ selected by \alphaNET.

Let \setR be the set of points selected by \alphaRSS, we analyze the size of $B_{p,\alpha} \cap \setR$, that is, for any given $B_{p,\alpha}$ how many points could have been selected by the \alphaRSS algorithm. Let $a, b \in B_{p,\alpha} \cap \setR$ be any two such points, where without loss of generality, $\dne{a} \leq \dne{b}$. By the selection process of the algorithm, we know that $\dist{a}{b} \geq \dne{b}/(1+\alpha) \geq \gamma/(1+\alpha)$. A simple packing argument in doubling metrics implies that $|B_{p,\alpha} \cap \setR| \leq 2^{\ddim+1}$. Altogether, we have that the size of the subset selected by \alphaRSS is $\bigOh\left( (2(1+\alpha)/\gamma)^{\ddim+1} \right)$.
\end{proof}

\begin{figure}[t!]
    \centering
    \includegraphics[width=.75\textwidth]{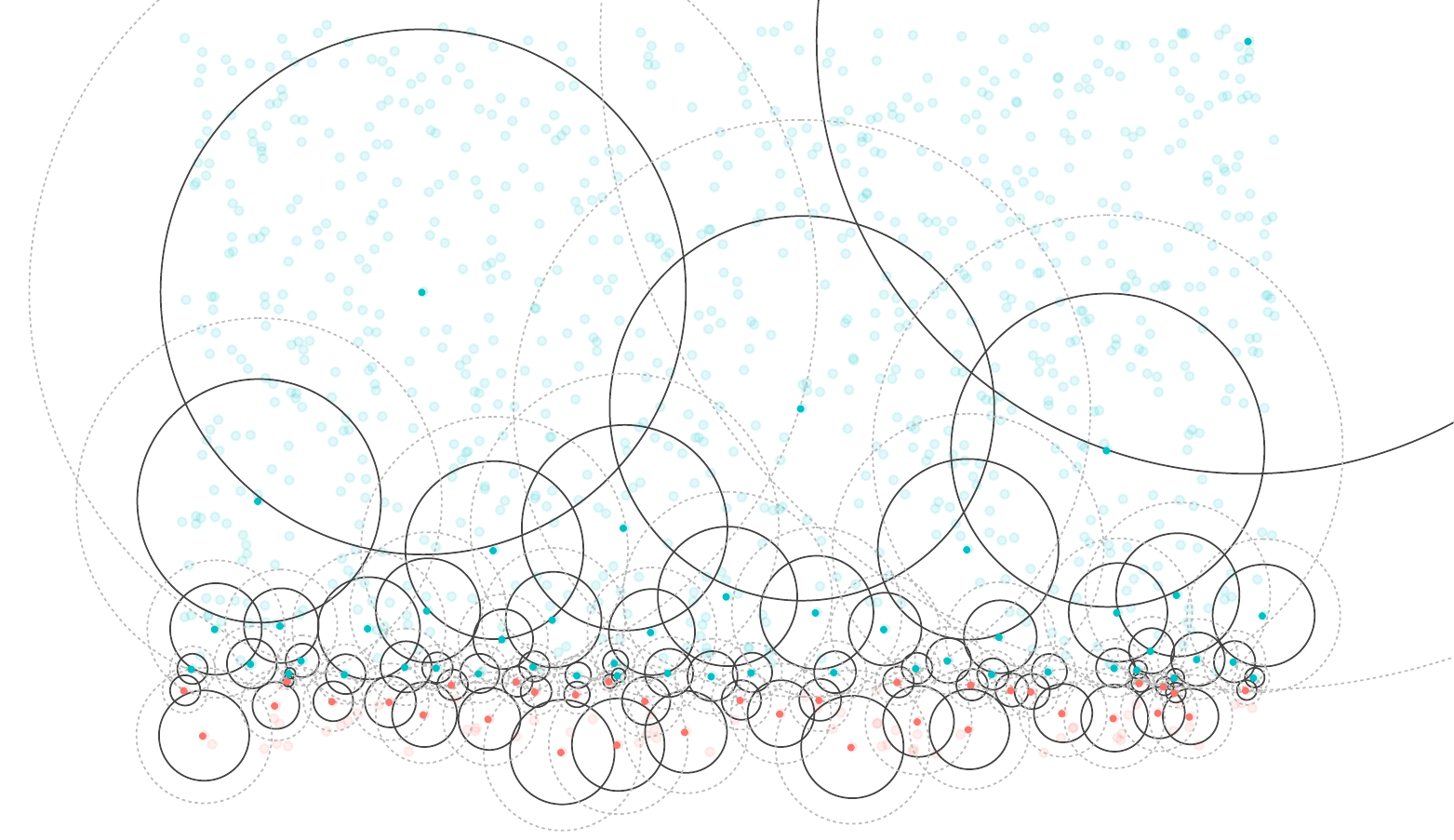}
    \caption{Selection of \alphaRSS for $\alpha\texttt{=}0.5$. Faded points are not selected, while selected points~are drawn along with a ball of radius $\dne{p}$ (dotted outline) and a ball of radius $\dne{p}/(1+\alpha)$ (solid outline). A point $p$ is selected if no previously selected point is closer to $p$ than $\dne{p}/(1+\alpha)$.}
	\label{fig:alpha:rss:process}
\end{figure}

\begin{theorem}
\label{thm:rss-approx-factor}
\alphaRSS computes an $\bigOh\left( \log{(\min{(1+2/\alpha,1/\gamma)})} \right)$-factor approximation for the \textsc{Min-$\alpha$-SS} problem. For $\alpha = \Omega(1)$, this is a constant-factor approximation.
\end{theorem}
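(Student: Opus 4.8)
The plan is a charging argument against an optimal $\alpha$-selective subset $R^{*}$ of \setP; write \setR for the subset produced by \alphaRSS. Since $R^{*}$ is $\alpha$-selective, every $r \in \setR \subseteq \setP$ has a point of $R^{*}$ within distance $\dne{r}/(1+\alpha)$, so we may charge $r$ to the point $r^{*} \in R^{*}$ nearest to it, giving $\dist{r}{r^{*}} = \dnn{r,R^{*}} < \dne{r}/(1+\alpha)$. Then $|\setR| = \sum_{r^{*}\in R^{*}} |Z_{r^{*}}|$, where $Z_{r^{*}}\subseteq\setR$ is the set of points charged to $r^{*}$, and it suffices to prove that $|Z_{r^{*}}| = \bigOh\bigl(\log(\min(1+2/\alpha,\,1/\gamma))\bigr)$ for every $r^{*}$.

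Fix $r^{*}$ and list $Z_{r^{*}} = \{r_{1},\dots,r_{m}\}$ in the order in which \alphaRSS examines them, i.e.\ with $\dne{r_{1}}\le\dots\le\dne{r_{m}}$. Two structural facts drive the bound. First, every $r_{i}$ carries the same label as $r^{*}$: otherwise $r^{*}$ would be an enemy of $r_{i}$ and $\dne{r_{i}}\le\dist{r_{i}}{r^{*}}<\dne{r_{i}}/(1+\alpha)$, a contradiction. Hence $\nenemy{r_{i}}$ is an enemy of $r^{*}$ and $\nenemy{r^{*}}$ is an enemy of $r_{i}$, and the triangle inequalities $\dne{r^{*}}\le\dist{r^{*}}{r_{i}}+\dne{r_{i}}$ and $\dne{r_{i}}\le\dist{r_{i}}{r^{*}}+\dne{r^{*}}$, combined with $\dist{r_{i}}{r^{*}}<\dne{r_{i}}/(1+\alpha)$, pin the nearest-enemy distance of $r_{i}$ to the window $\dne{r^{*}}\frac{1+\alpha}{2+\alpha} < \dne{r_{i}} < \dne{r^{*}}\frac{1+\alpha}{\alpha}$ (the upper estimate for $\alpha>0$). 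Thus any two nearest-enemy distances within $Z_{r^{*}}$ differ by a factor less than $1+2/\alpha$, and, since all of them lie in $[\gamma,1]$, by a factor at most $1/\gamma$; their ratio is therefore below $\min(1+2/\alpha,\,1/\gamma)$. Second, when \alphaRSS examines $r_{j}$ the earlier points $r_{1},\dots,r_{j-1}$ already lie in the running subset yet $r_{j}$ is still inserted, so $\dist{r_{i}}{r_{j}} \ge \dnn{r_{j},\setR} \ge \dne{r_{j}}/(1+\alpha)$ for all $i<j$: the points of $Z_{r^{*}}$ are mutually well separated.

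To finish, bucket $Z_{r^{*}}$ by the dyadic scale of the nearest-enemy distance, placing $r_{i}$ in bucket $k$ when $\dne{r_{i}}\in[2^{-k-1},2^{-k})$. By the first fact the number of nonempty buckets is $\bigOh\bigl(\log(\min(1+2/\alpha,\,1/\gamma))\bigr)$. Within a single bucket every point lies in the ball $B(r^{*},\,2^{-k}/(1+\alpha))$ while, by the second fact, pairwise distances are at least $2^{-k-1}/(1+\alpha)$; such a set has spread at most $4$, so the packing bound $|R|\le\lceil\spread_R\rceil^{\ddim+1}$ from the preliminaries (with $\ddim=\bigOh(1)$) caps each bucket at $\bigOh(1)$ points. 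Multiplying, $|Z_{r^{*}}| = \bigOh\bigl(\log(\min(1+2/\alpha,\,1/\gamma))\bigr)$, and summing over $R^{*}$ yields the claimed approximation factor. For $\alpha=\Omega(1)$ we have $\log(1+2/\alpha)=\bigOh(1)$, so the factor is constant; at $\alpha=0$ the window of the first fact collapses, leaving only the $1/\gamma$ term and recovering an $\bigOh(\log(1/\gamma))$ guarantee for \RSS.

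The step I expect to be the crux is the first structural fact. A single global packing argument around $r^{*}$ would only bound $|Z_{r^{*}}|$ by a quantity polynomial in $1/\alpha$ (raised to the $\ddim+1$ power); obtaining a logarithm instead requires both the observation that $Z_{r^{*}}$ is monochromatic and a careful enough use of the triangle inequalities to extract the tight window ratio $1+2/\alpha$, after which the dyadic bucketing converts a multiplicative window into an additive $\log$. The charging setup and the per-bucket packing count are routine.
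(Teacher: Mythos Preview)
Your proof is correct and follows essentially the same route as the paper: charge each point of \setR to a covering point of an optimal $\alpha$-selective subset, sandwich the nearest-enemy distances of the charged points in the multiplicative window of width $(2+\alpha)/\alpha = 1+2/\alpha$ via the triangle inequality (the paper derives the identical bounds $\frac{1+\alpha}{2+\alpha}\,\dne{p}\le\dne{q}\le\frac{1+\alpha}{\alpha}\,\dne{p}$), bucket dyadically by nearest-enemy distance, and apply a spread-$4$ packing bound inside each bucket. The only cosmetic differences are that you charge to the \emph{nearest} optimum point (which makes the partitioning clean) and you explicitly justify the monochromaticity of each charge class, a step the paper leaves implicit.
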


\begin{proof}
Let \textup{OPT}$_\alpha$ be the optimum solution to the \textsc{Min-$\alpha$-SS} problem, \ie the minimum cardinality $\alpha$-selective subset of \setP.
For every point $p \in \textup{OPT}_\alpha$ in such solution, define $S_{p,\alpha}$ to be the set of points in \setP ``covered'' by $p$, or simply $S_{p,\alpha} = \{ r \in \setP \mid \dist{r}{p} < \dne{r}/(1+\alpha) \}$.
Additionally, let \setR be the set of points selected by \alphaRSS, define $\setR_{p,\sigma}$ to be the points selected by \alphaRSS which also belong to $S_{p,\alpha}$ and whose nearest-enemy distance is between $\sigma$ and~$2\sigma$, for $\sigma \in [\gamma,1]$. That is, $\setR_{p,\sigma} = \{ r \in \setR \cap S_{p,\alpha} \mid \dne{r} \in [\sigma,2\sigma) \}$. Clearly, these subsets define a partitioning of \setR for all $p \in \textup{OPT}_\alpha$ and values of $\sigma = \gamma\,2^i$ for $i=\{0,1,2,\dots,\lceil\log{\frac{1}{\gamma}}\rceil\}$.

However, depending on $\alpha$, some values of $\sigma$ would yield empty $\setR_{p,\sigma}$ sets. Consider some point $q \in S_{p,\alpha}$, we can bound its nearest-enemy distance with respect to the nearest-enemy distance of point $p$. In particular, by leveraging simple triangle-inequality arguments, it is possible to prove that $\frac{1+\alpha}{2+\alpha} \,\dne{p} \leq \dne{q} \leq \frac{1+\alpha}{\alpha} \,\dne{p}$.
Therefore, the values of $\sigma$ for which $\setR_{p,\sigma}$ sets are not empty, are $\sigma = 2^j \,\frac{1+\alpha}{2+\alpha}\,\dne{p}$ for $j = \{0,\dots,\lceil\log{(1+2/\alpha)}\rceil\}$.

\begin{comment}
\alejandro{Derivation of the inequalities above.}
\begin{align*}
    \dne{q} &\geq (1+\alpha)\, \dist{p}{q}\\
        &\geq (1+\alpha) (\dist{q}{\nenemy{p}} - \dist{\nenemy{p}}{p})\\
        &\geq (1+\alpha) (\dne{q} - \dne{p})\\
    -\alpha \dne{q} &\geq - (1+\alpha)\, \dne{p}\\
    \dne{q} &\leq \frac{1+\alpha}{\alpha} \dne{p}
\end{align*}
\begin{align*}
    \dne{p} &\leq \dist{p}{\nenemy{q}}\\
        &\leq \dist{p}{q} + \dist{q}{\nenemy{q}}\\
        &= \dist{p}{q} + \dne{q}\\
        &\leq \dne{q}/(1+\alpha) + \dne{q}\\
        &= \dne{q}\, \left( 1 + \frac{1}{1+\alpha} \right) = \dne{q} \frac{2+\alpha}{1+\alpha}\\
    \dne{q} &\geq \dne{p} \frac{1+\alpha}{2+\alpha}
\end{align*}
\end{comment}

The proof now follows by bounding the size of $\setR_{p,\sigma}$ which can be achieved by bounding its spread. Thus, lets consider the smallest and largest pairwise distances among points in $\setR_{p,\sigma}$.
Take any two points $a,b \in \setR_{p,\sigma}$ where without loss of generality, $\dne{a} \leq \dne{b}$. Note that points selected by \alphaRSS cannot be ``too close'' to each other; that is, as $a$ and $b$ were selected by the algorithm, we know that $(1+\alpha)\,\dist{a}{b} \geq \dne{b} \geq \sigma$. Therefore, the smallest pairwise distance in $\setR_{p,\sigma}$ is at least $\sigma/(1+\alpha)$.
Additionally, by the triangle inequality, we can bound the maximum pairwise distance using their distance to $p$ as $\dist{a}{b} \leq \dist{a}{p} + \dist{p}{b} \leq 4\sigma / (1+\alpha)$.
Then, by the packing properties of doubling spaces, the size of $\setR_{p,\sigma}$ is at most $4^{\ddim+1}$.

Altogether, for every $p \in \textup{OPT}_\alpha$ there are up to $\lceil\log{(\min{(1+2/\alpha,1/\gamma)})}\rceil$ non-empty $\setR_{p,\sigma}$ subsets, each containing at most $4^{\ddim+1}$ points.
In doubling spaces with constant doubling dimension, the size of these subsets is also constant.%, which yields the proof of the theorem.
\end{proof}

%This bound is tight up to constant factors; \ie there exist a training set \setP in doubling space, for which \alphaRSS selects $\Theta \left( \log{\spread}\ |\textup{OPT}_\alpha| \right)$ points (see Figure~\ref{fig:tightexample:doubling}). Let $\xi>0$ be a sufficiently small constant, consider an edge-weighted graph $G$ as follows: two nodes $r_*$ and $b_*$ connected by an edge of weight $1+\alpha+\xi$, $m$ nodes $r_i$ each connected to $r_*$ by an edge of weight $2^i-\xi$, and another $m$ nodes $b_i$ similarly connected to $b_*$. Additionally, each corresponding $r_i$ and $b_i$ nodes are connected by an edge of weight $2^i(1+\alpha)$. Let \setP be the set of nodes of $G$ under the shortest-path metric, where all $r$ nodes are \emph{red} points, and all $b$ nodes are \emph{blue} nodes. This set has bounded doubling dimension, and spread $\Theta(2^m)$. Clearly, while $\textup{OPT}_\alpha = \{r_*, b_*\}$, \alphaRSS selects all the $\Theta(\log{\spread})$ points $r_i$ and $b_i$, except for $r_*$ and $b_*$.

%Additionally, using similar packing arguments to the ones in the proof of Theorem~\ref{thm:sfcnn-approx-factor}, we can show that for any training set \setP, the size of the subset selected by \alphaRSS is at most $\bigOh(2^{\ddim+1})$ times the size of the subset selected by $\alpha$-\NET. This implies that \alphaRSS also computes a tight approximation of the minimum cardinality $\alpha$-consistent subset of \setP, up to constant factors, for training sets with constant doubling dimension.

While these results are meaningful from a theoretical perspective, it is also useful to establishing bounds in terms of the geometry of the learning space, which is characterized by the boundaries between points of different classes. Thus, using similar packing arguments as above, we bound the selection size of the algorithm with respect to $\numNE$.

\begin{theorem}
\label{thm:rss-size}
\alphaRSS selects $\bigOh\left( \numNE \log{\frac{1}{\gamma}}\ (1+\alpha)^{\ddim+1}\right)$ points.
\end{theorem}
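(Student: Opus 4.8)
The plan is to refine the partitioning argument used in the proof of Theorem~\ref{thm:rss-approx-factor}, but instead of partitioning the selected set $\setR$ around points of an optimal solution $\textup{OPT}_\alpha$, I would partition it around the $\numNE$ nearest-enemy points of \setP. Recall that every point $r \in \setR$ has a nearest-enemy $\nenemy{r} \in \setP$, and the set of all such nearest-enemies has cardinality $\numNE$ by definition. So group the points of $\setR$ by which nearest-enemy point they are assigned to, giving $\numNE$ groups; then subdivide each group by a logarithmic scale of nearest-enemy distances, i.e. put $r$ into bucket $(e,i)$ if $\nenemy{r} = e$ and $\dne{r} \in [\gamma 2^i, \gamma 2^{i+1})$. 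Since all nearest-enemy distances lie in $[\gamma,1]$ (diameter is $1$, margin is $\gamma$), there are at most $\lceil \log \frac{1}{\gamma}\rceil$ values of $i$, hence at most $\numNE \lceil \log\frac1\gamma\rceil$ buckets in total. It remains to show each bucket holds $\bigOh\!\left((1+\alpha)^{\ddim+1}\right)$ points.

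For the per-bucket bound I would carry out a packing argument identical in spirit to the one in Theorem~\ref{thm:rss-approx-factor}. Fix a bucket with nearest-enemy point $e$ and scale $\sigma = \gamma 2^i$, and take any two points $a,b$ in that bucket with $\dne{a} \le \dne{b}$. \emph{Lower bound on $\dist{a}{b}$:} because both $a$ and $b$ survived the \alphaRSS selection test, whichever was considered second was not within $\dne{\cdot}/(1+\alpha)$ of the first, so $(1+\alpha)\dist{a}{b} \ge \dne{b} \ge \sigma$, giving $\dist{a}{b} \ge \sigma/(1+\alpha)$. \emph{Upper bound on $\dist{a}{b}$:} since $e$ is the nearest enemy of both, $\dist{a}{e} = \dne{a} < 2\sigma$ and $\dist{b}{e} = \dne{b} < 2\sigma$, so by the triangle inequality $\dist{a}{b} \le \dist{a}{e} + \dist{e}{b} < 4\sigma$. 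Hence the points in a bucket form a set of spread at most $4(1+\alpha)$, and by the standard packing bound $|R'| \le \lceil \spread_{R'}\rceil^{\ddim+1} = \bigOh\!\left((1+\alpha)^{\ddim+1}\right)$ in a doubling metric. Multiplying by the number of buckets yields the claimed bound $\bigOh\!\left(\numNE \log\frac1\gamma\,(1+\alpha)^{\ddim+1}\right)$.

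The main obstacle — really the only subtle point — is getting the upper bound on pairwise distance within a bucket to depend only on $\sigma$ (and not on $\dne{e}$ or other global quantities); this is exactly why the grouping is done by \emph{nearest-enemy point} rather than by which $S_{p,\alpha}$ the point falls in: using $\dist{a}{e} = \dne{a}$ directly, rather than routing through a covering point, keeps the diameter of each bucket tied to the scale $\sigma$. One should double-check the boundary conventions (half-open intervals $[\sigma,2\sigma)$, the $\lceil\cdot\rceil$ in the count of scales, and the ceiling in the packing bound) but these only affect constants hidden in the $\bigOh$. Everything else is a direct reuse of the packing machinery already established, so the proof is short.
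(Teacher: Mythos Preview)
Your proposal is correct and is essentially the same argument as the paper's proof: partition $\setR$ by nearest-enemy point and by dyadic scale of nearest-enemy distance, then bound each bucket via the separation $\dist{a}{b}\ge\sigma/(1+\alpha)$ from the \alphaRSS selection rule and the diameter $\dist{a}{b}\le\dne{a}+\dne{b}<4\sigma$ through the common nearest-enemy, yielding $\lceil 4(1+\alpha)\rceil^{\ddim+1}$ per bucket. The only cosmetic difference is notation; your discussion of why grouping by nearest-enemy (rather than by covering point in $\textup{OPT}_\alpha$) is the right choice here is spot on.
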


\begin{proof}
This follows from similar arguments to the ones used to prove Theorem~\ref{thm:rss-approx-factor}, using an alternative charging scheme for each nearest-enemy point in the training set.
Consider one such point $p \in \{ \nenemy{r} \mid r \in \setP\}$ and a value $\sigma \in [\gamma,1]$, we define $\setR'_{p,\sigma}$ to be the subset of points from \alphaRSS whose nearest-enemy is $p$, and their nearest-enemy distance is between $\sigma$ and $2\sigma$. That is, $\setR'_{p,\sigma} = \{ r \in \setR \mid \nenemy{r}=p \wedge \dne{r} \in [\sigma,2\sigma) \}$. These subsets partition \setR for all nearest-enemy points of \setP, and values of $\sigma = \gamma\, 2^i$ for $i=\{0,1,2,\dots,\lceil\log{\frac{1}{\gamma}}\rceil\}$.

For any two points $a, b \in \setR'_{p,\sigma}$, the selection criteria of \alphaRSS implies some separation between selected points, which can be used to prove that $\dist{a}{b} \geq \sigma/(1+\alpha)$. Additionally, we know that $\dist{a}{b} \leq \dist{a}{p} + \dist{p}{b} = \dne{a} + \dne{b} \leq 4\sigma$. Using a simple packing argument, we have that $|\setR'_{p,\sigma}| \leq \lceil 4(1+\alpha) \rceil^{\ddim+1}$.

\begin{comment}
\[
|\setR|
    = \sum_{p} \sum_{i=0}^{\lceil\log\frac{1}{\gamma}\rceil} |\setR'_{p,2^i}|
    \leq \numNE \left\lceil\log \frac{1}{\gamma} \right\rceil \left\lceil 4(1+\alpha) \right\rceil^{\ddim+1}
\]
\end{comment}

Altogether, by counting all sets $\setR'_{p,\sigma}$ for each nearest-enemy in the training set and values of $\sigma$, the size of \setR is upper-bounded by $|\setR| \leq \numNE \left\lceil\log{1/\gamma} \right\rceil \left\lceil 4(1+\alpha) \right\rceil^{\ddim+1}$. Based on the assumption that $\ddim$ is constant, this completes the proof.
\end{proof}

As a corollary, this result implies that when $\alpha = 2/\eps$, the $\alpha$-selective subset computed by \alphaRSS contains $\bigOh\left( \numNE \log{1/\gamma}\ (1/\eps)^{\ddim+1} \right)$ points. This establishes the size bound on the $\eps$-coreset given in Theorem~\ref{thm:coreset:main}, which can be computed using the \alphaRSS algorithm.

%%%%%%%%%%%%%%%%%%%%%%%%%%
%------ SUBSECTION
%------ The Euclidean case
%%%%%%%%%%%%%%%%%%%%%%%%%%

\subparagraph*{Size in Euclidean space.}
%\noindent\paragraph*{Size in Euclidean space}
In the case where $\setP \subset \RE^d$ lies in $d$-dimensional Euclidean~space, the analysis of \alphaRSS can be further improved, leading to a constant-factor approximation of \textsc{Min-$\alpha$-SS} for any value of $\alpha \geq 0$, and reduced dependency on the dimensionality of \setP.

\begin{theorem}
\label{thm:rss-approx-factor-euclidean}
\alphaRSS computes an $\bigOh(1)$-approximation for the \textsc{Min-$\alpha$-SS} problem in $\RE^d$.
%\alphaRSS computes a constant-factor approximation for the \textsc{Min-$\alpha$-SS} problem in $d$-dimensional Euclidean space.
\end{theorem}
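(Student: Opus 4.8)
The plan is to keep the covering structure used in the proof of Theorem~\ref{thm:rss-approx-factor}, but to replace the scale-by-scale bookkeeping (the source of the $\bigOh(\log(\min(1+2/\alpha,1/\gamma)))$ overhead) by a single angular packing argument valid in $\RE^d$. Let $\textup{OPT}_\alpha$ be a minimum-cardinality $\alpha$-selective subset of \setP, and for $p\in\textup{OPT}_\alpha$ recall $S_{p,\alpha}=\{r\in\setP \mid \dist{r}{p}<\dne{r}/(1+\alpha)\}$. Since $\textup{OPT}_\alpha$ is $\alpha$-selective, every point of \setP, and in particular every point of the subset \setR returned by \alphaRSS, lies in $S_{p,\alpha}$ for some $p\in\textup{OPT}_\alpha$. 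It therefore suffices to show that $|\setR\cap S_{p,\alpha}|=\bigOh(1)$ for each $p$, with a constant depending only on $d$; summing over $\textup{OPT}_\alpha$ then gives $|\setR|=\bigOh(1)\cdot|\textup{OPT}_\alpha|$.

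Fix $p\in\textup{OPT}_\alpha$ and set $A=\setR\cap S_{p,\alpha}$. The key claim is that any two distinct points $a,b\in A$, neither equal to $p$, subtend an angle $\angle apb>\pi/3$ at $p$, \emph{uniformly in $\alpha$ and $\gamma$}. To see this, assume \wwlog that $b$ is processed after $a$ by \alphaRSS, so $\dne{a}\le\dne{b}$; write $\rho_a=\dne{a}/(1+\alpha)$ and $\rho_b=\dne{b}/(1+\alpha)$, so $\rho_a\le\rho_b$. Membership in $S_{p,\alpha}$ gives $\dist{p}{a}<\rho_a$ and $\dist{p}{b}<\rho_b$, while the selection rule of \alphaRSS (when $b$ is added, every already selected point, in particular $a$, is at distance at least $\dne{b}/(1+\alpha)$ from $b$) gives $\dist{a}{b}\ge\rho_b$. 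Hence $\dist{a}{b}\ge\rho_b>\dist{p}{b}$ and $\dist{a}{b}\ge\rho_b\ge\rho_a>\dist{p}{a}$, so $ab$ is the strictly longest side of triangle $pab$; the angle $\angle apb$ opposite to it is then the strictly largest of the three angles, hence larger than $\pi/3$. Equivalently, the unit vectors $(a-p)/\dist{p}{a}$ and $(b-p)/\dist{p}{b}$ have inner product less than $1/2$, which also follows from the law of cosines together with the elementary inequality $x^2+y^2-xy\le\max(x,y)^2$.

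It remains to invoke the standard spherical-code bound: in $\RE^d$ the number of unit vectors that are pairwise at angular distance greater than $\pi/3$ is at most a constant $M_d$ (open spherical caps of angular radius $\pi/6$ centered at such vectors are pairwise disjoint, and the unit sphere has finite surface area). Applying this to the set of directions $\{(r-p)/\dist{p}{r} \mid r\in A,\ r\ne p\}$, and accounting for the at most one point of $A$ that could coincide with $p$, yields $|A|\le M_d+1=\bigOh(1)$, which completes the proof. I expect the main obstacle to be the angular claim in the second paragraph: it is the only step that genuinely relies on Euclidean geometry (it is false in general doubling metrics, which is exactly why Theorem~\ref{thm:rss-approx-factor} must pay a logarithmic factor there), and the delicate point to check is that the $\pi/3$ separation does not degrade as $\alpha\to 0$, so that the resulting approximation factor is a true constant for every $\alpha\ge 0$.
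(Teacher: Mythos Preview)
Your proposal is correct and follows essentially the same approach as the paper: cover \setR by the sets $S_{p,\alpha}$ for $p\in\textup{OPT}_\alpha$, show that any two points of $\setR\cap S_{p,\alpha}$ subtend an angle at $p$ of at least $\pi/3$ using exactly the three inequalities $\dist{p}{a}<\dne{a}/(1+\alpha)$, $\dist{p}{b}<\dne{b}/(1+\alpha)$, and $\dist{a}{b}\ge\dne{b}/(1+\alpha)$, and then apply an angular packing bound (the paper phrases it via the kissing number). Your ``longest side opposite the largest angle'' justification and the explicit handling of the case $r=p$ are a bit more careful than the paper's terse statement, but the argument is the same.
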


\begin{proof}
Similar to the proof of Theorem~\ref{thm:rss-approx-factor}, define $\setR_p = S_{p,\alpha} \cap \setR$ as the points selected by \alphaRSS that are ``covered'' by $p$ in the optimum solution \textup{OPT}$_\alpha$. Consider two such points $a, b \in \setR_p$ where without loss of generality, $\dne{a} \leq \dne{b}$.
By the definition~of $S_{p,\alpha}$ we know that $\dist{a}{p} < \dne{a}/(1+\alpha)$, and similarly with $b$.
Additionally, from the selection of the algorithm we know that $\dist{a}{b} \geq \dne{b}/(1+\alpha)$.
Overall, these inequalities imply that the angle $\angle apb \geq \pi/3$.
By a simple packing argument, the size of $\setR_p$ is bounded by the \emph{kissing number} in $d$-dimensional Euclidean space, or simply $\bigOh((3/\pi)^{d-1})$. \mbox{Therefore, we have} that $|\setR| \leq \sum_p |\setR_p| = |\textup{OPT}_\alpha|\ \bigOh((3/\pi)^{d-1})$. Assuming $d$ is constant, this completes the proof.
\end{proof} %_{p \in \textup{OPT}_\alpha}

\begin{comment}
\alejandro{The following result can be moved to the appendix and bring the algorithms formal description}

\begin{wrapfigure}{r}{.46\textwidth}
  \vspace*{-.6cm}
  \begin{center}
    \includegraphics[width=.26\textwidth]{proof/tight-euclidean}
  \end{center}
  \caption{Instance where the analysis of the approximation factor \mbox{of \alphaRSS in $\RE^d$ is tight.}}\label{fig:tightexample:euclidean}
\end{wrapfigure}

This analysis is tight up to constant factors. In Figure~\ref{fig:tightexample:euclidean}, we illustrate a training set \setP consisting of \emph{red} and \emph{blue} points in $\RE^d$, where \alphaRSS selects $\Theta(c^{d-2}\ |\textup{OPT}_\alpha|)$ points.
Consider two helper points (which do not belong to \setP) $c_r = 0\vec{u}_d$ and $c_b = (1+\alpha)\vec{u}_d$, where $\vec{u}_d$ is the unit vector parallel to the $d$-th coordinate. Add red points $r_i$ on the surface of the $d-1$ unit ball centered at $c_r$ and perpendicular to $\vec{u}_d$. Similarly with blue points $b_i$ around $c_b$. Finally, add two points $r_* = -\xi \vec{u}_d$ and $b_* = (1+\alpha+\xi) \vec{u}_d$, for a suitable value $\xi$ such that $\lVert r_* r_i \rVert < 1$. Clearly, the nearest-enemy distance of all $r_i$ and $b_i$ points is $1+\alpha$, while the one of $r_*$ and $b_*$ is strictly greater than $1+\alpha$. Thus, $\textup{OPT}_\alpha = \{r_*, b_*\}$ but \alphaRSS selects $\Theta(c^{d-2})$ points $r_i$ and $b_i$ at distance greater than 1 from each other.
\end{comment}

Furthermore, a similar constant-factor approximation can be achieved for any training set \setP in $\ell_p$ space for $p\geq 3$. This follows analogously to the proof of Theorem~\ref{thm:rss-approx-factor-euclidean}, exploiting the bounds between $\ell_p$ and $\ell_2$ metrics, where $1/\sqrt{d}\ \lVert v \rVert_p \leq \lVert v \rVert_2 \leq \sqrt{d}\ \lVert v \rVert_p$. This would imply that the angle between any two points in $\alphaRSS_{p}$ is $\Omega(1/d)$. Therefore, it shows that \alphaRSS achieves an approximation factor of $\bigOh(d^{d-1})$, or simply $\bigOh(1)$ for constant dimension.

Similarly to the case of doubling spaces, we also establish upper-bounds in terms of $\numNE$ for the selection size of the algorithm in Euclidean space. The following result improves the exponential dependence on the dimensionality of \setP (from $\textsf{ddim}(\RE^d) = \Theta(d)$ to $d-1$), while keeping the dependency on the margin $\gamma$, which contrast with the approximation~factor~results. %Moreover, it is possible to prove that these upper-bounds are tight up to constant factors.

\begin{theorem}
\label{thm:rss-size-euclidean}
In Euclidean space $\RE^d$, \alphaRSS selects $\bigOh\left( \numNE \log{\frac{1}{\gamma}}\ (1+\alpha)^{d-1} \right)$ points.
\end{theorem}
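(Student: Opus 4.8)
The plan is to mimic the charging argument used in the proof of Theorem~\ref{thm:rss-size}, but to replace the generic doubling-metric packing bound $\lceil 4(1+\alpha)\rceil^{\ddim+1}$ with a sharper Euclidean packing estimate that depends only on $d-1$ rather than on $\textsf{ddim}(\RE^d)=\Theta(d)$. First I would fix a nearest-enemy point $p \in \{\nenemy{r}\mid r\in\setP\}$ and a scale $\sigma = \gamma\,2^i$ for $i = 0,1,\dots,\lceil\log\frac1\gamma\rceil$, and define exactly as before $\setR'_{p,\sigma} = \{ r\in\setR \mid \nenemy{r}=p \wedge \dne{r}\in[\sigma,2\sigma)\}$. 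These sets partition \setR over all choices of $p$ and $\sigma$, so it suffices to bound $|\setR'_{p,\sigma}|$ by $\bigOh((1+\alpha)^{d-1})$; multiplying by the $\numNE$ choices of $p$ and the $\lceil\log\frac1\gamma\rceil+1$ choices of $\sigma$ then gives the claimed bound.

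The key geometric step is the same angle argument used in Theorem~\ref{thm:rss-approx-factor-euclidean}, adapted to the current grouping. Take any two distinct points $a,b\in\setR'_{p,\sigma}$ with $\dne{a}\le\dne{b}$ (so $\dne{a},\dne{b}\in[\sigma,2\sigma)$). Since $p$ is the common nearest enemy of both, $\dist{a}{p}=\dne{a}$ and $\dist{b}{p}=\dne{b}$, both at most $2\sigma$. Since $b$ was selected by \alphaRSS after $a$ (WLOG $a$ preceded $b$ in the nearest-enemy order, or handle both orders symmetrically), the selection criterion gives $\dist{a}{b}\ge\dne{b}/(1+\alpha)\ge\sigma/(1+\alpha)$. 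Thus in the triangle $apb$ the side $ab$ opposite the apex $p$ has length at least $\sigma/(1+\alpha)$ while the two sides incident to $p$ have length at most $2\sigma$; by the law of cosines the angle $\angle apb$ is bounded below by a quantity of order $1/(1+\alpha)$ (more precisely $\angle apb = \Omega(1/(1+\alpha))$, since $\sin(\angle apb/2) \ge \frac{\dist{a}{b}/2}{\max(\dist{a}{p},\dist{b}{p})} \ge \frac{1}{4(1+\alpha)}$). Hence all points of $\setR'_{p,\sigma}$ lie on rays from $p$ that are pairwise separated by angle $\Omega(1/(1+\alpha))$, and a standard covering-of-the-sphere packing bound on $S^{d-1}$ shows there can be at most $\bigOh\big((1+\alpha)^{d-1}\big)$ of them.

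Putting the pieces together, $|\setR| = \sum_p \sum_{i=0}^{\lceil\log(1/\gamma)\rceil} |\setR'_{p,\gamma 2^i}| \le \numNE\,\big(\lceil\log(1/\gamma)\rceil+1\big)\,\bigOh\big((1+\alpha)^{d-1}\big) = \bigOh\big(\numNE\,\log\tfrac1\gamma\,(1+\alpha)^{d-1}\big)$, assuming $d$ constant so that the hidden constant (an absolute sphere-packing constant) is absorbed. The main obstacle — really the only delicate point — is making the angular lower bound and the resulting sphere-packing count clean: one must be careful that the bound $\dist{a}{b}\ge\sigma/(1+\alpha)$ combined with $\dist{a}{p},\dist{b}{p}\le 2\sigma$ genuinely forces a $\Theta(1/(1+\alpha))$ angular separation (the worst case is when both points are at the maximal radius $2\sigma$), and that the number of unit vectors in $\RE^d$ with pairwise angular separation $\ge\theta$ is $\bigOh(\theta^{-(d-1)})$, which is the standard volumetric spherical-cap packing argument. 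Everything else is identical bookkeeping to Theorem~\ref{thm:rss-size}.
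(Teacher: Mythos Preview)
Your overall plan---partition \setR by nearest-enemy point and by nearest-enemy distance scale, then bound each piece by an angular packing argument on $S^{d-1}$---is exactly the right shape, and indeed the paper follows the same outline. However, the key inequality you invoke,
\[
\sin\bigl(\tfrac{\angle apb}{2}\bigr)\ \ge\ \frac{\dist{a}{b}/2}{\max(\dist{a}{p},\dist{b}{p})},
\]
is false when $\dist{a}{p}\neq\dist{b}{p}$ (it holds with equality only in the isoceles case). With the factor-of-$2$ bucketing the angle can in fact be made arbitrarily small: take $\alpha=1$, $\dist{a}{p}=\sigma$, $\dist{b}{p}=2\sigma-\delta$, and $\dist{a}{b}=\dist{b}{p}/(1+\alpha)=\sigma-\delta/2$; by the law of cosines $\cos(\angle apb)\to 1$ as $\delta\to 0$. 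So the bucket $[\sigma,2\sigma)$ is simply too wide to force an $\Omega(1/(1+\alpha))$ angular separation, and your packing bound of $\bigOh\bigl((1+\alpha)^{d-1}\bigr)$ per bucket does not follow.

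The paper repairs exactly this point by choosing a narrower bucketing factor $b=\dfrac{(1+\alpha)^2}{\alpha(2+\alpha)}$ (so $\dne{\cdot}\in[\sigma,b\sigma)$ rather than $[\sigma,2\sigma)$) and then arguing via the multiplicatively weighted bisector between $p$ and $p_i$: since $p_j$ lies outside that bisector ball and both distances are confined to $[\sigma,b\sigma)$, one obtains two similar isoceles triangles that force $\angle p_i p p_j\ge 1/(1+\alpha)$. The choice of $b$ is precisely tuned to make this isoceles comparison work. With that in place, the spherical-cap packing and the final summation go through exactly as you wrote. In short, your bookkeeping is fine; what is missing is the correct scale ratio $b$ and the weighted-bisector geometry that replaces the incorrect half-angle inequality.
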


\begin{proof}
Let $p$ be any nearest-enemy point of \setP and $\sigma \in [\gamma,1]$, similarly define $\setR'_{p,\sigma}$ to be the set of points selected by \alphaRSS whose nearest-enemy is $p$ and their nearest-enemy distance is between $\sigma$ and $b\sigma$, for $b = \frac{(1+\alpha)^2}{\alpha(2+\alpha)}$. Equivalently, these subsets define a partitioning of \setR for all nearest-enemy points $p$ and values of $\sigma = \gamma\,b^k$ for $k=\{0,1,2,\dots,\lceil\log_b{\frac{1}{\gamma}}\rceil\}$.~Thus, the proof follows from bounding the minimum angle between points in these subsets. For any two such points $p_i, p_j \in \setR'_{p,\sigma}$, we lower bound the angle $\angle p_i p p_j$. Assume without loss of generality that $\dne{p_i} \leq \dne{p_j}$. By definition of the partitioning, we also know that $\dne{p_j} \leq b\sigma \leq b \, \dne{p_i}$. Therefore, altogether we have that $\dne{p_i} \leq \dne{p_j} \leq b \, \dne{p_i}$.

First, consider the set of points whose distance to $p_i$ is $(1+\alpha)$ times their distance~to~$p$, which defines a multiplicative weighted bisector~\cite{AURENHAMMER1984251} between points $p$ and $p_i$, with weights equal to $1$ and $1/(1+\alpha)$ respectively. This is characterized as a $d$-dimensional ball (see Figure~\ref{fig:proof:euclidean:bisector}) with center $c_i = (p_i-p)\, b + p$ and radius $\dne{p_i} \, b/(1+\alpha)$. Thus $p$, $p_i$ and $c_i$ are collinear, and the distance between $p$ and $c_i$ is $\dist{p}{c_i} = b \, \dne{p_i}$.
In particular, let's consider the relation between $p_j$ and such bisector. As $p_j$ was selected by the algorithm after $p_i$, we know that $(1+\alpha)\,\dist{p_j}{p_i} \geq \dne{p_j}$ where $\dne{p_j} = \dist{p_j}{p}$. Therefore, clearly $p_j$ lies either outside or in the surface of the weighted bisector between $p$ and $p_i$ (see Figure~\ref{fig:proof:euclidean:general}).

\begin{figure*}[h!]
    \centering
    \begin{subfigure}[b]{.237\linewidth}
        \centering\includegraphics[width=\textwidth]{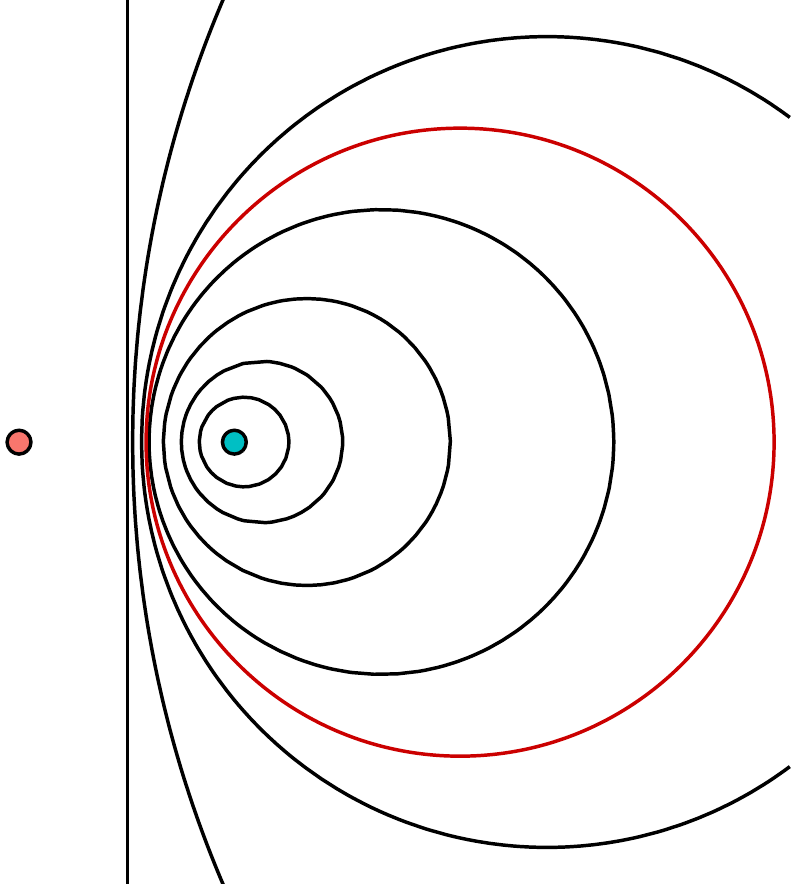}
        \caption{Multiplicatively weighted bisectors for different weights.}
		\label{fig:proof:euclidean:bisector}
    \end{subfigure}\hfill%
    \begin{subfigure}[b]{.356\linewidth}
        \centering\includegraphics[width=\textwidth]{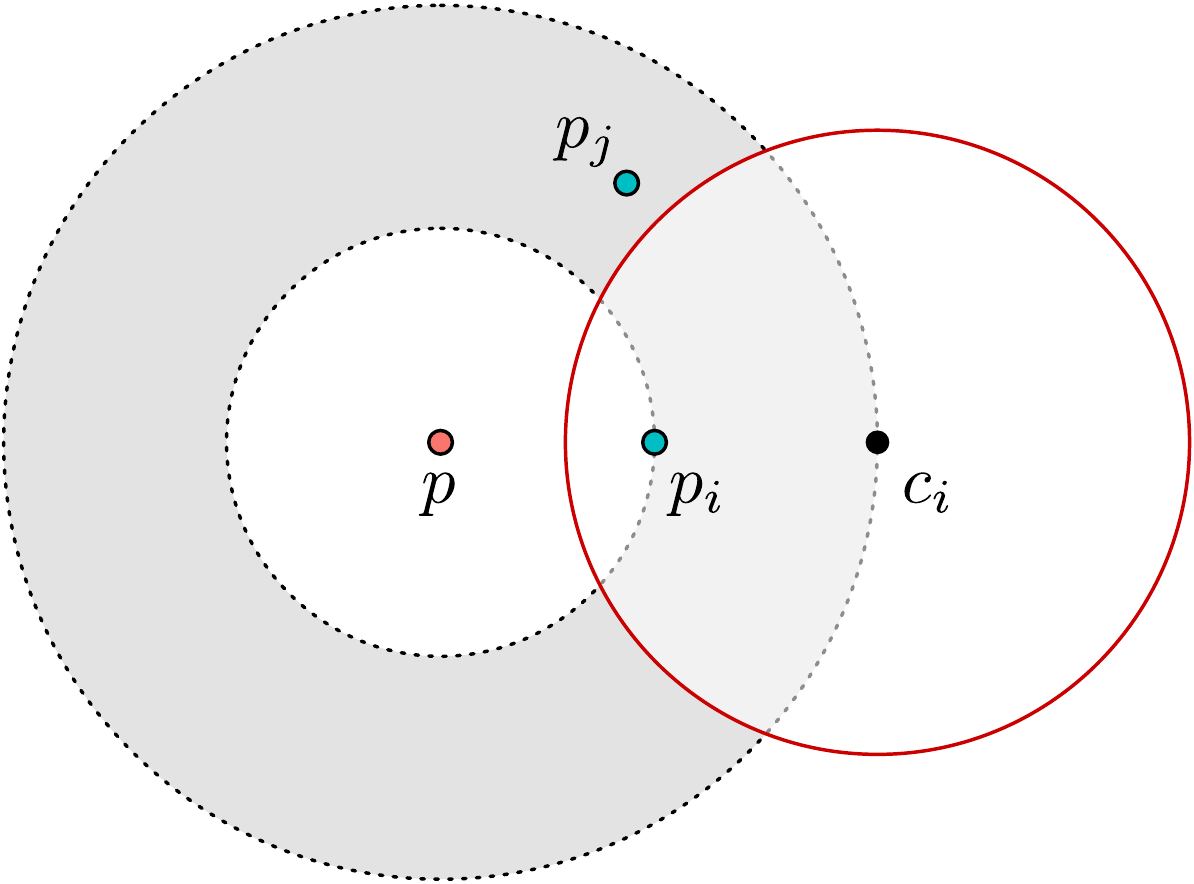}
        \caption{Position of point $p_j$\\ \wrt the weighted bisector\\ between points $p$ and $p_i$.}
		\label{fig:proof:euclidean:general}
    \end{subfigure}\hfill%
    \begin{subfigure}[b]{.356\linewidth}
        \centering\includegraphics[width=\textwidth]{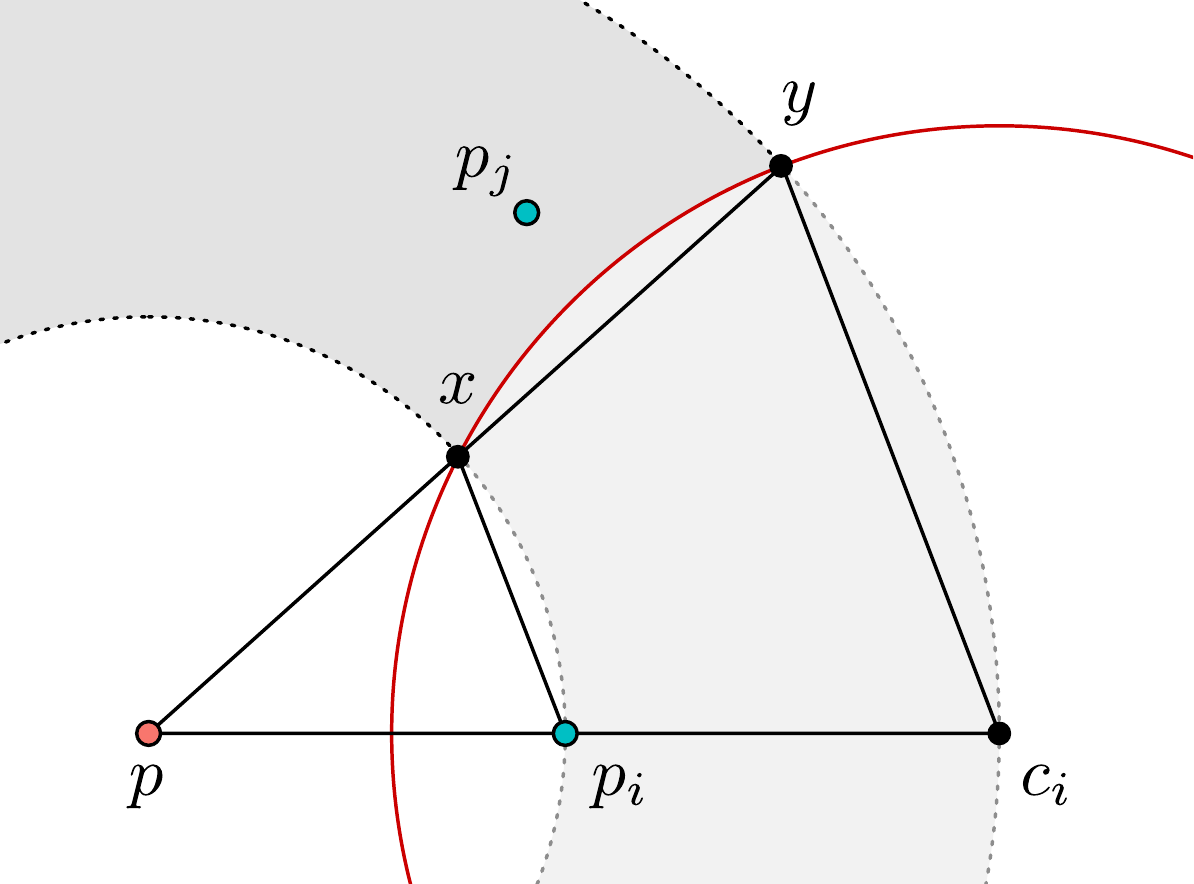}
        \caption{The intersection points $x$ and $y$ between the weighted bisector and the limit balls of $\setR_{p,\sigma}$.}
		\label{fig:proof:euclidean:detail}
    \end{subfigure}%
    \caption{Construction for the analysis of the minimum angle between two points in $\setR'_{p,\sigma}$ \wrt some nearest-enemy point $p \in \setP$. Let points $p_i, p_j \in \setR'_{p,\sigma}$, we analyze the angle $\angle p_i p p_j$.}\label{fig:proof:euclidean}
\end{figure*}

For angle $\angle p_i p p_j$, we can frame the analysis to the plane defined by $p$, $p_i$ and $p_j$. Let $x$ and $y$ be two points in this plane, such that they are the intersection points between the weighted bisector and the balls centered at $p$ of radii $\dne{p_i}$ and $b\, \dne{p_i}$ respectively (see Figure~\ref{fig:proof:euclidean:detail}). By the convexity of the weighted bisector between $p$ and $p_i$, we can say that $\angle p_i p p_j \geq \min(\angle x p p_i , \angle y p c_j)$. Now, consider the triangles $\triangle p x p_i$ and $\triangle p y c_i$. By the careful selection of $b$, these triangles are both isosceles and similar. In particular, for $\triangle p x p_i$ the two sides incident to $p$ have length equal to $\dne{p_i}$, and the side opposite to $p$ has length equal to $\dne{p_i}/(1+\alpha)$. For $\triangle p y c_i$, the side lengths are $b\,\dne{p_i}$ and $\dne{p_i}\, b/(1+\alpha)$. Therefore, the angle $\angle p_i p p_j \geq \angle x p p_i \geq 1/(1+\alpha)$.

By a simple packing argument based on this minimum angle, we have that the size of $\setR'_{p,\sigma}$ is $\bigOh((1+\alpha)^{d-1})$. All together, following the defined partitioning, we have that:
\[
|\setR|
    = \sum_{p} \sum_{k=0}^{\lceil\log_b\frac{1}{\gamma}\rceil} |\setR'_{p,b^k}|
    \leq \numNE \left\lceil\log_b \frac{1}{\gamma} \right\rceil \bigOh\left((1+\alpha)^{d-1} \right)
\]

For constant $\alpha$ and $d$, the size of \alphaRSS is $\bigOh(\numNE \log{\frac{1}{\gamma}})$. Moreover, when $\alpha$ is zero \alphaRSS selects $\bigOh(\numNE\ c^{d-1})$, matching the previously known bound for \RSS in Euclidean space.
\end{proof}

%%%%%%%%%%%%%%%%%%%%%%%%%%%%%%%%%%%
%------ SUBSECTION
%------ Subquadratic Implementation
%%%%%%%%%%%%%%%%%%%%%%%%%%%%%%%%%%%

\subsection{Subquadratic Algorithm}
\label{sec:subquadratic}

In this section we present a subquadratic implementation for the \alphaRSS algorithm, which completes the proof of our main result, Theorem~\ref{thm:coreset:main}. Prior to this result    , among algorithms for nearest-neighbor condensation, \FCNN achieves the best worst-case time complexity, running in $\bigOh(nm)$ time, where $m = |\setR|$ is the size of the selected subset. %In the remaining of this section, we describe an implementation scheme for \alphaRSS to reduce the worst-case time complexity of the algorithm to be truly subquadratic, being the first condensation algorithm with such runtime.

%\alejandro{Should we use another letter for $\eps$? Maybe $\xi$ that now we are not using}

The \alphaRSS algorithm consists of two main stages: computing the nearest-enemy distances of all points in \setP (and sorting the points based on these), and the selection process itself.
The first stage requires a total of $n$ nearest-enemy queries, plus additional $\bigOh(n\log{n})$ time for sorting. The second stage performs $n$ nearest-neighbor queries on the current selected subset \setR, which needs to be updated $m$ times. In both cases, using exact nearest-neighbor search would degenerate into linear search due to the \emph{curse of dimensionality}. Thus, the first and second stage of the algorithm would need $\bigOh(n^2)$ and $\bigOh(nm)$ worst-case time respectively.

These bottlenecks can be overcome by leveraging approximate nearest-neighbor techniques. Clearly, the first stage of the algorithm can be improved by computing nearest-enemy distances approximately, using as many \ANN structures as classes there are in \setP, which is considered to be a small constant.
Therefore, by also applying a simple brute-force search for nearest-neighbors in the second stage, result (i) of the next theorem follows immediately. Moreover, by combining this with standard techniques for static-to-dynamic conversions~\cite{bentley1980decomposable}, we have result (ii) below. Denote this variant of \alphaRSS as \paramRSS{(\alpha,\xi)}, for a parameter $\xi \geq 0$.

\begin{theorem}
\label{thm:subquadratic:general}
Given a data structure for $\xi$-\ANN searching with construction time $t_c$ and query time $t_q$ (which potentially depend on $n$ and $\xi$), the \paramRSS{(\alpha,\xi)} variant can be implemented with the following worst-case time complexities, where $m$ is the size of the~selected~subset.
\begin{romanenumerate}
    \item $\bigOh \left( t_c + n\,(t_q + m + \log{n}) \right)$
    \item $\bigOh \left( (t_c + n\,t_q) \log{n} \right)$
\end{romanenumerate}
\end{theorem}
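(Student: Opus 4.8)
The plan is to walk through the two stages of \alphaRSS --- the preprocessing that computes and sorts by the nearest-enemy distances, and the incremental selection loop that, for each point in sorted order, tests $\dnn{p_i,\setR} \ge \dne{p_i}/(1+\alpha)$ and possibly inserts $p_i$ --- and to replace its only non-near-linear primitives (the $n$ nearest-enemy computations and the $n$ nearest-neighbor queries against the growing set \setR) by $\xi$-\ANN searches, so that only the black-box parameters $t_c,t_q$ enter the accounting. Since the statement concerns running time only, I will not re-examine which set the modified algorithm outputs (that is the role of Theorem~\ref{thm:coreset:ann}); I only need that each distance query can be served by a $\xi$-\ANN structure.

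First I would handle the preprocessing stage, which is identical in both variants. Split \setP by class and build one static $\xi$-\ANN structure per class; because the number of classes is a constant this is $\bigOh(t_c)$ in total. The approximate nearest-enemy distance of a point of class $c$ is then obtained by querying the $\bigOh(1)$ structures of the other classes and keeping the smallest answer, for $\bigOh(n\,t_q)$ over all points, and sorting by these values adds $\bigOh(n\log n)$. So preprocessing costs $\bigOh(t_c + n\,t_q + n\log n)$.

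For the selection loop, version~(i) is immediate: keep \setR as a plain list and answer each of the $n$ tests by a brute-force scan, costing $\bigOh(|\setR|)\le\bigOh(m)$ each, hence $\bigOh(nm)$; adding the preprocessing cost gives bound~(i) after rewriting $t_c + n\,t_q + n\log n + nm = \bigOh(t_c + n\,(t_q+m+\log n))$. For version~(ii) I would store \setR in a \emph{dynamic} $\xi$-\ANN structure built from the given static one by the Bentley--Saxe logarithmic static-to-dynamic transformation~\cite{bentley1980decomposable}: \setR is held as $\bigOh(\log n)$ static substructures of geometrically increasing size, a query probes all of them at cost $\bigOh(t_q\log n)$, and an insertion triggers a cascade of rebuilds. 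The step that actually needs justification is that $\xi$-approximate nearest-neighbor search is a \emph{decomposable} searching problem --- the closest among per-block $(1+\xi)$-approximate answers is itself a global $(1+\xi)$-approximate answer --- which is what licenses this decomposition. Summing rebuild costs, each of the $\le n$ inserted points takes part in $\bigOh(\log n)$ rebuilds and, assuming $t_c$ is smooth (at least linear) so that rebuilding a substructure of size $s$ costs $\bigOh(t_c(s))$, the geometric series telescopes to $\bigOh(t_c\log n)$ total insertion work; together with $\bigOh(n\,t_q\log n)$ for the queries and the preprocessing cost absorbed (using $t_q\ge 1$ to swallow the $n\log n$ term), this yields bound~(ii).

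The main obstacle is the bookkeeping for version~(ii): one must check that approximate NN search composes correctly across the Bentley--Saxe blocks so the per-query factor stays $1+\xi$, invoke the smoothness hypothesis on $t_c$ needed for the geometric-series rebuild bound, and observe that although individual insertions are only \emph{amortized} $\bigOh(\tfrac{t_c}{n}\log n)$, the \emph{total} time of the $n$-step algorithm is nevertheless worst-case $\bigOh((t_c+n\,t_q)\log n)$, which is exactly what the statement asserts. The remaining arithmetic is routine.
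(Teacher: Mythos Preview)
Your proposal is correct and follows essentially the same route as the paper: $\xi$-\ANN structures per class for the nearest-enemy preprocessing, brute force over \setR for~(i), and the Bentley--Saxe static-to-dynamic transformation for~(ii). If anything, you are more careful than the paper, which simply invokes ``standard techniques for static-to-dynamic conversions'' without spelling out decomposability, the smoothness assumption on $t_c$, or the amortized-versus-total distinction.
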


More generally, if we are given an additional data structure for dynamic $\xi$-\ANN searching with construction time $t'_c$, query time $t'_q$, and insertion time $t'_i$, the overall running time will be $\bigOh \left( t_c + t'_c + n\,(t_q + t'_q + \log{n}) + m\,t'_i \right)$. Indeed, this can be used to obtain (ii) from the static-to-dynamic conversions~\cite{bentley1980decomposable}, which propose an approach to convert static search structures into dynamic ones. These results directly imply implementations of \paramRSS{(\alpha,\xi)} with subquadratic worst-case time complexities, based on \ANN techniques~\cite{arya2009space,arya2018approximate} for low-dimensional Euclidean space, and using techniques like LSH~\cite{andoni2018approximate} that are suitable for \ANN in high-dimensional Hamming and Euclidean spaces. More generally, subquadratic runtimes can be achieved by leveraging techniques~\cite{cole2006searching} for dynamic \ANN search in doubling spaces.

\begin{comment}
%
% COMMENT: Reviewer said to remove this (I'm not too sure, but ok).
%
\begin{lemma}
There exist a data structure for dynamic $\xi$-\ANN queries in sets \setP in $d$-dimensional Euclidean space, that can be constructed in $t'_c = \bigOh(n \log{n})$ time, queried in $t'_q = \bigOh(\log{n} + 1/\xi^{d-1})$ time, and where points of \setP can be inserted in $t'_i = \bigOh(\log{n})$ time.
\end{lemma}

\careful{Together with the dynamic-structure scheme described above, this lemma implies that there is a variant of \alphaRSS for Euclidean space that runs in $\bigOh(n\log{n} + n/\xi^{d-1})$ time. Such data structure can be build from a standard BBD tree~\cite{arya1998optimal,chanminimalist}
as follows. First, construct the tree from the entire set \setP, thus taking $t'_c = \bigOh(n\log{n})$ time. However, each node of the tree has some additional data: a boolean flag indicating if the subtree rooted at such node contains a point of the ``active'' subset \setR. Initially, all flags are set to \emph{false}, making the initial active subset being empty. To add a point $p \in \setP$ to the active subset \setR, all the flags from the root of the tree to the leaf node containing $p$ must be set to \emph{true}, thus making the insertion time $t'_i = \bigOh(\log{n})$. Finally, an $\xi$-\ANN query on such tree would perform as usual, only avoiding to visit nodes whose flag is set to \emph{false}, yielding a query time of $t'_q = \bigOh(\log{n} + 1/\xi^{d-1})$.}
\end{comment}

\subparagraph*{Dealing with uncertainty.}
Such implementation schemes for \alphaRSS would incur an approximation error (of up to $1+\xi$) on the computed distances: either only during the first stage if (i) is implemented, or during both stages if (ii) or the dynamic-structure scheme are implemented.
The uncertainty introduced by these approximate queries, imply that in order to guarantee finding $\alpha$-selective subsets, we must modify the condition for adding point during the second stage of the algorithm. Let $\dne{p,\xi}$ denote the $\xi$-approximate nearest-enemy distance of $p$ computed in the first stage, and let $\dnn{p,\setR,\xi}$ denote the $\xi$-approximate nearest-neighbor distance of $p$ over points of the current subset (computed in the second stage). Then, \paramRSS{(\alpha,\xi)} adds a point $p$ into the subset if $(1+\xi)(1+\alpha)\,\dnn{p,\setR,\xi} \geq \dne{p,\xi}$.

By similar arguments to the ones described in Section~\ref{sec:algorithm:selective}, size guarantees can be extended to \paramRSS{(\alpha,\xi)}. First, the size of the subset selected by \paramRSS{(\alpha,\xi)}, in terms of the number of nearest-enemy points in the set, would be bounded by the size of the subset selected by \paramRSS{\hat{\alpha}} with $\hat{\alpha} = (1+\alpha)(1+\xi)^2-1$. Additionally, the approximation factor of \paramRSS{(\alpha,\xi)} in both doubling and Euclidean metric  spaces would increase by a factor of $\bigOh((1+\xi)^{2(\ddim+1)})$.

\vspace*{5pt}\noindent This completes the proof of Theorem~\ref{thm:coreset:main}.

%%%%%%%%%%%%%%%%%%%%%%%%%%%%%
%------ SUBSECTION
%------ Consistent Algorithm
%%%%%%%%%%%%%%%%%%%%%%%%%%%%%

\subsection{An Algorithm for $\alpha$-Consistent Subsets}
\label{sec:algorithm:consistent}

Even thought the main result of this paper relies on the computation of $\alpha$-selective subsets, Theorem~\ref{thm:coreset:weak} shows that even $\alpha$-consistency is enough to guarantee the correct classification of certain query points. In practice, \FCNN~\cite{angiulli2007fast} is the most efficient algorithm for computing consistent subsets. Therefore, in this section, we discuss a simple extension of this algorithm in order to compute $\alpha$-consistent subsets.

Recent efforts~\cite{cccg20afloresv} show the first theoretical analysis on the selection size of \FCNN. The results are two fold: while the size of the subset selected by \FCNN cannot be upper-bounded, a simple modification of the algorithm is sufficient to obtain provable upper-bounds. This modified algorithm is called \SFCNN.

Both algorithms, \FCNN and \SFCNN, select points iteratively as follows (see Algorithm~\ref{alg:alpha-sfcnn}). First, the subset \setR is initialized with one point per class (\eg the centroids of each class). During every iteration, the algorithm identifies all the points in \setP that are incorrectly classified with the current \setR, or simply, those whose nearest-neighbor in \setR is of different class. This is formalized as the \emph{voren} function, defined for every point $p \in \setR$ as follows:
\[
    \textup{voren}(p,\setR,\setP)
        = \{ q \in \setP \mid \nn{q,\setR} = p \wedge l(q) \neq l(p) \}
\]

This function identifies all the enemies of $p$ whose nearest-neighbor in \setR is $p$ itself. The only difference between the original \FCNN algorithm and the modified \SFCNN appears next. While \FCNN adds one point per each $p \in \setR$ in a batch%
\footnote{For \FCNN, line 4 of Algorithm~\ref{alg:alpha-sfcnn} updates \setR by adding all the points in $S$, instead of only one point of $S$.}%
, potentially doubling the size of \setR, \SFCNN adds only \emph{one} point per iteration. Then, both algorithms terminate when no other points can be added (\ie all $\textup{voren}(p,\setR,\setP)$ are empty), implying that \setR is consistent.

We can now extend both algorithms to compute $\alpha$-consistent subsets, namely \alphaFCNN and \alphaSFCNN, by redefining the \emph{voren} function. The idea is to identify those points whose nearest-neighbor in \setR is $p$, such that are either enemies of $p$, or whose chromatic density with respect to \setR is less than $\alpha$. This is formally defined as follows:
\[
    \textup{voren}_\alpha(p,\setR,\setP)
        = \{ q \in \setP \mid \nn{q,\setR} = p \wedge (l(q) = l(p) \Rightarrow \chromdens{q,\setR} < \alpha ) \}
\]

By plugging this function into the algorithms (see Algorithm~\ref{alg:alpha-sfcnn}), it is easy to show that the resulting subsets are $\alpha$-consistent. Moreover, this can be easily implemented to run in $\bigOh(nm)$ worst-case time, where $m$ is the final size of \setR, extending the implementation scheme described in the paper where \FCNN was initially proposed~\cite{angiulli2007fast}.

\begin{algorithm}
 \DontPrintSemicolon
 \vspace*{0.1cm}
 \KwIn{Initial training set \setP and parameter $\alpha \geq 0$}
 \KwOut{$\alpha$-consistent subset $\setR \subseteq \setP$}
 $\setR \gets \phi$\;
 $S \gets \textup{centroids}(\setP)$\;
 \While{$S \neq \phi$}{
  $\setR \gets \setR \cup \{ \text{Choose one point from } S\}$\; %\tcc*[r]{For \text{\alphaFCNN} update $\setR \gets \setR \cup S$}
  $S \gets \phi$\;
  \ForEach{$p \in \setR$}{
   $S \gets S \cup \{ \text{Choose one point from } \textup{voren}_\alpha(p,\setR,\setP) \}$\;
  }
 }
 \KwRet{\setR}
 \vspace*{0.1cm}
 \caption{\alphaSFCNN}
 \label{alg:alpha-sfcnn}
\end{algorithm}

Finally, leveraging the analysis described in~\cite{cccg20afloresv}, together with the proofs of Theorems~\ref{thm:rss-approx-factor-cs} and~\ref{thm:rss-size}, we upper-bound the selection size of the \alphaSFCNN algorithm.
The proofs of the next results depend on the following observation. Let $a,b \in \setR$ be two points selected by \alphaSFCNN, where $\dne{a}, \dne{b} \geq \beta$ for some $\beta \geq 0$, it is easy to show that $\dist{a}{b} \geq \beta/(1+\alpha)$. This follows from a fairly simple argument: to the contrary, suppose that $\dist{a}{b} < \beta/(1+\alpha)$, which would imply that $a$ and $b$ belong to the same class. Without loss of generality, point $a$ was added to \setR before point $b$. Note that after adding point $a$ to \setR, the chromatic density of $b$ \wrt \setR is $\chromdens{b,\setR} > \alpha$, which contradicts the statement that $b$ could be added to \setR.

\begin{theorem}
\label{thm:sfcnn-approx-factor}
\alphaSFCNN computes a tight approximation for the \textsc{Min-$\alpha$-CS} problem.
%Given a training set \setP in a doubling metric space \metricSpace, the subset selected by \alphaSFCNN is a tight approximation for the \textsc{Min-$\alpha$-CS} problem for~\setP.
\end{theorem}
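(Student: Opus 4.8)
The plan is to transcribe the argument of Theorem~\ref{thm:rss-approx-factor-cs} almost verbatim, replacing the \alphaRSS separation property with the analogous one for \alphaSFCNN. First I would borrow from the analysis of \cite{cccg20afloresv} the fact that \alphaSFCNN terminates and that its output \setR is $\alpha$-consistent; after that the only thing left is to upper-bound $|\setR|$ and compare it against the optimum $\textup{OPT}_\alpha$ and against the hardness bound of Theorem~\ref{thm:hardness:min-alpha-cs}.

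For the size bound I would charge the points of \setR against the clusters produced by \alphaNET. Run \alphaNET to get a $\gamma/(1+\alpha)$-net of \setP and assign every point of \setP to a nearest net point; this partitions \setP into $\bigOh\!\left(((1+\alpha)/\gamma)^{\ddim+1}\right)$ clusters, each contained in a ball of radius $\gamma/(1+\alpha)$ (the packing count is exactly the one used for \alphaNET's cardinality). Then I would bound $|\setR\cap B|$ for a single cluster $B$: since $\gamma$ is the minimum nearest-enemy distance, every point in $B$ has $\dne{\cdot}\ge\gamma$, so by the separation observation stated just before the theorem (invoked with $\beta=\gamma$) any two points of $\setR\cap B$ are at distance at least $\gamma/(1+\alpha)$. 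A set of such points inside a ball of radius $\gamma/(1+\alpha)$ has size $\bigOh(1)$ by a doubling-space packing argument (diameter-to-min-distance ratio at most $2$, giving at most $2^{\ddim+1}$ points). Multiplying the two bounds gives $|\setR|=\bigOh\!\left((2(1+\alpha)/\gamma)^{\ddim+1}\right)$.

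Since $|\textup{OPT}_\alpha|\ge 1$, this makes the approximation factor of \alphaSFCNN at most $\bigOh\!\left(((1+\alpha)/\gamma)^{\ddim+1}\right)$, which for constant $\ddim$ coincides with the factor achieved by \alphaNET and, up to the $o(1)$ in the exponent, with the hardness lower bound of Theorem~\ref{thm:hardness:min-alpha-cs}; hence the approximation is tight, as claimed. I would also note that the very same charging scheme, but grouping \setR by nearest-enemy point and by a geometric scale of nearest-enemy distances exactly as in the proof of Theorem~\ref{thm:rss-size}, yields the complementary $\bigOh\!\left(\numNE\,\log\frac{1}{\gamma}\,(1+\alpha)^{\ddim+1}\right)$ bound in terms of $\numNE$.

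The main obstacle is not the packing geometry, which is routine, but importing the correct statements from \cite{cccg20afloresv} (termination of \alphaSFCNN, $\alpha$-consistency of its output) and checking carefully that the separation observation really does hold under the $\textup{voren}_\alpha$ selection rule — i.e., that if two same-labeled selected points $a,b$ with $\dne{a},\dne{b}\ge\beta$ were within $\beta/(1+\alpha)$ of each other, the one added later would already have chromatic density exceeding $\alpha$ with respect to the current \setR and therefore could not have been added. Once that one-line argument is secured, the rest of the proof is a direct copy of the proofs of Theorems~\ref{thm:rss-approx-factor-cs} and~\ref{thm:rss-size}.
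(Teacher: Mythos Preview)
Your proposal is correct and follows essentially the same route as the paper: compare \alphaSFCNN's output to the cells of the \alphaNET $\gamma/(1+\alpha)$-net, invoke the separation observation (with $\beta=\gamma$) to get pairwise distances at least $\gamma/(1+\alpha)$ inside each cell, and apply a doubling packing bound to obtain at most $2^{\ddim+1}$ selected points per cell, yielding the tight $\bigOh\!\left(((1+\alpha)/\gamma)^{\ddim+1}\right)$ bound. The only minor remark is that termination and $\alpha$-consistency of \alphaSFCNN follow directly from the algorithm's stopping condition in this paper (all $\textup{voren}_\alpha$ sets empty) rather than needing to be imported from~\cite{cccg20afloresv}, and in the separation observation the same-label conclusion is \emph{derived} from the distance assumption (otherwise $\dne{a}<\beta$), not assumed.
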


%\begin{proof}[Proof (Sketch)]
This result follows by similar arguments as the proof of Theorem~\ref{thm:rss-approx-factor-cs}. By considering any two points $a,b \in B_{p,\alpha} \cap \setR$, we know that $\dist{a}{b} \geq \gamma/(1+\alpha)$ as $\gamma$ is the smallest nearest-enemy distance in \setP. This implies \alphaSFCNN can select up to $2^{\ddim+1}$ times more points as the \alphaNET algorithm, which yields the proof.
%\end{proof}

\begin{theorem}
\label{thm:sfcnn-size}
\alphaSFCNN selects $\bigOh\left( \numNE \log{\frac{1}{\gamma}}\ (1+\alpha)^{\ddim+1}\right)$ points.
\end{theorem}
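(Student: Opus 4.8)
The plan is to mirror the proof of Theorem~\ref{thm:rss-size} almost verbatim, replacing the packing argument for \alphaRSS with the separation observation stated just before Theorem~\ref{thm:sfcnn-approx-factor} for \alphaSFCNN. Concretely, for each nearest-enemy point $p \in \{\nenemy{r} \mid r \in \setP\}$ and each scale $\sigma = \gamma\,2^i$ with $i = 0,1,\dots,\lceil\log{\frac1\gamma}\rceil$, define $\setR'_{p,\sigma} = \{ r \in \setR \mid \nenemy{r} = p \wedge \dne{r} \in [\sigma,2\sigma) \}$, where \setR is the subset returned by \alphaSFCNN. Since every point of \setR has a well-defined nearest-enemy and a nearest-enemy distance in $[\gamma,1]$, these subsets partition \setR as $p$ ranges over the $\numNE$ nearest-enemy points and $\sigma$ ranges over the $\lceil\log{\frac1\gamma}\rceil+1$ dyadic scales.

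The core of the argument is bounding $|\setR'_{p,\sigma}|$ by a packing estimate. Take any two points $a,b \in \setR'_{p,\sigma}$. The pre-Theorem observation, applied with $\beta = \sigma$ (both $\dne{a},\dne{b} \geq \sigma$ since they lie in $[\sigma,2\sigma)$), gives the lower bound $\dist{a}{b} \geq \sigma/(1+\alpha)$. For the upper bound, the triangle inequality through their common nearest-enemy $p$ yields $\dist{a}{b} \leq \dist{a}{p} + \dist{p}{b} = \dne{a} + \dne{b} < 4\sigma$. Hence $\setR'_{p,\sigma}$ has spread at most $4(1+\alpha)$, so a standard packing argument in a doubling metric of dimension $\ddim$ gives $|\setR'_{p,\sigma}| \leq \lceil 4(1+\alpha)\rceil^{\ddim+1}$. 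Summing over all $\numNE$ nearest-enemy points and all $O(\log{\frac1\gamma})$ scales yields $|\setR| \leq \numNE\,\lceil\log{\frac1\gamma}\rceil\,\lceil 4(1+\alpha)\rceil^{\ddim+1}$, which is $\bigOh(\numNE\,\log{\frac1\gamma}\,(1+\alpha)^{\ddim+1})$ under the standing assumption that $\ddim$ is constant.

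The only genuine subtlety — and the step I expect to be the main obstacle — is verifying the separation observation for \alphaSFCNN rather than for \alphaRSS, since \alphaSFCNN's selection rule is phrased in terms of the \emph{voren}$_\alpha$ sets and an iterative batching process, not a direct distance threshold. One must argue: if $\dist{a}{b} < \beta/(1+\alpha)$ with $\dne{a},\dne{b}\geq\beta$, then $a$ and $b$ are closer to each other than to any enemy (using $\dne{a}/(1+\alpha) \geq \beta/(1+\alpha) > \dist{a}{b}$, hence via the triangle inequality neither can be an enemy of the other), so $l(a)=l(b)$; then, taking \wwlog that $a$ entered \setR in an earlier (or the same) iteration, at the moment $b$ is a candidate we have $\dnn{b,\setR} \leq \dist{a}{b} < \beta/(1+\alpha) \leq \dne{b,\setR}/(1+\alpha)$, so $\chromdens{b,\setR} > \alpha$ and $l(b)=l(a)$, which means $b \notin \textup{voren}_\alpha(\cdot,\setR,\setP)$ at that point and $b$ is never selected — a contradiction. (The batching of \FCNN needs the mild extra remark that even points added in the \emph{same} batch as $a$ satisfy the bound, since the candidate set $S$ in that iteration is computed \wrt the \setR of the previous iteration; the \SFCNN variant, adding one point per iteration, avoids this entirely.) Once this observation is in hand, the packing-and-summation skeleton is identical to Theorem~\ref{thm:rss-size}, so I would present the proof in the compressed ``follows by similar arguments'' style already used in the paper, spelling out only the separation claim and the two distance bounds on $\setR'_{p,\sigma}$.
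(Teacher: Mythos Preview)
Your proposal is correct and matches the paper's own approach essentially line-for-line: the paper also reduces to the partition $\setR'_{p,\sigma}$ of Theorem~\ref{thm:rss-size}, invokes the pre-Theorem~\ref{thm:sfcnn-approx-factor} separation observation with $\beta=\sigma$ to get $\dist{a}{b}\geq\sigma/(1+\alpha)$, and concludes via the same $\lceil 4(1+\alpha)\rceil^{\ddim+1}$ packing bound. Your explicit verification of the separation observation (same-class via $\dist{a}{b}<\dne{a}$, then $\chromdens{b,\setR}>\alpha$ once $a\in\setR$) is exactly the argument the paper sketches, and your remark that \SFCNN's one-point-per-iteration rule sidesteps any batching concern is apt since the theorem is stated only for \alphaSFCNN.
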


%\begin{proof}[Proof (Sketch)]
Similarly, this result can be proven using the same arguments outlined to prove Theorem~\ref{thm:rss-size}. After partitioning the selection of \alphaSFCNN into $\bigOh(\numNE \log{1/\gamma})$ subsets, consider any two points $a,b$ in one of these subsets, where $\dne{a}, \dne{b} \in [\sigma, 2\sigma)$, for some $\sigma \in [\gamma,1]$. Therefore, we can show that $\dist{a}{b} \geq \sigma/(1+\alpha)$, which implies that each subset in the partitioning contains at most $\lceil 4(1+\alpha) \rceil^{\ddim+1}$ points. This yields the proof.
%\end{proof}

\section{Experimental Evaluation}
\label{sec:experiments}

In order to get a clearer impression of the relevance of these results in practice, we performed experimental trials on several training sets, both synthetically generated and widely used benchmarks. First, we consider 21 training sets from the UCI \emph{Machine Learning Repository}\footnote{\url{https://archive.ics.uci.edu/ml/index.php}} which are commonly used in the literature to evaluate condensation algorithms~\cite{Garcia:2012:PSN:2122272.2122582}. These consist of a number of points ranging from 150 to $58000$, in $d$-dimensional Euclidean space with $d$ between 2 and 64, and 2 to 26 classes.
We also generated some synthetic training sets, containing $10^5$ uniformly distributed points, in 2 to 3 dimensions, and 3 classes. %The class assigned to these randomly generated points was based on the class of their nearest-neighbor among 5 to 15 randomly generated seeds.
All training sets used in these experimental trials are summarized in Table~\ref{table:data}. The implementation of the algorithms, training sets used, and raw results, are publicly available\footnote{\url{https://github.com/afloresv/nnc/}}.

These experimental trials compare the performance of different condensation algorithms when applied to vastly different training sets. We use two measures of comparison on these algorithms: their runtime in the different training sets, and the size of the subset selected. Clearly, these values might differ greatly on training sets whose size are too distinct. Therefore, before comparing the raw results, these are normalized. The runtime of an algorithm for a given training set is normalized by dividing it by $n$, the size of the training set. The size of the selected subset is normalized by dividing it by $\numNE$, the number of nearest-enemy points in the training set, which characterizes the complexity of the boundaries between classes.

\subparagraph*{Algorithm Comparison.}
The first experiment evaluates the performance of the five algorithms discussed in this paper: \alphaRSS, \alphaFCNN, \alphaSFCNN, \alphaHSS, and \alphaNET. The evaluation is carried out by varying the value of the $\alpha$ parameter from 0 to 1, to understand the impact of increasing this parameter. The implementation of \alphaHSS uses the well-known greedy algorithm for set cover~\cite{10.2307/3689577}, and solves the problem using the reduction described in Section~\ref{sec:hardness}. In the other hand, recall that the original \NET algorithm (for $\alpha=0$) implements an extra pruning technique to further reduce the training set after computing the $\gamma$-net~\cite{gottlieb2014near}. For a fair comparison, we implemented the \alphaNET algorithm with a modified version of this pruning technique that guarantees that the selected subset is still $\alpha$-selective.

The results show that \alphaRSS outperforms the other algorithms in terms of running time by a big margin, and irrespective of the value of $\alpha$ (see Figure~\ref{fig:exp:alpha:time}). Additionally, the number of points selected by \alphaRSS, \alphaFCNN, and \alphaSFCNN is comparable to \alphaHSS, which guarantees the best possible approximation factor in general metrics, while \alphaNET is significantly outperformed.

\begin{figure*}[h!]
    \centering
    \begin{subfigure}[b]{.48\linewidth}
        \centering\includegraphics[width=\textwidth]{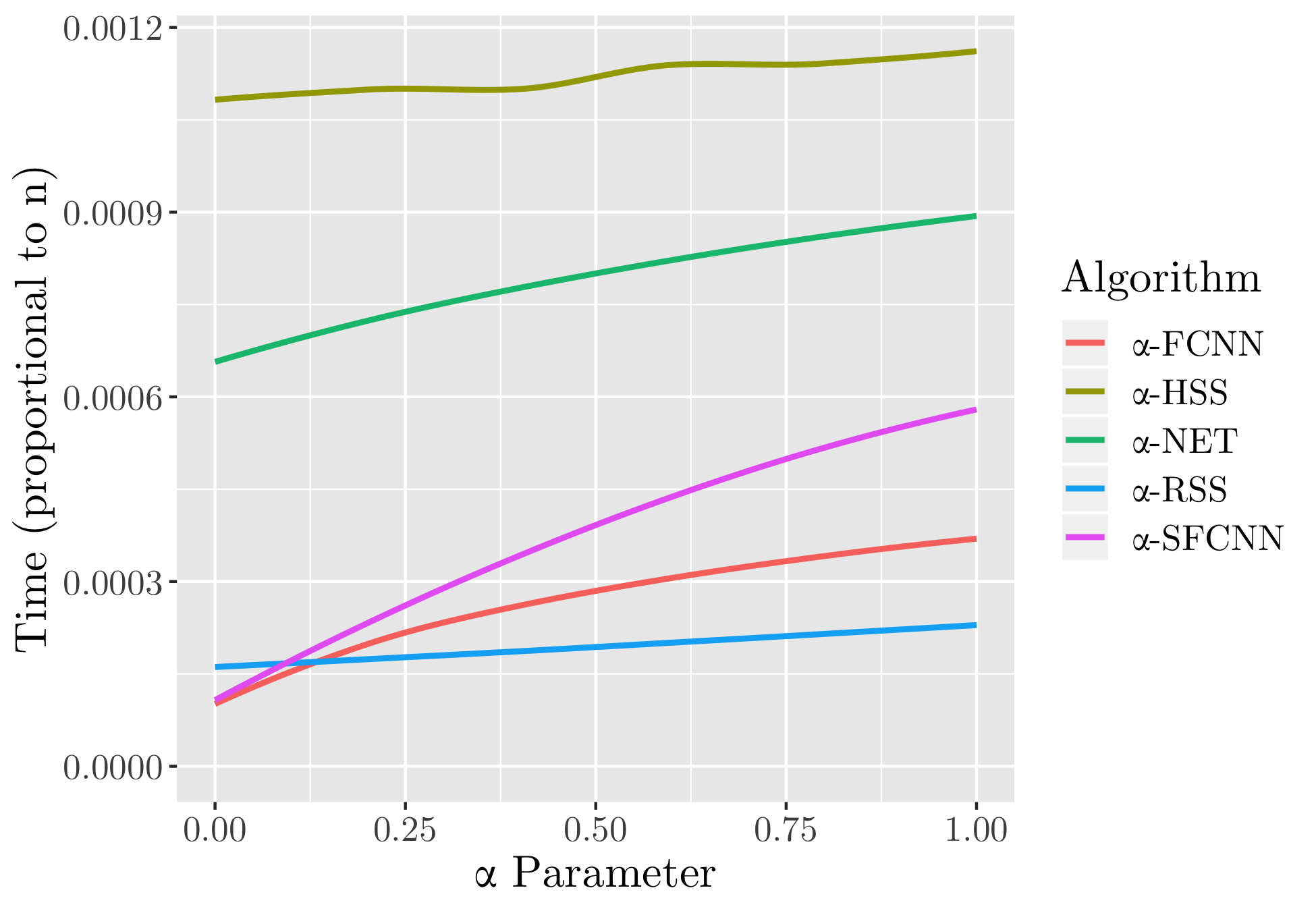}
        \vspace*{-10pt}
        \caption{Running time.}\label{fig:exp:alpha:time}
    \end{subfigure}%
    \hfill
    \begin{subfigure}[b]{.48\linewidth}
        \centering\includegraphics[width=\textwidth]{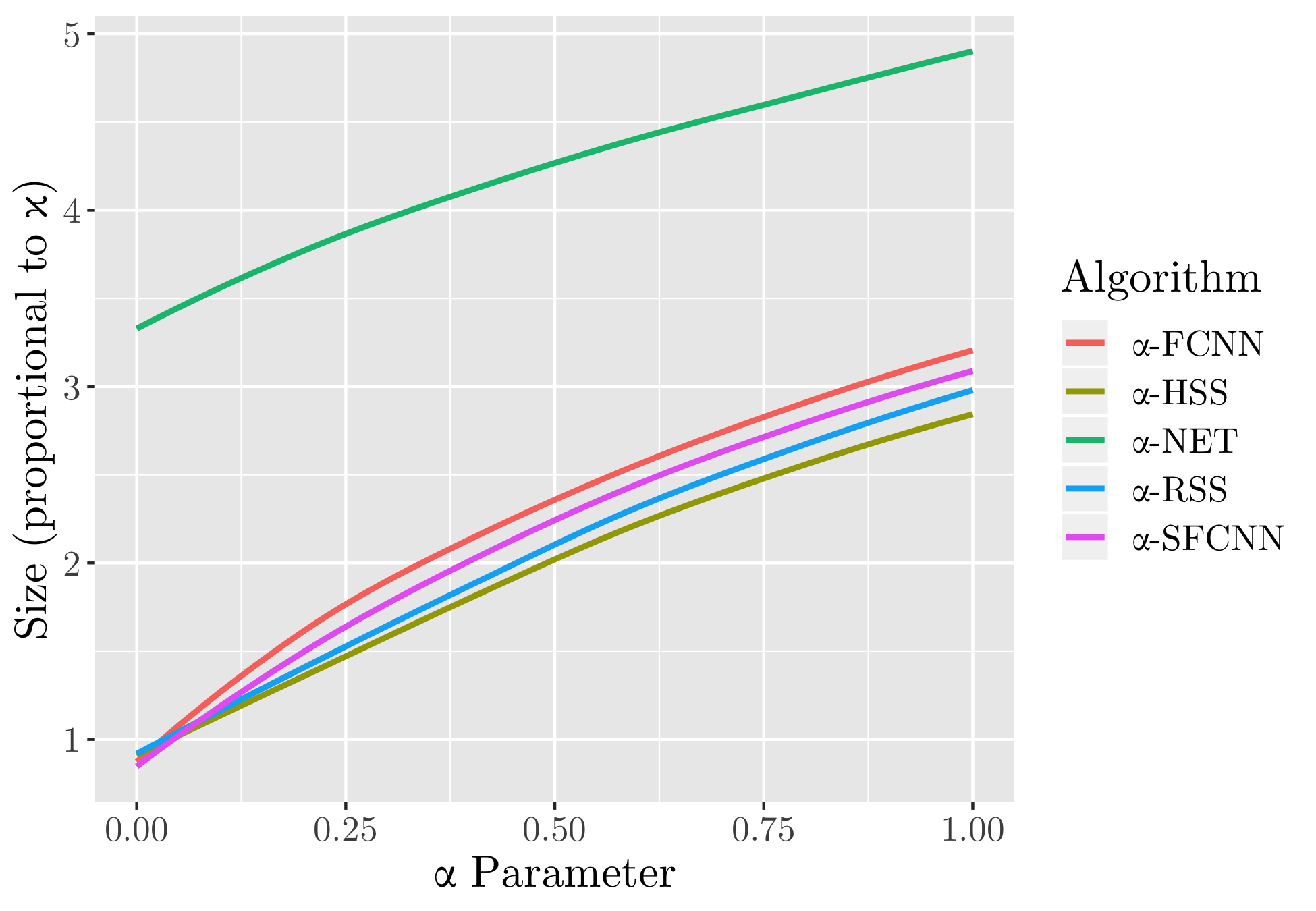}
        \vspace*{-10pt}
        \caption{Size of the selected subsets.}\label{fig:exp:alpha:size}
    \end{subfigure}%
    \vspace*{-5pt}
    \caption{Comparison \alphaRSS, \alphaFCNN, \alphaSFCNN, \alphaNET, and \alphaHSS, for different values of $\alpha$.}\label{fig:exp:alpha}
\end{figure*}

\subparagraph*{Subquadratic Approach.}
Using the same experimental framework, we evaluate performance of the subquadratic implementation \paramRSS{(\alpha,\xi)} described in Section~\ref{sec:subquadratic}. In this case, we change the value of parameter $\xi$ to assess its effect on the running time and selection size over the algorithm, for two different values of $\alpha$ (see Figure~\ref{fig:exp:eps}). The results show an expected increase of the number of selected points, while significantly improving its running time.

\begin{figure*}[h!]
    \centering
    \begin{subfigure}[b]{.48\linewidth}
        \centering\includegraphics[width=\textwidth]{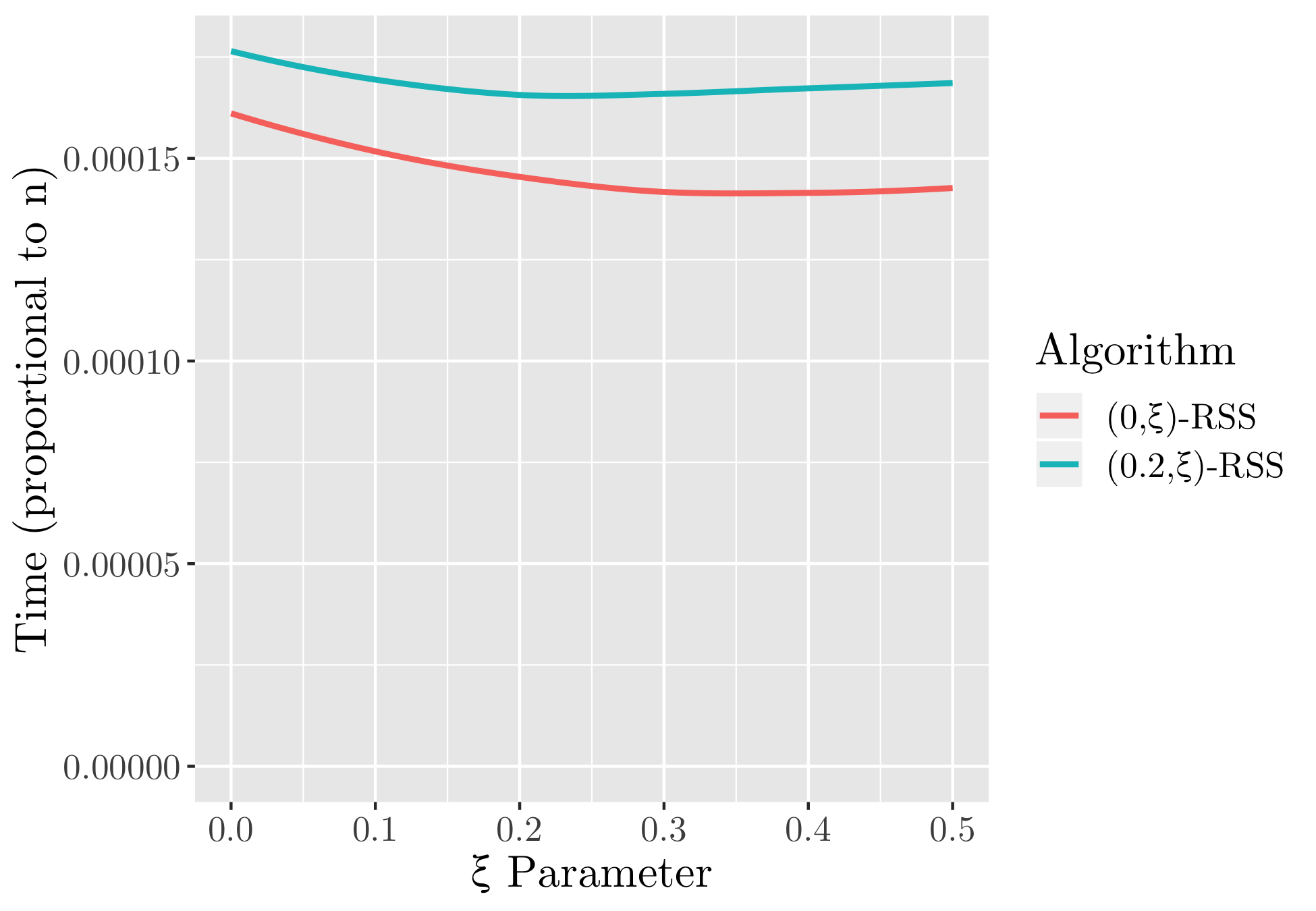}
        \vspace*{-10pt}
        \caption{Running time.}\label{fig:exp:eps:time}
    \end{subfigure}%
    \hfill
    \begin{subfigure}[b]{.48\linewidth}
        \centering\includegraphics[width=\textwidth]{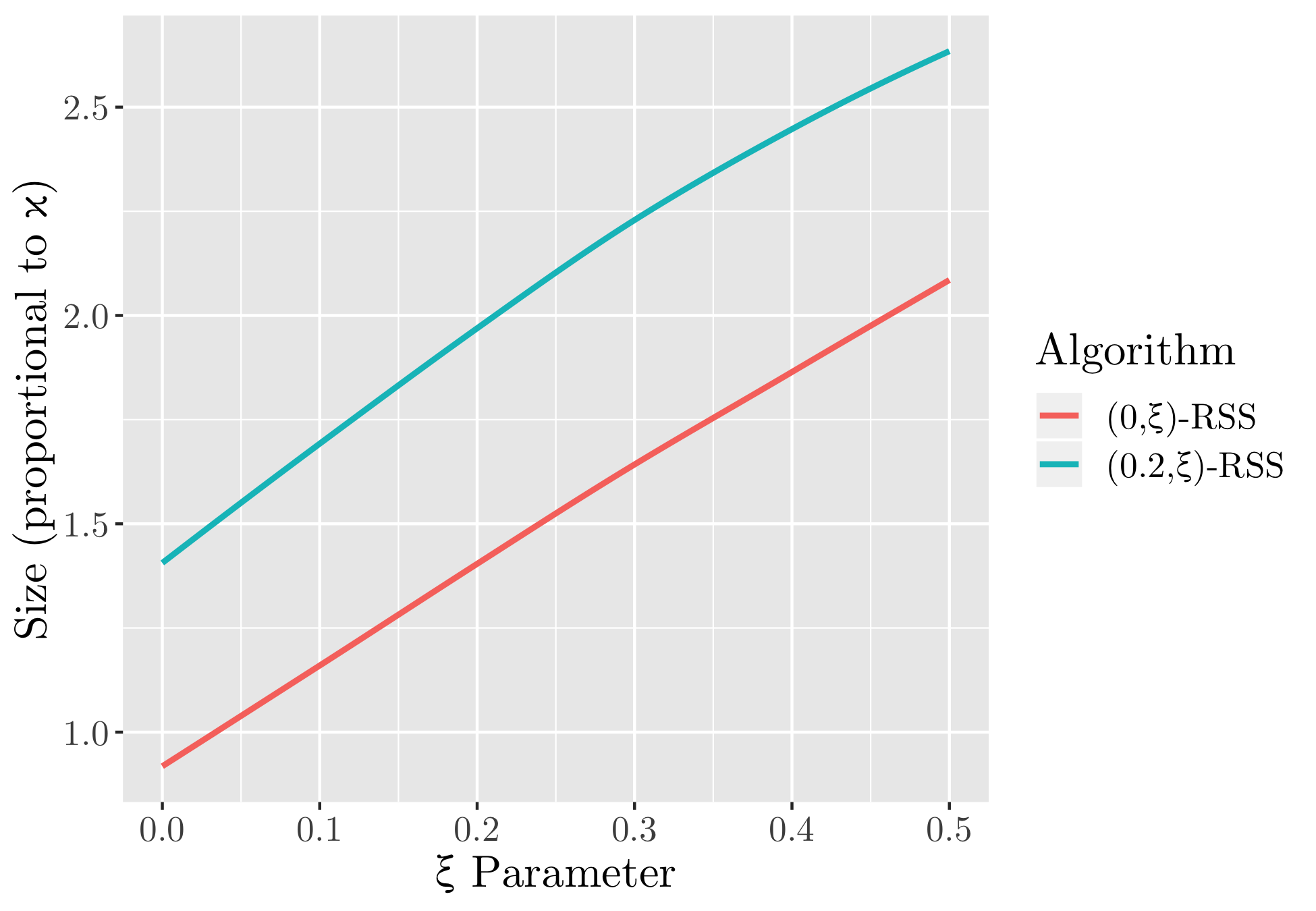}
        \vspace*{-10pt}
        \caption{Size of the selected subsets.}\label{fig:exp:eps:size}
    \end{subfigure}%
    \vspace*{-5pt}
    \caption{Evaluating the effect of increasing the parameter $\xi$ on \paramRSS{(\alpha,\xi)} for $\alpha=\{0, 0.2\}$.}\label{fig:exp:eps}
\end{figure*}

\begin{table}[h!]
\footnotesize
\centering
\begin{tabular}{||c|cccc||}
\hline\rule{0pt}{9pt}
Training set & $n$ & $d$ & $c$ & $\numNE\ (\%)$ \\ \hline
banana & 5300 & 2 & 2 & 811 (15.30\%) \\
cleveland & 297 & 13 & 5 & 125 (42.09\%) \\
glass & 214 & 9 & 6 & 87 (40.65\%) \\
iris & 150 & 4 & 3 & 20 (13.33\%) \\
iris2d & 150 & 2 & 3 & 13 (8.67\%) \\
letter & 20000 & 16 & 26 & 6100 (30.50\%) \\
magic & 19020 & 10 & 2 & 5191 (27.29\%) \\
monk & 432 & 6 & 2 & 300 (69.44\%) \\
optdigits & 5620 & 64 & 10 & 1245 (22.15\%) \\
pageblocks & 5472 & 10 & 5 & 429 (7.84\%) \\
penbased & 10992 & 16 & 10 & 1352 (12.30\%) \\
pima & 768 & 8 & 2 & 293 (38.15\%) \\
ring & 7400 & 20 & 2 & 2369 (32.01\%) \\
satimage & 6435 & 36 & 6 & 1167 (18.14\%) \\
segmentation & 2100 & 19 & 7 & 398 (18.95\%) \\
shuttle & 58000 & 9 & 7 & 920 (1.59\%) \\
thyroid & 7200 & 21 & 3 & 779 (10.82\%) \\
twonorm & 7400 & 20 & 2 & 1298 (17.54\%) \\
wdbc & 569 & 30 & 2 & 123 (21.62\%) \\
wine & 178 & 13 & 3 & 37 (20.79\%) \\
wisconsin & 683 & 9 & 2 & 35 (5.12\%) \\
v-100000-2-3-15 & 100000 & 2 & 3 & 1909 (1.90\%) \\
v-100000-2-3-5 & 100000 & 2 & 3 & 788 (0.78\%) \\
v-100000-3-3-15 & 100000 & 3 & 3 & 7043 (7.04\%) \\
v-100000-3-3-5 & 100000 & 3 & 3 & 3738 (3.73\%) \\
v-100000-4-3-15 & 100000 & 4 & 3 & 13027 (13.02\%) \\
v-100000-4-3-5 & 100000 & 4 & 3 & 10826 (10.82\%) \\
v-100000-5-3-15 & 100000 & 5 & 3 & 22255 (22.25\%) \\
v-100000-5-3-5 & 100000 & 5 & 3 & 17705 (17.70\%) \\
\hline
\end{tabular}
\vspace*{.3cm}
\caption{Training sets used to evaluate the performance of condensation algorithms. Indicates the number of points $n$, dimensions $d$, classes $c$, nearest-enemy points $\numNE$ (also in percentage \emph{w.r.t.} $n$).}\label{table:data}\vspace*{-.5cm}
\end{table}

\begin{comment}
\input{table/res-data.tex}

For our first experiment, we computed the percentage of points that were selected by each of the various condensation algorithms. The results are presented in Table~\ref{table:size}. The results show that most of the state-of-the-art algorithms achieve similar sizes, and the various \RSS implementations produced slightly larger sets.

\input{table/res-size.tex}

\begin{figure}[h!]
    \centering
    \includegraphics[width=\textwidth]{plot-sizekappa.pdf}
    \caption{Boxplot of the proportion of points selected by the different \NN condensation algorithms with respect to $\kappa$, the number of \NE points in each training set.}
    \label{fig:exp:sizekappa}
\end{figure}

Finally, we computed the median chromatic density of the of the various points. The results are presented in Table~\ref{table:cd}. As expected, the various \RSS variants resulted in the highest chromatic densities.

\input{table/res-cd.tex}

\begin{figure}[h!]
    \centering
    \includegraphics[width=\textwidth]{plot-cd.pdf}
    \caption{Boxplot of the median chromatic density.}
    \label{fig:exp:cd}
\end{figure}
\end{comment}

\bibliographystyle{plainurl}
\bibliography{nnc}

\begin{thebibliography}{10}

\bibitem{agarwal2005geometric}
Pankaj~K Agarwal, Sariel Har-Peled, and Kasturi~R Varadarajan.
\newblock Geometric approximation via coresets.
\newblock {\em Combinatorial and computational geometry}, 52:1--30, 2005.

\bibitem{andoni2018approximate}
Alexandr Andoni, Piotr Indyk, and Ilya Razenshteyn.
\newblock Approximate nearest neighbor search in high dimensions.
\newblock {\em arXiv preprint arXiv:1806.09823}, 2018.

\bibitem{angiulli2007fast}
Fabrizio Angiulli.
\newblock Fast nearest neighbor condensation for large data sets
  classification.
\newblock {\em IEEE Transactions on Knowledge and Data Engineering},
  19(11):1450--1464, 2007.

\bibitem{arya2018approximate}
Sunil Arya, Guilherme~D Da~Fonseca, and David~M Mount.
\newblock Approximate polytope membership queries.
\newblock {\em SIAM Journal on Computing}, 47(1):1--51, 2018.

\bibitem{arya2009space}
Sunil Arya, Theocharis Malamatos, and David~M Mount.
\newblock Space-time tradeoffs for approximate nearest neighbor searching.
\newblock {\em Journal of the ACM (JACM)}, 57(1):1, 2009.

\bibitem{AURENHAMMER1984251}
Franz Aurenhammer and Herbert Edelsbrunner.
\newblock An optimal algorithm for constructing the weighted voronoi diagram in
  the plane.
\newblock {\em Pattern Recognition}, 17(2):251 -- 257, 1984.

\bibitem{barandela2005decision}
Ricardo Barandela, Francesc~J Ferri, and J~Salvador S{\'a}nchez.
\newblock Decision boundary preserving prototype selection for nearest neighbor
  classification.
\newblock {\em International Journal of Pattern Recognition and Artificial
  Intelligence}, 19(06):787--806, 2005.

\bibitem{baykal2018data}
Cenk Baykal, Lucas Liebenwein, Igor Gilitschenski, Dan Feldman, and Daniela
  Rus.
\newblock Data-dependent coresets for compressing neural networks with
  applications to generalization bounds.
\newblock {\em arXiv preprint arXiv:1804.05345}, 2018.

\bibitem{bentley1980decomposable}
Jon~Louis Bentley and James~B Saxe.
\newblock Decomposable searching problems {I}. {Static}-to-dynamic
  transformation.
\newblock {\em Journal of Algorithms}, 1(4):301--358, 1980.

\bibitem{biniaz2019minimum}
Ahmad Biniaz, Sergio Cabello, Paz Carmi, Jean-Lou De~Carufel, Anil Maheshwari,
  Saeed Mehrabi, and Michiel Smid.
\newblock On the minimum consistent subset problem.
\newblock In {\em WADS}, 2019.

\bibitem{braverman2016new}
Vladimir Braverman, Dan Feldman, and Harry Lang.
\newblock New frameworks for offline and streaming coreset constructions.
\newblock {\em arXiv preprint arXiv:1612.00889}, 2016.

\bibitem{10.2307/3689577}
Václav Chvatal.
\newblock A greedy heuristic for the set-covering problem.
\newblock {\em Math. Oper. Res.}, 1979.

\bibitem{cole2006searching}
Richard Cole and Lee-Ad Gottlieb.
\newblock Searching dynamic point sets in spaces with bounded doubling
  dimension.
\newblock In {\em Proceedings of the thirty-eighth annual ACM symposium on
  Theory of computing}, pages 574--583, 2006.

\bibitem{Cover:2006:NNP:2263261.2267456}
Thomas Cover and Peter Hart.
\newblock Nearest neighbor pattern classification.
\newblock {\em IEEE Trans. Inf. Theor.}, 1967.

\bibitem{devroye1981inequality}
Luc Devroye.
\newblock On the inequality of cover and hart in nearest neighbor
  discrimination.
\newblock {\em Pattern Analysis and Machine Intelligence, IEEE Transactions
  on}, (1):75--78, 1981.

\bibitem{feige1998threshold}
Uriel Feige.
\newblock A threshold of ln n for approximating set cover.
\newblock {\em JACM}, 1998.

\bibitem{feldman2020core}
Dan Feldman.
\newblock Core-sets: Updated survey.
\newblock In {\em Sampling Techniques for Supervised or Unsupervised Tasks},
  pages 23--44. Springer, 2020.

\bibitem{feldman2011unified}
Dan Feldman and Michael Langberg.
\newblock A unified framework for approximating and clustering data.
\newblock In {\em Proceedings of the forty-third annual ACM symposium on Theory
  of computing}, pages 569--578, 2011.

\bibitem{fix_51_discriminatory}
Evelyn Fix and Joseph~L. Hodges.
\newblock {Discriminatory analysis, nonparametric discrimination: Consistency
  properties}.
\newblock {\em US Air Force School of Aviation Medicine}, Technical Report
  4(3):477+, January 1951.

\bibitem{cccg20afloresv}
Alejandro Flores{-}Velazco.
\newblock Social distancing is good for points too!
\newblock In {\em Proceedings of the 32st Canadian Conference on Computational
  Geometry, {CCCG} 2020, August 5-7, 2020, University of Saskatchewan,
  Saskatoon, Saskatchewan, Canada}, 2020.

\bibitem{DBLP:conf/cccg/Flores-VelazcoM19}
Alejandro Flores{-}Velazco and David~M. Mount.
\newblock Guarantees on nearest-neighbor condensation heuristics.
\newblock In {\em Proceedings of the 31st Canadian Conference on Computational
  Geometry, {CCCG} 2019, August 8-10, 2019, University of Alberta, Edmonton,
  Alberta, Canada}, 2019.

\bibitem{esa20afloresv}
Alejandro Flores{-}Velazco and David~M. Mount.
\newblock Coresets for the nearest-neighbor rule.
\newblock In {\em 28th Annual European Symposium on Algorithms, {ESA} 2020,
  September 7-9, 2020, Pisa, Italy}, LIPIcs. Schloss Dagstuhl - Leibniz-Zentrum
  f{\"{u}}r Informatik, 2020.

\bibitem{Garcia:2012:PSN:2122272.2122582}
Salvador Garcia, Joaquin Derrac, Jose Cano, and Francisco Herrera.
\newblock Prototype selection for nearest neighbor classification: Taxonomy and
  empirical study.
\newblock {\em IEEE TPAMI}, 2012.

\bibitem{gottlieb2014near}
Lee-Ad Gottlieb, Aryeh Kontorovich, and Pinhas Nisnevitch.
\newblock Near-optimal sample compression for nearest neighbors.
\newblock In {\em Advances in Neural Information Processing Systems}, 2014.

\bibitem{har2004coresets}
Sariel Har-Peled and Soham Mazumdar.
\newblock On coresets for k-means and k-median clustering.
\newblock In {\em Proceedings of the thirty-sixth annual ACM symposium on
  Theory of computing}, pages 291--300, 2004.

\bibitem{har2006fast}
Sariel Har-Peled and Manor Mendel.
\newblock Fast construction of nets in low-dimensional metrics and their
  applications.
\newblock {\em SIAM Journal on Computing}, 35(5):1148--1184, 2006.

\bibitem{Hart:2006:CNN:2263267.2267647}
Peter Hart.
\newblock The condensed nearest neighbor rule (corresp.).
\newblock {\em IEEE Trans. Inf. Theor.}, 1968.

\bibitem{heinonen2012lectures}
Juha Heinonen.
\newblock {\em Lectures on analysis on metric spaces}.
\newblock Springer Science \& Business Media, 2012.

\bibitem{jankowski2004comparison}
Norbert Jankowski and Marek Grochowski.
\newblock Comparison of instances selection algorithms {I.} {A}lgorithms
  survey.
\newblock In {\em Artificial Intelligence and Soft Computing-ICAISC}. 2004.

\bibitem{khodamoradi2018consistent}
Kamyar Khodamoradi, Ramesh Krishnamurti, and Bodhayan Roy.
\newblock Consistent subset problem with two labels.
\newblock In {\em Conference on Algorithms and Discrete Applied Mathematics},
  2018.

\bibitem{liebenwein2019provable}
Lucas Liebenwein, Cenk Baykal, Harry Lang, Dan Feldman, and Daniela Rus.
\newblock Provable filter pruning for efficient neural networks.
\newblock {\em arXiv preprint arXiv:1911.07412}, 2019.

\bibitem{lund1994hardness}
Carsten Lund and Mihalis Yannakakis.
\newblock On the hardness of approximating minimization problems.
\newblock {\em Journal of the ACM (JACM)}, 41(5):960--981, 1994.

\bibitem{MOUNT200097}
David~M. Mount, Nathan~S. Netanyahu, Ruth Silverman, and Angela~Y. Wu.
\newblock Chromatic nearest neighbor searching: A query sensitive approach.
\newblock {\em Computational Geometry}, 2000.

\bibitem{paz1981non}
Azaria Paz and Shlomo Moran.
\newblock Non deterministic polynomial optimization problems and their
  approximations.
\newblock {\em Theoretical Computer Science}, 15(3):251--277, 1981.

\bibitem{phillips2016coresets}
Jeff~M. Phillips.
\newblock Coresets and sketches, 2016.
\newblock \href {http://arxiv.org/abs/1601.00617} {\path{arXiv:1601.00617}}.

\bibitem{ritter1975algorithm}
G.~L. Ritter, H.~B. Woodruff, S.~R. Lowry, and T.~L. Isenhour.
\newblock An algorithm for a selective nearest neighbor decision rule.
\newblock {\em IEEE Transactions on Information Theory}, 1975.

\bibitem{Slavik:1996:TAG:237814.237991}
Petr Slav\'{\i}k.
\newblock A tight analysis of the greedy algorithm for set cover.
\newblock STOC, 1996.

\bibitem{stone1977consistent}
Charles~J Stone.
\newblock Consistent nonparametric regression.
\newblock {\em The annals of statistics}, pages 595--620, 1977.

\bibitem{DBLP:conf/jcdcg/Toussaint02}
Godfried Toussaint.
\newblock Open problems in geometric methods for instance-based learning.
\newblock In {\em JCDCG}, volume 2866 of {\em Lecture Notes in Computer
  Science}. Springer, 2002.

\bibitem{tukan2020coresets}
Murad Tukan, Cenk Baykal, Dan Feldman, and Daniela Rus.
\newblock On coresets for support vector machines.
\newblock {\em arXiv preprint arXiv:2002.06469}, 2020.

\bibitem{Wilfong:1991:NNP:109648.109673}
Gordon Wilfong.
\newblock Nearest neighbor problems.
\newblock In {\em Proceedings of the Seventh Annual Symposium on Computational
  Geometry}, SoCG, pages 224--233, New York, NY, USA, 1991. ACM.

\bibitem{Zukhba:2010:NPP:1921730.1921735}
Anastasiya~V. Zukhba.
\newblock {NP}-completeness of the problem of prototype selection in the
  nearest neighbor method.
\newblock {\em Pattern Recog. Image Anal.}, 20(4):484--494, 2010.

\end{thebibliography}

\end{document}